\title{ When and what to learn in a changing world }
\author{ C\'esar Barilla }
\date{ \today }
\newcommand\blfootnote[1]{%
  \begingroup
  \renewcommand\thefootnote{}\footnote{#1}%
  \addtocounter{footnote}{-1}%
  \endgroup
}
\let\thetitle\@title
\let\theauthor\@author
\let\thedate\@date
\titleformat*{\paragraph}{\scshape\bfseries}
\DeclareMathAlphabet\mathbfcal{OMS}{cmsy}{b}{n}
\def\thm@space@setup{%
  \thm@preskip=\parskip \thm@postskip=0pt
}
\newtheoremstyle{mainresultstyle}%
    {\topsep} 
    {\topsep} 
    {\itshape \setlength{\parindent}{1.5em}} 
    {.5em}
    {\bfseries}
    {.}
    {0.5em}
    {}
\theoremstyle{mainresultstyle}
  \newtheorem{theorem}{Theorem}
  \newtheorem{proposition}{Proposition}
  \newtheorem{appendixproposition}{Proposition}[section]
  \newtheorem{lemma}{Lemma}
  \newtheorem{appendixlemma}{Lemma}[section]
  \newtheorem{appendixcorollary}{Corollary}[section]
\newtheoremstyle{annexresultstyle}%
    {\topsep} 
    {\topsep} 
    {\itshape \setlength{\parindent}{1.5em}} 
    {.5em}
    {\bfseries}
    {.}
    {0.5em}
    {}
\theoremstyle{definition}
  \newtheorem{definition}{Definition}
  \newtheorem{appendixdefinition}{Definition}[section]
\newtheoremstyle{assumptionthmstyle}%
    {\topsep} 
    {\topsep} 
    {} 
    {.5em}
    {\scshape}
    {.}
    {0.5em}
    {}
\theoremstyle{assumptionthmstyle}
\theoremstyle{remark}
\DeclareMathOperator*{\argmin}{arg\,min}
\DeclareMathOperator*{\argmax}{arg\,max}
\DeclareMathOperator{\interior}{int}
\DeclareMathOperator{\supp}{supp}
\newcommand{\prob}{\mathbb{P}}
\newcommand{\reals}{\mathbb{R}}
\newcommand{\Esp}{\mathbb{E}}
\newcommand{\Var}{\mathbb{V}}
\newcommand{\naturals}{\mathbb{N}}
\providecommand{\leftsquigarrow}{%
  \mathrel{\mathpalette\reflect@squig\relax}%
}
\newcommand{\reflect@squig}[2]{%
  \reflectbox{$\m@th#1\rightsquigarrow$}%
}
\definecolor{myblue}{rgb}{.1,.3,1}
\definecolor{mygreen}{rgb}{.1,.7,.2}
\definecolor{myred}{rgb}{1,.1,.2}
\definecolor{myorange}{RGB}{230,150,50}
\definecolor{pyplotorange}{rgb}{1,.498,.055}
\definecolor{mathematicablue}{rgb}{0.368417, 0.506779, 0.709798}
\definecolor{mathematicaorange}{rgb}{0.880722, 0.611041, 0.142051}
\definecolor{mathematicagreen}{rgb}{0.560181, 0.691569, 0.194885}
\definecolor{mathematica4}{rgb}{0.922526, 0.385626, 0.209179}
\definecolor{mathematica5}{rgb}{0.528488, 0.470624, 0.701351}
\newcommand*{\hyref}[2]{\hyperref[#2]{#1~\ref{#2}}}
\let\@msm@th@eqref\eqref
\renewcommand{\eqref}[1]{%
  \begingroup
  \leavevmode
  \color{red!50!black}%
  \hypersetup{linkbordercolor=[named]{red!50!black}}%
  \@msm@th@eqref{#1}%
  \endgroup
}
\newlist{sectioncontentlist}{itemize}{1}
\setlist[sectioncontentlist,1]{label =, itemindent= 2em}
\newtcolorbox{sectionsummarybox}{
  enhanced, 
  breakable,
  title={Contents: },
  coltitle=gray,
  fonttitle=\scshape \bfseries,
  attach title to upper = {\ \ },
  coltext=gray,
  colback=white,
  colbacktitle=white,
  colframe=gray!50!white,
  boxrule=.5pt,
  boxsep=.5pt,
  width=.9\textwidth,
  sharp corners=north,
  center,
  before=\ \par\smallskip,
}
\DeclareMathOperator{\BellmanOperator}{\Phi}
\DeclareMathOperator{\InfoVal}{\mathbfcal{G}} 
\DeclareMathOperator{\CavDist}{\Gamma}
\DeclareMathOperator{\StopVal}{\mathbfcal{S}} 
\DeclareMathOperator{\Cprocess}{\mathfrak{C}} 
\DeclareMathOperator{\Upath}{\mathcal{U}} 
\DeclareMathOperator{\Uprocess}{\mathfrak{U}} 
\newcommand{\id}{i} 
\DeclareMathOperator{\inforegion}{\mathcal{I}}
\DeclareMathOperator{\grossvalregion}{\mathcal{E}}
\DeclareMathOperator{\inforegionopt}{\inforegion^*}
\DeclareMathOperator{\grossvalregionopt}{\grossvalregion^*}
\DeclareMathOperator{\Cav}{Cav} 
\DeclareMathOperator{\CandidateValue}{\mathbb{V}}
\DeclareMathOperator{\BayesPlausible}{\mathcal{B}}
\DeclareMathOperator{\beliefprocesses}{\mathbfcal{B}}
\DeclareMathOperator{\beliefprocessesdiscrete}{\mathbfcal{B}_d}
\DeclareMathOperator{\beliefcycle}{\Upsilon}
\DeclareMathOperator{\piint}{int_\pi}
\DeclareMathOperator{\inboundary}{\partial^{\text{in}}_\pi}
\DeclareMathOperator{\outboundary}{\partial^{\text{out}}_\pi}
\DeclareMathOperator{\waitorconfirmprocess}{WoC}
\newcommand{\SymmetricalCyclePayoffs}{v^S}
\newcommand{\lowtarget}{q^0}
\newcommand{\hightarget}{q^1}
\newcommand{\lowthreshold}{p^0}
\newcommand{\highthreshold}{p^1}
\newcommand{\lowtime}{\tau^0}
\newcommand{\hightime}{\tau^1}
\newcommand{\targetsym}{q}
\newcommand{\thresholdsym}{p}
\newcommand{\freqsym}{\tau}
\begin{document}

\setcounter{footnote}{0}
\renewcommand{\thefootnote}{\fnsymbol{footnote}}
\vspace*{3em}
\begin{center}\huge
    {\noindent
    \bfseries \thetitle}
\end{center}
\vspace*{1em}

\makebox[\textwidth][c]{
\begin{minipage}{1.2\linewidth}
\Large\centering
\theauthor\footnotemark
\end{minipage}
}
\setcounter{footnote}{1}\footnotetext{\setstretch{1}
Department of Economics, Oxford University and Nuffield College;
cesar.barilla@economics.ox.ac.uk}

\blfootnote{
    I am indebted to Yeon-Koo Che, Navin Kartik, Laura Doval, and Elliot Lipnowski for their guidance and support. 
    I am grateful to 
        Hassan Afrouzi,
        Arslan Ali,  
        Mark Dean, 
        Alkis Georgiadis-Harris, 
        Duarte Gonçalves, 
        Jan Knoepfle, 
        Jacopo Perego, 
        Bernard Salanié, 
        José Scheinkman, 
        Ludvig Sinander, 
        Akanksha Vardani, 
        Nikhil Vellodi, 
        Yu Fu Wong,
        and 
        Mike Woodford 
    for their comments.
    I thank audiences at 
        Columbia, 
        Oxford, 
        EAYE '24, 
        IO Theory '24, 
        the Paris Game Theory Seminar, 
        CETC '25, 
        SAET '25, 
        EC '25, 
        Stony Brook IGTC '25,
        ESWC '25,
        EEA congress '25,
        the 2025 Transatlantic Theory Workshop,
        the Paris School of Economics,
        Essex,
        Warwick,
        and Helsinki GSE
        for feedback.
}

\setcounter{footnote}{0}
\renewcommand{\thefootnote}{\arabic{footnote}}
\vspace*{-4em}

\begin{center} 
\small
This version: \today
\end{center}

\vspace{20pt}

\begin{center}
\textbf{\scshape Abstract}
\vspace{.5em}
\end{center}
\noindent\makebox[\textwidth][c]{
    \begin{minipage}{.8\textwidth}
        \noindent
        A decision-maker periodically acquires information about a changing state, controlling both the timing and content of updates. 
        I characterize optimal policies using a decomposition of the dynamic problem into optimal stopping and static information acquisition.
        Eventually, information acquisition either stops or follows a simple cycle in which updates occur at regular intervals to restore prescribed levels of relative certainty. 
        This enables precise analysis of long run dynamics across environments.
        As fixed costs of information vanish, 
        belief changes become lumpy: it is optimal to either wait or acquire information so as to exactly confirm the current belief until rare news prompts a sudden change.
        The long run solution admits a closed-form characterization in terms of the "virtual flow payoff".
        I highlight an illustrative application to portfolio diversification.
        ~
        \\\\
        \textbf{Keywords:} dynamic information acquisition, changing state, costly information acquisition, rational inattention \\
        \textbf{JEL Classifications:} D83, D80, D81, C61
    \end{minipage}
}

\newpage

\section{Introduction}\label{sec:Intro}

    The world is constantly changing. 
    Yet, most decision makers are not continuously upending their worldview. 
    When making frequent decisions, it seems reasonable in the short run to act based on previously held beliefs or as if the relevant conditions are approximately fixed.
    However, past information eventually becomes outdated.
    Periodically seeking to improve knowledge of current circumstances may be profitable, even if it is costly.
    This raises two natural questions: \emph{when} should one decide to acquire new information and \emph{what} should they learn?

    Consider for instance the problem of an investor allocating resources between assets with uncertain returns. 
    Market trends and fundamentals that govern asset performance may change over time.
    How often and how thoroughly should the investor reconsider their current views? 
    They could opt for infrequent but detailed research or frequent, less precise monitoring.
    The optimal balance between timing and quality of information acquisition depends on the stakes, information acquisition costs, and underlying volatility.  
    In volatile markets, there is more to learn, but information becomes outdated faster. 
    Higher certainty may be required for riskier investments, leading to quicker depreciation of information and higher costs.
    The investor's example is representative of a large class of problems: a government splitting budget between agencies with evolving needs, a producer choosing between available technologies, a retailer allocating inventory between locations with fluctuating demands, among others.

    In this paper, I study a dynamic model of optimal information acquisition about a changing world, which provides a rich yet tractable way to capture the relation between the timing and content of infrequent information acquisition.
    A decision maker (DM) takes an action repeatedly at every instant; flow payoffs depend on their action choice and on an unobserved binary state of the world which changes over time. 
    The DM sequentially chooses times at which they wish to acquire some information, which entails a fixed cost. 
    At each such time they also flexibly decide what to learn, which entails a variable cost.
    The model is introduced in \hyref{Section}{sec:Model}.

    The first contribution of the paper is to rigorously solve the problem (\hyref{Section}{sec:Characterization-Of-Solutions}).
    Well-known difficulties arise from the recursive nature of the value of information: the incentives to acquire information today simultaneously depend on all future expected information and how the state changes; as a result, potentially complex learning dynamics induce nonlinear continuation values.
    However, the combination of the continuous time structure with infrequent information acquisition makes this model tractable. 
    I derive an appropriate Bellman equation that decomposes the DM's problem into a static optimal information acquisition problem and an optimal stopping problem. 
    I show that the value function uniquely solves this equation even though the Bellman operator is not a contraction (\hyref{Theorem}{thm:recursive-characterization}).
    Optimal policies must consistently combine properties derived from optimal stopping and static information acquisition.
    This enables the precise characterization of optimal information acquisition (\hyref{Theorem}{thm:optimal-policies}), which is described by two nested collections of belief intervals.

    Second, I study the induced dynamics of information acquisition (\hyref{Section}{sec:Dynamics}).
    The main result is that optimal information acquisition must eventually either stop or settle into a simple repeating cycle (\hyref{Theorem}{thm:long-run-dynamics}).
    In the cyclical case, information is acquired at regular intervals of time, when uncertainty reaches specific thresholds. 
    Updates lead to two possible outcomes, captured by two "target posterior beliefs" reflecting endogenously chosen levels of relative confidence that one state is more likely than average. 
    Each possible outcome leads to a waiting period of fixed length until the next update. 
    In practice, this rules out more complex strategies with intermediary or irregular updates.
    This further simplifies the long run dynamics of optimal information acquisition: the problem reduces to choosing the \emph{content} and \emph{frequency} of updates (\hyref{Proposition}{prop:directly-stationary-problem}); resulting expected payoffs have closed form expressions.

    The convergence result enables precise characterizations of long run behavior, which neatly captures dynamic incentives from repeated information acquisition and enriches static insights.
    Optimal information acquisition may exhibit path dependence in the form of "learning traps" (\hyref{Proposition}{prop:when-does-learning-stop}): if initial beliefs belong to a "trap region" of sufficiently uninformed beliefs, then no learning ever occurs even though information would be perpetually acquired with better initial information.
    Comparisons of information acquired under different policies or environments is a theoretically challenging question for general dynamic processes, but cyclical dynamics outline three distinct types of "long-run informativeness" (\hyref{Definition}{def:comparison-of-belief-cycles}).
    For instance, I show that the world becoming more volatile has non-monotonic effects on the frequency of information acquisition (\hyref{Proposition}{prop:CS-lambda-tau}). 
       
    Third, I study the limit of the model when fixed costs vanish and derive an explicit characterization for optimal information acquisition in the limit (\hyref{Section}{sec:Vanishing-Fixed-Costs}). 
    Without fixed costs, it is optimal for the DM to either wait or acquire infinitesimal amounts of information to exactly confirm their current belief until rare news prompts a jump to a fixed alternative belief; furthermore optimal policies must converge to a policy of this kind as fixed costs vanish (\hyref{Theorem}{thm:optimality-and-convergence-to-wait-or-confirm}).
    In the long run, only two possible beliefs are ever held as the DM exactly prevents depreciation of information. 
    The long run optimal belief process admits a closed form characterization which derives from the concavification of an appropriately defined "virtual flow payoff" function (\hyref{Theorem}{thm:explicit-stationary-policy-no-fixed-cost}).
    The concentration on two beliefs delivers a new potential resolution of a classical tension between the fact that empirical decision makers often change actions lumpily while most models of dynamic learning predict continuous adjustment if beliefs are continuously changing and different beliefs imply different actions.
    
    Lastly, I provide an illustrative application of the framework to a portfolio allocation problem (\hyref{Section}{sec:Examples-Applications}). 
    Optimal behavior exhibits continuous rebalancing of the portfolio towards more diversification, punctuated by periodic shifts to a more extreme allocation.
    There may be information traps where initially uninformed investors are never able to acquire information and only ever buy a safe asset while informed ones retain better information and higher returns from risky assets.
    If fixed costs of information acquisition are negligible, information traps and continuous rebalancing disappear: investors always hold risky portfolios, adjusted only at discrete points in time.
    The frequency of adjustments is proportional to the underlying volatility of the environment. 
    In some cases with asymmetries between assets, optimal information acquisition can generate distortions between responsiveness to "good" and "bad" news, which connects to stylized facts from the literature on financial attention.

    \subsection*{Related Literature}\label{sec:Related-Lit}

        Understanding imperfect adjustments to changing conditions is a long standing theoretical and empirical agenda, with informational approaches gaining more attention in recent literature.
        \citet{mankiw2002sticky,reis2006inattentiveproducers,reis2006inattentiveconsumers} model "inattentiveness" via a fixed observation cost.
        \citet{sims2003implications} instead considers limited capacity in flexible acquisition of information, using entropy reduction.
        Subsequent literature on dynamic rational inattention (DRI) largely focuses on environment with quadratic payoffs and gaussian states and information \citep[e.g.][]{mackowiak2009optimal,mackowiak2018dynamic,afrouzi2021dynamic,davies2024learning}, or on random choice implications of Shannon costs \citep{steiner2017rational}.
        \citet{khaw2017discrete} experimentally test discrete adjustment models and provide evidence for both lumpy and flexible attention.
       
        Adjustment to a changing world also arises in experimentation problems -- see \citet{whittle1988restless} for a seminal reference and \citet{che2024predictive} for a recent contribution. 
        Unlike with costly information acquisition, information in those problems is entangled with action choices; belief dynamics may look similar, but drivers and predictions differ because the agent can learn about alternatives without changing the current action.
        In social learning with a changing state \citep[e.g.][]{moscarini1998social,dasaratha2023learning}, learning occurs once from past actions and dynamics are equilibrium-driven rather than from forward-looking optimization; \citet{levy2022stationary} study a related steady-state environment with costly information acquisition.
        
        The leading application is information acquisition in finance. 
        \citet{van2010information} study strategic information acquisition and portfolio diversification in a static setting.
        “Ostrich effects”—more monitoring after good than bad news—are documented empirically \citep[e.g.][]{karlsson2009ostrich,galai2006ostrich,sicherman2016financial}.
        Periodic inspections also appear in monitoring of strategic agents \cite[e.g.][]{varas2020random,wong2023dynamic,Ball_2023}, where optimality is driven by incentive compatibility rather than informational motives.

        Recent work on dynamic information acquisition \citep{che2019optimal,zhong2022optimal,hebert2023rational,georgiadisharris2023} features flexible continuous acquisition with persistent states and a single decision; incentives there stem from convex (or budget) costs over information flow, whereas I assume linear flow costs so dynamics are driven by the changing state.
        This builds on static information acquisition and design: posterior separable costs \citep{caplin2022rationally,denti2022posterior} generalize Shannon costs \citep{sims2003implications}; concavification dates to \citet{aumann1995repeated} and underlies Bayesian persuasion with and without costs \citep{kamenica2011bayesian,gentzkow2014costly}; see \citet{ely2017beeps} for dynamic persuasion.

\section{Model}\label{sec:Model}


    \paragraph{Environment and decision problem} 
    Time is continuous and indexed by $t \geq 0$. 
    There is a single decision maker (DM) who takes an action $a_t \in A$ at every instant in time; this generates flow payoffs which depend on the current action choice and the current value of a binary state of the world $\theta_t \in \Theta := \{0,1\}$.
    Denote by $\tilde{u}: A \times \Theta \rightarrow \reals$ the utility function mapping actions and states to payoffs. Both the state and flow payoffs are unobserved.

    The decision problem induces \emph{indirect utility function} $u$, which maps beliefs about the current state to expected payoff from the optimal action choice under those beliefs. 
    Formally, denoting $\Delta(\Theta)$ the space of probability distributions over $\Theta$, $u:\Delta(\Theta) \rightarrow \reals$  is defined as:
    \begin{align*}
    u(p) := \max_{a \in A} \Esp_{\theta \sim p} \bigl[ \tilde{u}(a,\theta) \bigr].
    \end{align*}
    Assume an optimal action exists and $u$ is continuous (e.g. $A$ compact and $\tilde{u}$ continuous).
    The model and results are unchanged if one considers instead an arbitrary continuous indirect utility function $u(p)$, which may capture interactions with other strategic agents in reduced form -- see for instance \hyref{Section}{sec:Discussion-Extensions} for a dynamic persuasion interpretation.
    
    \paragraph{State transitions} 
    The state $\theta_t$ changes stochastically over time and follows Markovian dynamics: it jumps from $0$ to $1$ at rate $\lambda_0 > 0$ and from $1$ to $0$ at rate $\lambda_1 > 0$.
    Given that the state space is binary, the space of beliefs over $\Theta$ can be identified with the unit interval $[0,1]$, labeling beliefs in terms of the probability of the current state being $1$.
    Markovian dynamics can be conveniently reparameterized in terms of the \textbf{total transition rate} $\lambda > 0$ and \textbf{invariant distribution} $\pi \in (0,1)$, which are defined as:
    \begin{align*}
        \lambda & := \lambda_0 + \lambda_1, \vspace{.1em} \text{ and \vspace{.2em} } \pi := \frac{\lambda_0}{\lambda_0+\lambda_1}.
    \end{align*}
    Intuitively, $\pi$ captures the long run average proportion of time that the state spends at $1$; $\lambda$ captures the total rate at which the state changes, which I will interpret as overall volatility.
    
    \paragraph{Information acquisition and beliefs}  
    The DM chooses \emph{when} to acquire information, and \emph{what} information to acquire whenever they do. An information acquisition policy is described by sequences of (random) information acquisition times and information structures $\{\tau_i,F_i\}_{i \in \naturals}$ contingent on past information, where:
    \begin{itemize}
        \item $\{\tau_\id\}_{\id \in \naturals}$ are \textbf{information acquisition times}, i.e. $\tau_\id \in \overline{\reals}_+$ is the $\id$-th time of information acquisition. The $\tau_i$ are a.s. increasing, strictly so when finite.
        \item $\{F_\id\}_{\id \in \naturals}$ are \textbf{information structures}, i.e. the content of signals being acquired at each $\tau_\id$. As is now standard in the information acquisition literature, each information structure is represented as a \emph{probability distribution over posterior beliefs}: $F_i \in \Delta \Delta(\Theta)$ for all $i$.
    \end{itemize}

    The information acquisition policy $\{\tau_\id,F_\id\}_{\id \in \naturals}$ induces the belief process $\{P_t\}_{t \geq 0}$ as follows. 
    In between moments of information acquisition, beliefs about the current state drift towards the long run average $\pi$ at an exponential rate controlled by $\lambda$: even in the absence of new information a Bayesian agent is aware that the hidden state might have changed.
    Fix some initial belief $p$ and normalize the current time to $0$; until the next update beliefs evolve according to:
    \begin{align*}
        dp_t = \lambda(\pi-p_t) dt \text{, \hspace{.3em} or equivalently: } p_t = e^{- \lambda t} p + \bigl( 1-e^{-\lambda t} \bigr) \pi.
    \end{align*}
    Throughout the paper, I use lowercase $p_t$ to denote the deterministic path of beliefs starting from $p_0 = p \in \Delta(\Theta)$ and reserve capital $P_t$ for the overall belief process.
    In other words, if $P_{\tau_i} = p$ then $P_{\tau_i + t} = p_t$ for $t \in [0,\tau_{i+1}-\tau_i)$;
    at the next time of information acquisition, a new belief is drawn according to the experiment chosen: $P_{\tau_{i+1}} \sim F_{i+1}$.      

    The DM's information acquisition policy must be measurable with respect to the belief process and experiments must be Bayes plausible with respect to the current belief:
    \begin{align*}
    F_i \in \BayesPlausible(P_{{\tau_i}^-}) \text{ for all $i$, where } \BayesPlausible(p) := \left\{ F \in \Delta \Delta(\Theta) \middle| \int q dF(q) = p \right\}.
    \end{align*}
    Given prior $p$, adopt the convention that $P_{0^-}=p$ to accommodate for time zero information acquisition. 
    \hyref{Appendix}{sec:appendix:preliminaries} provides a rigorous construction of the belief process and admissible controls.


    \paragraph{Information costs} 
    Whenever information is acquired, the DM incurs a fixed cost and a variable cost. Given information acquisition policy $\{\tau_i,F_i\}$, at each $\tau_\id$ the DM pays the cost $C(F_i) + \kappa$, where $\kappa > 0$ is the fixed cost and the variable component $C:\Delta \Delta(\Theta) \rightarrow \overline{\reals}_+$ maps information structures into (non-negative, potentially infinite) costs.
      
    The variable component of information costs is \textbf{uniformly posterior separable} (UPS).
    Specifically, assume there exists a convex function $c: \Delta(\Theta) \rightarrow \overline{\reals}_+$, finite and continuously differentiable over the interior of $\Delta(\Theta)$, such that for any $F \in \Delta \Delta (\Theta)$:
    \begin{align*}
        C(F) = \int_{\Delta(\Theta)} \bigl( c(q) - c(p) \bigr) dF(q) \text{ where: } p = \int q dF(q).
    \end{align*}
    Common choices for $c$ include entropy (Shannon costs), negative variance, and the expected log-likelihood ratio.
    One possible interpretation of UPS costs is to see $c$ as a "measure of certainty" at a given belief; hence $C(F)$ corresponds to the \emph{expected increase in certainty} (reduction of uncertainty) induced by the chosen experiment relative to the current  belief. 
    See \citet{frankel2019quantifying} for a formalization of this interpretation, or \citet{caplin2022rationally,denti2022posterior,pomatto2023cost} for general references. 
    
    A natural foundation for such costs within the model's structure is to assume that primitive information costs come from dynamic evidence gathering with a time horizon which is \emph{negligible} relative to the time scale at which the state changes. 
    This can be formalized by taking the limit of a discrete-time model with two nested time scales, so that the state does not change \emph{while} the decision maker is acquiring information and the UPS cost corresponds to the induced reduced-form cost \citep[see][for how dynamic processes can induce UPS costs]{morris2019wald,denti2022experimental,bloedel2020cost}.
    Relatedly, UPS costs induce no intrinsic incentive to smooth information acquisition over time, because of linearity in the posterior distribution \citep[][establish that this property essentially identifies the class of UPS cost functions]{bloedel2020cost}. 
    Hence, this assumption isolates the changing state as the sole source of dynamics in the model: in the persistent state limit, information would be acquired at most once.

    \paragraph{Optimal information acquisition problem} 
    The DM chooses an information acquisition policy so as to maximize total discounted expected utility under exponential discounting at rate $r>0$.
    Hence, they solve the following \textbf{optimal information acquisition problem}:
    \begin{align}
    \label{eq:dynamic-problem} \tag{OIA}
        v(p) := \sup_{\substack{\{\tau_i, F_i\}_{i \geq 0}}} \Esp \Biggl[ \int_0^\infty e^{-rt} u(P_t) dt - \sum_{\id \geq 0} e^{-r {\tau_\id} } \bigl( C (F_\id) + \kappa \bigr) \Biggm| P_{0^-}=p \Biggr],
    \end{align}
    where $v(p)$ is the expected payoff from optimal information acquisition given initial belief $p$.

\section{Characterization of optimal policies}\label{sec:Characterization-Of-Solutions}

    The characterization of solutions in the optimal information acquisition problem relies on a familiar dynamic programming approach, which suggests that the value function solves the recursive (Bellman) equation:
    \begin{align*}
    \label{eq:recursive-equation} \tag{$\star$}
        v(p) = \sup_{\tau \geq 0} \hspace{.5em} \Biggl[ \int_0^\tau e^{-rt} u(p_t) dt + e^{-r \tau} \Biggl( \sup_{F \in \BayesPlausible(p_\tau)} \int_{\Delta(\Theta)} v dF - C(F) - \kappa \Biggr) \Biggr].
    \end{align*}
    \hyref{Section}{sec:FP} establishes this rigorously; \hyref{Section}{sec:optimal-policies} uses the implied decomposition between timing and content to derive the characterization of solutions.

    \subsection{Recursive equation and decomposition}\label{sec:FP}

        Standard dynamic programming logic delivers a formal derivation as well as intuition for \eqref{eq:recursive-equation}.
        Until the \emph{next} information acquisition time $\tau \geq 0$, beliefs drift deterministically and the DM accrues flow payoffs $u(p_t)$.
        At $\tau$, the DM incurs a cost $C(F) + \kappa$ and beliefs jump stochastically according to chosen experiment $F \in \BayesPlausible(p_\tau)$. 
        Given optimal behavior, the continuation value is the expected value function $\int v dF$, leading to the recursive equation \eqref{eq:recursive-equation}.

        From now on, denote by $\BellmanOperator$ the Bellman operator implicitly defined by the right-hand side of \eqref{eq:recursive-equation}, which maps any bounded measurable continuation payoff function to the induced value from one-shot information acquisition.
        By definition $v$ solves the recursive equation \eqref{eq:recursive-equation} if and only if it is a fixed point of $\BellmanOperator$, i.e. $\BellmanOperator v=v$. 
        $\BellmanOperator$ is not a contraction; to establish uniqueness, I decompose the Bellman equation into two operations and leverage the lattice structure of a suitably reduced domain of candidate value functions.

        \paragraph{Ex ante bounds on the value function} 
        Define the functions $\overline v$ and $\underline v$ as, respectively, the value from perfect costless observation of the state and from never getting any information about the true state, starting from an initial belief $p \in \Delta(\Theta)$ i.e.
        \begin{align*}
        \overline{v}(p) : = \int_0^\infty  e^{-rt} \Esp_{\theta \sim p_t} \biggl[ \max_a u(a,\theta) \biggr] dt,
        \hspace*{2em}
        \underline{v}(p) : = \int_{0}^\infty e^{-rt} u (p_t) dt.
        \end{align*}
        Let $\CandidateValue$ be the set of "candidate value functions": real-valued bounded measurable functions on $\Delta(\Theta)$ which are pointwise between $\underline{v}$ and $\overline{v}$; naturally $v \in \CandidateValue$.
        Foreshadowing results slightly, it is also convenient to define "net" bounds $\underline{w}(p):=\underline{v}(p)-c(p)$ and $\overline{w}(p):=\overline{v}(p)-c(p)$.
     
        \paragraph{Decomposition and uniqueness}
        The recursive equation can be decomposed into two parts: (i) the choice of an optimal information structure conditional on stopping, which reduces to an \emph{"as-if-static" information acquisition problem} where the continuation values $v$ itself plays the role of the indirect utility function; (ii) the choice of the optimal timing of information acquisition, which reduces to an \emph{"as-if-one-off" deterministic optimal stopping problem} where the stopping payoff is given by value from instaneous information acquisition net of the fixed cost.
        The recursive equation requires that ("static") value from information incorporate future value from optimal timing and ("one-off") stopping payoffs derive from future information.

        To prove that $v$ is the unique fixed point of $\BellmanOperator$, I rely on properties of the functional operators corresponding to the value in the static information problem and the optimal stopping problem, given arbitrary continuation values. 
        These operators are monotone and order-convex operators over the lattice of real-valued bounded measurable functions over $\Delta(\Theta)$, which is a Riesz space, of which $\CandidateValue$ is an order-interval.
        This enables the use of a Tarski-style fixed point theorem from \citet{marinacci2019unique}, where convexity and an upper-perimeter conditions deliver uniqueness.
        Details of the proof are in \hyref{Appendix}{sec:appendix:FP-existence-uniqueness}.

        \begin{theorem}
        \label{thm:recursive-characterization}
           The value function $v$ in the optimal information acquisition problem \eqref{eq:dynamic-problem} is the unique solution to the recursive equation \eqref{eq:recursive-equation} in $\CandidateValue$, and is continuous.
        \end{theorem}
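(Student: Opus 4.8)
The plan is to prove the statement in two movements: first that the value function $v$ is a fixed point of $\BellmanOperator$, and then that it is the only fixed point inside the order interval $\CandidateValue = [\underline{v}, \overline{v}]$. Since $v \in \CandidateValue$ has already been noted (never acquiring yields $\underline{v}$, free perfect observation yields $\overline{v}$), these two facts together give the theorem.

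For the first movement I would run a dynamic-programming verification, establishing $v = \BellmanOperator v$ through two inequalities. For $v \ge \BellmanOperator v$, fix any candidate next-acquisition time $\tau$ and experiment $F \in \BayesPlausible(p_\tau)$; for each posterior $q$ in the support of $F$, splice in a policy that is $\varepsilon$-optimal from $q$. Because beliefs are Markov, the continuation payoff from $q$ equals $v(q)$ up to $\varepsilon$, so the constructed admissible policy attains the bracket in \eqref{eq:recursive-equation} up to $\varepsilon$, and taking suprema over $(\tau,F)$ gives the inequality. For $v \le \BellmanOperator v$, any admissible policy has a first acquisition time $\tau_0$ and experiment $F_0$; conditioning on the draw $P_{\tau_0}=q$ and using that its continuation is again an admissible policy from $q$ (hence worth at most $v(q)$), the payoff is bounded by the bracket evaluated at $(\tau_0,F_0)$, thus by $\BellmanOperator v(p)$. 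The delicate points here are measurable selection of the $\varepsilon$-optimal continuations, so that the spliced policy stays admissible and adapted to the belief filtration, and the interchange of expectation and supremum, both of which I would handle with the control construction from the Appendix.

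For uniqueness I would exploit the decomposition $\BellmanOperator = \StopVal \circ \InfoVal$, where $\InfoVal w(q) = \sup_{F \in \BayesPlausible(q)} \bigl( \int w \, dF - C(F) \bigr)$ is the static value-of-information operator and $\StopVal g(p) = \sup_{\tau \ge 0} \bigl[ \int_0^\tau e^{-rt} u(p_t)\,dt + e^{-r\tau}\bigl( g(p_\tau) - \kappa \bigr) \bigr]$ is the deterministic optimal-stopping operator. Substituting the net value $w = v - c$ turns $\InfoVal$ into a concavification, $\InfoVal v = \Cav[w] + c$, making its monotonicity transparent. Both $\InfoVal$ and $\StopVal$ are suprema of maps that are affine in their function argument, so both are monotone and order-convex; since $\StopVal$ is monotone, the composition $\BellmanOperator$ is a monotone, order-convex operator on the Riesz space of bounded measurable functions on $\Delta(\Theta)$. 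I would then check that $\BellmanOperator$ maps $\CandidateValue$ into itself: the lower bound $\BellmanOperator w \ge \underline{v}$ follows by sending $\tau \to \infty$ in $\StopVal$, and the upper bound $\BellmanOperator w \le \overline{v}$ from verifying that $\overline{v}$ is a supersolution, after which monotonicity propagates both bounds to all of $\CandidateValue$. With a monotone, order-convex self-map of an order interval in hand, I would invoke the unique-Tarski-fixed-point theorem of \citet{marinacci2019unique}: order-convexity limits how the operator can meet the diagonal, and the upper-perimeter condition — a strict order gap the operator opens at the top of $\CandidateValue$ — collapses the fixed-point set to a single point. This strict gap is exactly where the primitives enter, since every acquisition pays $\kappa > 0$ and is discounted at rate $r > 0$, making $\overline{v}$ a \emph{strict} supersolution.

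The hard part will be the uniqueness step, and within it the verification of the upper-perimeter condition in the function-space setting. A contraction argument fails here precisely because at any belief where immediate acquisition is optimal one has $\BellmanOperator(w + k) = \BellmanOperator w + k$ for constants $k$, so no uniform modulus below one exists; this is why order-convexity, rather than contractivity, must carry the uniqueness. Concretely, the Marinacci--Montrucchio route demands that I (i) confirm order-convexity rigorously in a Dedekind-complete ambient lattice, so that the relevant suprema remain measurable and the order structure of $\CandidateValue$ is the one the theorem requires, and (ii) translate their abstract perimeter hypothesis into a concrete quantitative statement about $\BellmanOperator$ and discharge it using $\kappa$, $r$, and the nonnegativity of $C$. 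By comparison, the measurable-selection issues in the first movement are routine, though still requiring care to keep the spliced continuation policies admissible.
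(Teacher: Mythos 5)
Your proposal is correct and follows essentially the same route as the paper: existence of the fixed point by a standard dynamic-programming verification (which the paper treats as straightforward), and uniqueness via the decomposition $\BellmanOperator = \StopVal \circ \InfoVal$, monotonicity and order-convexity of both operators on the order interval $[\underline{v},\overline{v}]$, self-map stability, and the Marinacci--Montrucchio unique-Tarski fixed-point theorem, with the upper-perimeter condition discharged exactly as you indicate by the strict gap $\BellmanOperator w(p) \le \overline{v}(p) - \kappa$ forced by the fixed cost. The only cosmetic difference is that the paper establishes $\InfoVal\overline{v}=\overline{v}$ directly from concavity of $\overline{v}$ and Blackwell-monotonicity of $C$ rather than via the net-value/concavification substitution you mention, but this does not change the argument.
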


        The value function is convex if $u$ is convex.
        Iterations of the fixed point operator $\BellmanOperator$ initalized at the lower bound $\underline{v}$ provide some economic intuition. 
        Indeed, they correspond to the value in a \emph{constrained} problem, where the DM is only allowed a finite number $n$ of times of information acquisition, and converge to $v$ as $n$ goes to infinity.

    \subsection{Optimal policies}\label{sec:optimal-policies}

        \paragraph{Intuitive derivation}
        Define the payoff from immediately and optimally acquiring information under continuation value $v$, gross of fixed cost:
        \begin{align*}
         \InfoVal v(p) := \sup_{F \in \BayesPlausible(p)} \int v dF - C(F).
        \end{align*}
        At any $p$ which triggers updating of information the value must equal the payoff from stopping, so $v(p) = \InfoVal v(p) - \kappa$.
        By definition $\InfoVal v \geq v$ from static optimality; simultaneously dynamic programming entails that $v \geq \InfoVal v - \kappa$ (immediate information acquisition is always feasible).
        Hence, in the dynamic context, the difference $\InfoVal v(p)-v(p) \in [0,\kappa]$ captures the \emph{residual} gross value of information.
        The Bellman operator $\BellmanOperator v$ gives the value in the optimal stopping problem with terminal payoff $\InfoVal v$, so $\BellmanOperator v- v$ is the interim value of one-shot information acquisition. 
        The value function $v$ is the unique candidate function with zero such interim value: $v$ is unimprovable from one shot information acquisition ($v \geq \BellmanOperator v$) and is meanwhile an attainable continuation value ($\BellmanOperator v \geq v$).
        
        Optimal experiments can also be expressed in terms of the residual gross value $\InfoVal v(p)-v(p)$.   
        This follows from well-known results on \emph{static} information acquisition with UPS costs \citep[see e.g.][]{caplin2022rationally,gentzkow2014costly,dworczak2024persuasion}.
        For a given belief $p$, the optimal experiment is derived by identifying the chord between two points of the graph of $v - c$ that attains the highest value at $p$.  
        This follows from the problem's linear structure (in $F$, when fixing $p$), and implies the following "concavification" characterization:
            \[
            \InfoVal v(p) - c(p) = \sup_{F \in \BayesPlausible(p)} \int [v - c]dF = \Cav[v-c](p),
            \]
        where $\Cav$ denotes the upper concave envelope operator.
        Optimal binary experiments can be identified with their support and for any $p$ the optimal experiment is supported on the closest points such that $\Cav[v-c]=v-c$.
        Hence if the support $\{q_0,q_1\}$ is optimal at $p$, it remains optimal for any $q \in (q_0, q_1)$, thus partitioning the belief space into non-overlapping "experiment intervals", the endpoints of which outline optimal experiments.
        
        As a result, the residual value $\InfoVal v - v$ sufficiently characterizes optimal information acquisition and is itself pinned down by \emph{net} value function $w:=v-c$.
        To further highlight the logic, observe that the Bellman equation can be rewritten on $w$, with explicit continuation value:
        \begin{equation}
            \label{eq:net-recursive-equation}
            \tag{$\hat \star$}
            w(p) = \sup_{\tau \geq 0} \int_0^\tau e^{-rt} f(p_t) + e^{-r\tau} \Bigl( \Cav[w](p_\tau) - \kappa \Bigr),
        \end{equation}
        where $f(p):= u(p) - rc(p) + \lambda (\pi-p) c'(p)$ denotes the "virtual flow payoff".

        \paragraph{Formal result}
        Putting everything together, the general result gives a geometric characterization of solutions, which implicitly defines a simple class of policies parameterized by two nested collections of intervals.
        To make statements more concise, let $\CavDist w:= \Cav w-w$ the residual value of information operator (given $v$ the unique fixed point of $\Phi$ and $w:=v-c$) and define:
        \begin{align*}
            \grossvalregionopt := \biggl\{ p \in \Delta(\Theta) \biggm| \CavDist w(p) > 0\biggr\},
            \quad
            \inforegionopt := \biggl\{ p \in \Delta(\Theta) \biggm| \CavDist w(p) = \kappa \biggr\}.
        \end{align*}
        By definition $\grossvalregionopt$ is a countable collection of disjoint open intervals and $\inforegionopt \subset \grossvalregionopt$.

        \begin{theorem}
        \label{thm:optimal-policies}
            The following information acquisition policy is optimal:
            \begin{enumerate}
                \item acquire information whenever the residual value of information equals the fixed cost, which is described by the waiting time:
                        \begin{align*}
                            \tau^*(p) := \inf \bigl\{ t \geq 0 \bigm| p \in \inforegionopt \bigr\},
                        \end{align*}
                \item if information is acquired at $p$ choose the binary experiment $F^*_p$ supported over the two closest points to $p$ not in $\grossvalregionopt$ (i.e. at which there is no residual value of information: $\CavDist w = 0$).
            \end{enumerate}
        \end{theorem}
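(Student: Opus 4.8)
The plan is to prove optimality by a verification argument anchored on \hyref{Theorem}{thm:recursive-characterization}: since $v$ is the unique fixed point $\BellmanOperator v = v$, it suffices to exhibit a policy whose timing and experiment choices simultaneously attain both suprema defining $\BellmanOperator v$ at every belief, and then to pass from one-shot to infinite-horizon optimality via a transversality argument. Writing $w := v - c$ and $\CavDist w := \Cav w - w$, I would first record the elementary bounds $0 \le \CavDist w \le \kappa$ everywhere: immediate acquisition is feasible, so $w \ge \Cav w - \kappa$, and acquisition never hurts, so $\Cav w \ge w$. This shows $\inforegionopt \subseteq \grossvalregionopt$ and identifies the stopping region as exactly $\inforegionopt = \{\CavDist w = \kappa\}$, i.e. the beliefs at which $v = \InfoVal v - \kappa$, while the continuation region is $\{\CavDist w < \kappa\}$.

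For the experiment (the static sub-problem), I would invoke the standard concavification characterization of UPS information acquisition with $v$ in the role of the indirect utility, giving $\InfoVal v - c = \Cav w$. On each maximal open interval $(q_0,q_1)$ of $\grossvalregionopt$ the envelope $\Cav w$ is the affine chord joining $(q_0,w(q_0))$ to $(q_1,w(q_1))$, where $\CavDist w(q_0) = \CavDist w(q_1) = 0$. The optimal experiment at any interior $p$ is therefore the binary split onto $\{q_0,q_1\}$ pinned down by Bayes plausibility, and $q_0,q_1$ are precisely the two closest points to $p$ outside $\grossvalregionopt$. This attains $\InfoVal v(p)$ and defines $F^*_p$.

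For the timing (the stopping sub-problem), the inner belief path is deterministic, so $\BellmanOperator v$ is the value of a one-dimensional deterministic optimal stopping problem with running payoff $u(p_t)$ and terminal payoff $\InfoVal v - \kappa$. I would show $v$ solves the associated variational inequality: $u(p) + \lambda(\pi-p)v'(p) - r v(p) \le 0$ pointwise, with equality on the continuation region, together with $v = \InfoVal v - \kappa$ on $\inforegionopt$. Integrating $\tfrac{d}{dt}\big(e^{-rt}v(p_t)\big)$ along the drift then yields, for every stopping time $\tau$, the super-solution bound $v(p) \ge \int_0^\tau e^{-rt}u(p_t)\,dt + e^{-r\tau}\big(\InfoVal v(p_\tau) - \kappa\big)$, with equality at $\tau = \tau^*(p) = \inf\{t \ge 0 : p_t \in \inforegionopt\}$ because the equality branch of the HJB holds before hitting and value matching holds at the hitting point. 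Hence $\tau^*$ attains the outer supremum.

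Combining the two sub-problems, the stationary policy that waits $\tau^*$ and then draws $F^*$ realizes $v = \BellmanOperator v$ at each acquisition epoch; telescoping this one-step equality over the sequence $\tau_0 < \tau_1 < \cdots$ and taking expectations gives $v(p) = V^{\text{policy}}(p) + \Esp[e^{-r\tau_N}v(P_{\tau_N})]$ after $N$ epochs, so it remains to verify the transversality condition $\Esp[e^{-r\tau_N}v(P_{\tau_N})] \to 0$. Since every jump lands where $\CavDist w = 0$ while $\inforegionopt$ requires $\CavDist w = \kappa$, uniform continuity of $\CavDist w$ on $[0,1]$ together with the bounded drift speed $\lambda|\pi-p| \le \lambda$ forces a uniform gap $\delta > 0$ between consecutive acquisitions, whence $\tau_N \ge N\delta \to \infty$ and the boundedness of $v$ closes the argument, giving $V^{\text{policy}} = v$. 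The main obstacle I anticipate is the regularity needed to justify the integration-along-the-drift step when $u$ is merely continuous, so that $v$ need not be $C^1$: one must show $v$ is absolutely continuous along the smooth drift paths and that the variational inequality holds in an a.e./viscosity sense adequate for the fundamental-theorem-of-calculus computation, and separately ensure $\tau^*$ is a well-defined measurable stopping time even when $\inforegionopt$ is a complicated countable union of intervals. A secondary subtlety is the consistency between the two sub-problems — the experiment targets must lie outside $\inforegionopt$ so that the recursion closes and the transversality bound applies.
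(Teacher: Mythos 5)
Your proposal is correct and follows essentially the same route as the paper: the static concavification characterization of optimal experiments, the deterministic optimal-stopping variational inequality for the timing, and a verification argument that glues the two subproblems together via the fixed-point property $\BellmanOperator v = v$ (the paper's \hyref{Propositions}{prop:concavification}, \ref{prop:optimal-stopping-facts} and \ref{prop:verification-theorem-general-form}). The only substantive difference is that you make the telescoping/transversality step explicit --- in particular the uniform lower bound on the time between acquisitions, forced by $\CavDist w = \InfoVal v - v$ moving continuously from $0$ at the jump targets to $\kappa$ at the thresholds along a drift of bounded speed --- whereas the paper packages this into a verification theorem it describes as ``fairly direct.''
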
    

        The statement of \hyref{Theorem}{thm:optimal-policies} nests two points. 
        The first is the characterization of optimal policy purely in terms of the \emph{residual} value of information $\InfoVal v - v$.
        The second is the explicit form for this residual value in terms of the net value function $w=v-c$.
        Together, they reduce the description of optimal information acquisition to the choice of \emph{two nested regions}: the "collection of experiment intervals" $\grossvalregionopt$ and the "information acquisition region" $\inforegionopt$.

        \begin{figure}[h!]
            \centering
            \pgfplotsset{
            table/search path={Tikz/Tikz_Portfolio_Example2},
            }
            \begin{tikzpicture}
                \begin{axis}[
                  xmin=0, xmax=1,
                  ymin=1.79, ymax=1.86,
                  clip = false,
                  height=6cm,
                  width=12cm,
                  xtick distance = 1,
                  ytick distance = 1,
                  ticklabel style = {font=\tiny},
                  ylabel={},
                  axis lines=middle,
                  axis line style={-},
                  legend style={at={(.5,.7)},anchor=west,font=\footnotesize,draw=none}
                  ]

                  \addplot+ [mathematicablue, thick, no marks, smooth] table {v_net.table} node[right,pos=.82] {$v-c$};
            
                  \addplot [mathematicagreen,mark=*,only marks,mark options={color=mathematicagreen,fill=mathematicagreen}] table {targets_vnet.table};
                  \addplot [mathematicagreen,mark=*,only marks,mark options={color=mathematicagreen,fill=mathematicagreen}] table {targets_vnet_axis.table};
                  \addplot [mathematicagreen,no marks,very thick,dashed] table {targets1_vnet_axis.table};
                  \addplot [mathematicagreen,no marks,very thick,dashed] table {targets2_vnet_axis.table};
                  \addplot [mathematicagreen,no marks,very thick,dashed] table {targets3_vnet_axis.table};
                  \addplot [mathematicagreen,no marks,very thick,dashed] table {targets1_vnet.table};
                  \addplot [mathematicagreen,no marks,very thick,dashed] table {targets2_vnet.table};
                  \addplot [mathematicagreen,no marks,very thick,dashed] table {targets3_vnet.table};

                  \addplot [mathematicaorange,mark=diamond*,very thick,mark options={color=mathematicaorange,fill=white}] table {thresholds1_vnet.table};
                  \addplot [mathematicaorange,mark=diamond*,very thick,mark options={color=mathematicaorange,fill=white}] table {thresholds2_vnet.table};
                  \addplot [mathematicaorange,mark=diamond*,very thick,mark options={color=mathematicaorange,fill=white}] table {thresholds3_vnet.table};
                  \addplot [mathematicaorange,mark=diamond*,very thick,mark options={color=mathematicaorange,fill=white}] table {thresholds4_vnet.table};
                  \addplot [mathematicaorange,mark=diamond*,smooth,very thick,mark options={color=mathematicaorange,fill=white}] table {thresholds1_vnet_axis.table};
                  \addplot [mathematicaorange,mark=diamond*,smooth,very thick,mark options={color=mathematicaorange,fill=white}] table {thresholds2_vnet_axis.table};
                  \addplot [mathematicaorange,mark=diamond*,smooth,very thick,mark options={color=mathematicaorange,fill=white}] table {thresholds3_vnet_axis.table};
                  \addplot [mathematicaorange,mark=diamond*,smooth,very thick,mark options={color=mathematicaorange,fill=white}] table {thresholds4_vnet_axis.table};

                  \addplot [<->, gray, dotted] coordinates {(.35,1.855)  (.35,1.846)} node[left,pos=.5] {$\kappa$};
                  \addplot [<->, gray, dotted] coordinates {(.65,1.855)  (.65,1.846)} node[right,pos=.5] {$\kappa$};
                  \addplot [<->, gray, dotted] coordinates {(.06,1.829)  (.06,1.821)} node[left,pos=.5] {$\kappa$};
                  \addplot [<->, gray, dotted] coordinates {(.94,1.829)  (.94,1.821)} node[right,pos=.5] {$\kappa$};

                \end{axis}
            \end{tikzpicture}
            \caption{Geometrically solving for the optimal policy given the net value function $v-c$}
            {\footnotesize The green intervals represent $\grossvalregionopt$ (gross static value for information); the orange region represent $\inforegionopt$ (information acquisition region); the green dots are the local "target" beliefs when information is acquired.}
            \label{fig:geometric-intuition}
        \end{figure}  
        
        \paragraph{Geometric representation}
        \hyref{Theorem}{thm:optimal-policies} induces a simple parameterization and a useful geometric visualization for optimal policies using the graph of $v-c$, which is illustrated in \hyref{Figure}{fig:geometric-intuition}.
        First, draw the \emph{concave envelope} of $v-c$ and look for the region where it is strictly above $v-c$: this gives $\grossvalregionopt$.
        Each interval in $\grossvalregion$ defines a region where information \emph{may} be acquired and the corresponding optimal binary experiment supported its endpoints.
        Within each interval, look for the subset of beliefs at which $v-c$ coincides with its own concave envelope shifted down by $\kappa$; note that this need not be an interval itself as it may have "holes".
             
        The proof of \hyref{Theorem}{thm:optimal-policies} formalizes the logic previously described.
        It follows from combining three results that can be found in \hyref{Appendix}{sec:appendix:optimal-policies}.
        First, \hyref{Proposition}{prop:concavification} characterizes optimal experiments for arbitrary continuation values, which yields the sufficiency of binary experiments and a the interval decomposition, as well a general version of the concave envelope characterization.
        Second, \hyref{Proposition}{prop:optimal-stopping-facts} characterizes optimal timing of one shot information acquisition, taking as a given the value from optimal information acquisition.
        Third, putting it together is justified by a general verification result characterizing all optimal policies in terms of solutions in the Bellman equation (\hyref{Proposition}{prop:verification-theorem-general-form}).

        \paragraph{Remark} 
        In the remainder of the paper, optimal information acquisition refers to the optimal policy in \hyref{Theorem}{thm:optimal-policies}. 
        This policy is always well-defined, but cannot in general be guaranteed to be unique, although non-uniqueness can reasonably be expected to be a knife-edge case.
        In case of multiplicity, it corresponds to the selection of the earliest optimal stopping time and the least informative optimal experiment.

\section{Dynamics of information acquisition}\label{sec:Dynamics}

    The decomposition in \hyref{Theorem}{thm:optimal-policies} enables the precise description of dynamics.
    In the long run, learning must either stop in finite time or settle into a simple cyclical pattern (\hyref{Theorem}{thm:long-run-dynamics}).
    \hyref{Proposition}{prop:when-does-learning-stop} characterizes conditions under which learning stops and path-dependent "learning traps" exist.
    The results deliver a simplified problem over explicit stationary payoffs (\hyref{Section}{sec:stationary-payoffs-and-auxiliary-problem}) and a natural way to compare informativeness across environments (\hyref{Section}{sec:Frequency-Quality-CS}).
  
    \subsection{Convergence to cyclical information acquisition}\label{sec:convergence-to-cyclical-information-acquisition}

        \paragraph{Belief cycles} 
        Under the optimal strategy, if at any point an experiment is chosen which leads to possible posterior beliefs $\lowtarget,\hightarget$ on opposite sides of the long run average $\pi$ ($\lowtarget < \pi < \hightarget$), then all future information acquisition leads to the same two posteriors. 
        By \hyref{Theorem}{thm:optimal-policies}, the same support must remain optimal for the next experiment (see \hyref{Figure}{fig:geometric-intuition}).
        This implies that the time between updates is the waiting time between either $\lowtarget,\hightarget$ and the closest belief towards $\pi$ which lies in the information acquisition region $\inforegionopt$.
        In other words, if information acquisition is supported on beliefs which suggest that a different state is more likely than average, beliefs must enter simple cyclical dynamics, where a fixed time between updates lead to restoring one of two possible levels of relative confidence that one state is more likely than average.
        \hyref{Definition}{def:belief-cycles} below formalizes this notion of "belief cycles";  \hyref{Figure}{fig:belief-cycles} illustrates cyclical dynamics.

        \begin{figure}[h!]
            \centering
            \begin{tikzpicture}[xscale=8,yscale=6,>=stealth]
                \draw [|-|] (0,0) -- (1,0);    

                \node[circle, draw=mathematicagreen, fill=mathematicagreen, inner sep=0pt, minimum size=5pt, label={[label distance=-.5em]170:{\textcolor{mathematicagreen}{$\lowtarget$}}}] (q0) at (.15,0) {};
                \node[circle, draw=mathematicagreen, fill=mathematicagreen, inner sep=0pt, minimum size=5pt, , label={[label distance=-.4em]10:{\textcolor{mathematicagreen}{$\hightarget$}}}] (q1) at (.85,0) {};

                \node[diamond, draw=mathematicaorange, fill=mathematicaorange, inner sep=0pt, minimum size=5pt, label=above:{\textcolor{mathematicaorange}{$\lowthreshold$}}] (pL) at (.3,0) {};
                \node[diamond, draw=mathematicaorange, fill=mathematicaorange, inner sep=0pt, minimum size=5pt, label=below:{\textcolor{mathematicaorange}{$\highthreshold$}}] (pH) at (.7,0) {};


                \node[circle, draw=mathematicablue, fill=mathematicablue, inner sep=0pt, minimum size=3pt, label=below:{\textcolor{mathematicablue}{$\pi$}}] (pi) at (.5,0) {};
                
                \path [thick, mathematicaorange, dashed, ->] (pL) edge[bend left =90]  node{} (q0);
                \path [thick, mathematicaorange, dashed, ->] (pL) edge[bend left =-90]  node{} (q1);

                \path [thick, mathematicaorange, dashed, ->] (pH) edge[bend right =90]  node{} (q0);
                \path [thick, mathematicaorange, dashed, ->] (pH) edge[bend right =-90]  node{} (q1);

                \path[left color=mathematicagreen,right color=mathematicaorange] (.15,.003) rectangle (.3,-.003);
                \path[right color=mathematicagreen,left color=mathematicaorange] (.85,.003) rectangle (.7,-.003);

                \begin{scope}[every node/.style={draw,fill,single arrow,mathematicablue,
                    single arrow tip angle=50,
                    single arrow head extend=2pt,
                    single arrow head indent=1pt,
                    inner sep=0pt}] 
                  \node[shape border rotate=0] at (.225,0) {};
                  \node[shape border rotate=180] at (.775,0) {};
                \end{scope}
            \end{tikzpicture}
            \caption{Representation of a belief cycle}
            \label{fig:belief-cycles}
        \end{figure}
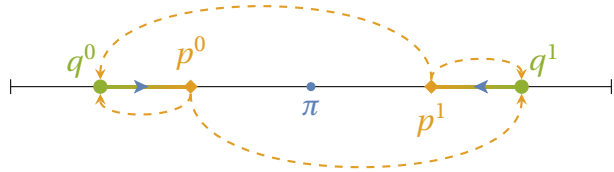
        
        \begin{definition}\label{def:belief-cycles}
              A \emph{belief cycle} $\beliefcycle$ is a tuple $\beliefcycle=\biggl( (\lowtarget,\hightarget) , (\lowthreshold,\highthreshold) , (\lowtime,\hightime) \biggr)$ composed of target beliefs $(\lowtarget,\hightarget)$, threshold beliefs $(\lowthreshold,\highthreshold)$, and waiting times $(\lowtime,\hightime)$ such that:
                    \begin{gather*}
                    0 \leq \lowtarget \leq \lowthreshold \leq \pi \leq \highthreshold \leq \hightarget \leq 1
                    \text{ and } \tau_i = \frac{1}{\lambda} \log \Biggl( \frac{\pi-q^i}{\pi-p^i} \Biggr) \hspace{.5em} \text{ for } i=0,1
                    \end{gather*}      
                It is called \emph{non-degenerate} if $\lowtarget < \lowthreshold < \pi < \highthreshold < \hightarget$ or, equivalently, $\lowtarget < \pi < \hightarget$ and $\lowtime,\hightime>0$
        \end{definition}

        A belief cycle parameterizes the law of motion of beliefs within the induced domain $[\lowtarget,\lowthreshold] \cup [\highthreshold,\hightarget]$: information is acquired when beliefs reach $\lowthreshold$ or $\highthreshold$, the resulting update triggers a jump to $\lowtarget$ or $\hightarget$; the DM waits $\lowtime$ or $\hightime$ respectively until the next update (see \hyref{Figure}{fig:belief-cycles}).
        This description of the belief cycle therefore has some redundancy but it conveniently encodes the dependence on parameters $(\lambda,\pi)$ for comparison across environments (see \hyref{Section}{sec:CS-volatility}).
          
        \paragraph{Convergence}
        Information acquisition must eventually either reach a belief cycle or stop altogether. 
        \hyref{Figure}{fig:geometric-intuition} and the examples of belief dynamics in \hyref{Figure}{fig:optimal-belief-dynamics-examples} below provide intuition into the underlying logic: if both posteriors from a given experiment are on the same side of $\pi$, then if beliefs jump \emph{towards} $\pi$ they will only drift closer to $\pi$ until they reach an experiment which triggers cyclical dynamics (if there is one).
        \begin{theorem}\label{thm:long-run-dynamics}
          Let $\{P_t\}$ the belief process deriving from optimal information acquisition. There exists an almost surely finite time $T \geq 0$ after which either:
          \begin{enumerate}[label=\textbf{\emph{(\Alph*)}}]
              \item \textbf{Learning stops:} no information is acquired, or
              \item \textbf{Cyclical updates:} $P_t$ follows dynamics described by a non-degenerate belief cycle.
          \end{enumerate}
          If learning stops, beliefs converge to their long-run average: $P_t \xrightarrow[t \rightarrow \infty]{a.s.} \pi$.
        \end{theorem}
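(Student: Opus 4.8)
The plan is to study the embedded jump chain of post-acquisition beliefs induced by the optimal policy of \hyref{Theorem}{thm:optimal-policies} and to show that it is absorbed, in almost surely finite time, into one of two regimes determined by the position of $\pi$ relative to the intervals of $\grossvalregionopt$. Write $\grossvalregionopt = \bigsqcup_k I_k$ for its decomposition into maximal open intervals. By \hyref{Theorem}{thm:optimal-policies}, acquisition occurs exactly when the belief enters $\inforegionopt = \{\CavDist w = \kappa\}$, the induced binary experiment sends the belief to the two endpoints of the interval $I_k$ containing the acquisition belief, and between acquisitions beliefs drift monotonically toward $\pi$ without ever crossing it. Call $I_k$ \emph{active} if $I_k \cap \inforegionopt \neq \emptyset$; only active intervals ever trigger acquisition.

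First I would establish two finiteness facts. (i) Only finitely many intervals are active: were there infinitely many, a choice of thresholds $Y_k \in I_k \cap \inforegionopt$ would subconverge to some $Y^*$ with $\CavDist w(Y^*) = \kappa$ by continuity of $\CavDist w$, placing $Y^*$ in the interior of a single maximal interval and contradicting that the $I_k$ are distinct and disjoint. (ii) Within a fixed active interval $I = (a,b)$ lying entirely on one side of $\pi$—say $b \leq \pi$—the belief escapes toward $\pi$ after almost surely finitely many acquisitions: entering $I$ by drifting upward it first reaches $p^- := \min(\inforegionopt \cap I)$, and each acquisition there jumps to the right endpoint $b$ with fixed Bayes-plausible probability $(p^- - a)/(b-a) > 0$—from which the subsequent drift leaves $I$ upward, toward $\pi$—or back to $a$, whence it drifts up to $p^-$ again; the number of returns is therefore geometric, hence almost surely finite. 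The symmetric statement holds for intervals with $a \geq \pi$.

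Combining (i) and (ii), the post-acquisition beliefs pass through finitely many active one-sided intervals, each visited over finitely many acquisitions, with the visited intervals advancing monotonically toward $\pi$; since each inter-acquisition drift takes finite time, after an almost surely finite number of acquisitions (and hence at an almost surely finite time $T$) the belief either reaches the unique interval $I^* = (q^0,q^1)$ straddling $\pi$ or exhausts the thresholds on its side. This yields the dichotomy. If $I^*$ contains thresholds on both sides of $\pi$, then from $q^0$ the belief drifts up to $p^0 := \min(\inforegionopt \cap I^*) \in (q^0,\pi)$ and from $q^1$ down to $p^1 := \max(\inforegionopt \cap I^*) \in (\pi,q^1)$; because drift never crosses $\pi$, the belief stays confined to $[q^0,p^0] \cup [p^1,q^1]$ and the experiment support remains $\{q^0,q^1\}$ forever—precisely a non-degenerate belief cycle in the sense of \hyref{Definition}{def:belief-cycles}, with $q^0 < p^0 < \pi < p^1 < q^1$ and $\tau^0,\tau^1 > 0$: this is case (B). In every other situation learning stops almost surely: if $\pi \notin \grossvalregionopt$ the belief, after its last one-sided acquisition, drifts to $\pi$ meeting no further threshold; and if $I^*$ carries a threshold on only one side, each acquisition there jumps into the threshold-free side with fixed positive probability, and the first such jump—which occurs after almost surely finitely many rounds—is followed by a drift to $\pi$ with no further acquisition. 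In both subcases $p_t = e^{-\lambda t} p + (1-e^{-\lambda t})\pi \to \pi$, giving case (A) together with $P_t \to \pi$.

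The main obstacle is establishing the almost-sure finiteness of the pre-absorption phase: one must control the within-interval loops via the geometric escape of (ii) while simultaneously ruling out accumulation of active intervals—toward $\pi$ or any interior point—via the continuity argument of (i), and then check that the finitely many inter-acquisition drift times sum to a finite $T$. A secondary delicate point is the one-sided (``leaky'') case, where the belief genuinely cycles for a random but almost surely finite number of rounds before a positive-probability jump onto the threshold-free side sends it to $\pi$; care is needed there to conclude that learning stops almost surely rather than merely with positive probability.
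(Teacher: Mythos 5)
Your proposal is correct in its essentials and reaches the conclusion by a genuinely different route from the paper. Both arguments share the same core mechanics: the geometric number of within-interval loops, the monotone progression of post-jump beliefs toward $\pi$, and the trichotomy at the interval of $\grossvalregionopt$ straddling $\pi$ (cycle, no acquisition, or a ``leaky'' one-sided escape to $\pi$). The divergence is in how the possibly infinite collection of acquisition intervals is handled. The paper explicitly declines to claim that the collection of active intervals is finite --- its appendix states that the index set ``is countable but not necessarily finite'' and that intervals ``could even accumulate at a point'' --- and instead sums the (possibly infinitely many) independent crossing times $T_n$ and applies Kolmogorov's 0--1 law: the total crossing time is finite with probability in $\{0,1\}$, and the event that every interval is cleared on the first try has positive probability and bounded total time, forcing probability one. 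Your step (i) instead \emph{proves} finiteness: thresholds $Y_k \in I_k \cap \inforegionopt$ satisfy $\CavDist w(Y_k)=\kappa$, so an accumulation point $Y^*$ would satisfy $\CavDist w(Y^*)=\kappa>0$ by continuity, placing $Y^*$ in the interior of a single component of the open set $\grossvalregionopt$ and contradicting disjointness. This is a clean and correct argument at interior beliefs (where $v$ and $c$, hence $\CavDist w$, are continuous), and it buys you a more elementary proof: a finite sum of a.s.\ finite geometric crossing times needs no 0--1 law. The one place your route requires care --- and the likely reason the paper avoids the finiteness claim --- is the boundary of the belief space: $c$ is only assumed finite and continuously differentiable on the interior, so continuity of $\CavDist w$ can fail at $0$ or $1$, and thresholds could in principle accumulate there. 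You should either note that the trajectory (after the initial jump) is confined to a compact subset of the interior determined by the starting interval and $\pi$, or handle the degenerate-prior case separately (e.g., infinitely many acquisitions in finite time would incur cost at least $\kappa$ each and contradict optimality). The paper's 0--1 law argument sidesteps this entirely, which is what robustness it buys at the price of heavier machinery.
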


        \begin{figure}[h!]
          \centering
            \begin{subfigure}{.4\textwidth}
                \includegraphics[scale=.4]{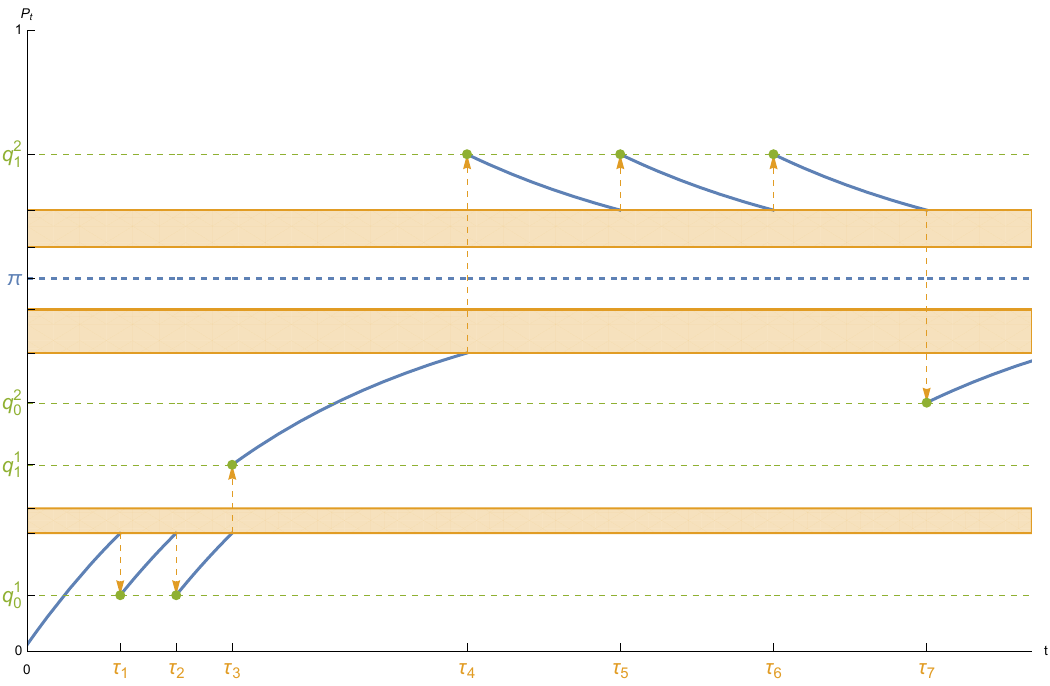}
                \caption{Convergence to cyclical dynamics}
            \end{subfigure}
            \hspace{2em}
            \begin{subfigure}{.4\textwidth}
                \includegraphics[scale=.4]{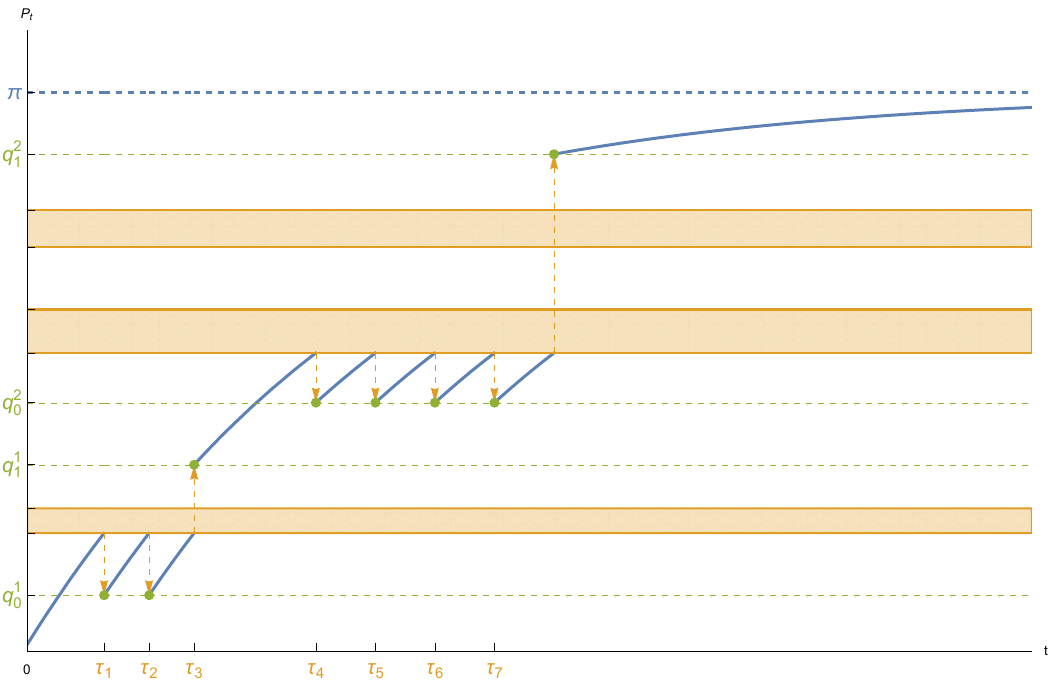}
                \caption{Learning eventually stops}
            \end{subfigure}    
            \caption{Examples of realized belief dynamics}
            \label{fig:optimal-belief-dynamics-examples}
        \end{figure}

        The proof would be fairly direct if $\grossvalregionopt$ could be guaranteed to have a \emph{finite} numbers of disjoint intervals, but there could be infinitely many.
        The general proof (in \hyref{Appendix}{sec:appendix:dynamics-long-run-behavior}) is an application of Kolmogorov's 0-1 Law.
        The theorem implies that the DM eventually settles on \emph{periodic updates} with \emph{fixed content}.
        This rules out more complex yet reasonable behavior -- for instance, frequent "small" updates to check whether a more substantial reassessment should occur.  
        This reduction of the problem to the choice of \emph{content} and \emph{frequency} extends the intuition that static information acquisition optimally concentrates on two possible beliefs, each suggesting one state being relatively more likely than the prior.
        In the dynamic problem, it is optimal to \emph{eventually} concentrate all information acquisition on two possible outcomes, which must now suggest either state being more likely than \emph{the long run average}.
        The substantial difference is that the value of information stems from the periodic repetition of that experiment: the benefit from information acquisition is the sum of short-run improvements in decisions. 
        Hence, it is necessary to specify not just content but also frequency, and choices over these objects are entangled: periodicity determines both occurrence of costs and time horizon (hence short-run value) from each update.

        \paragraph{Short and long run}
        The convergence result in \hyref{Theorem}{thm:long-run-dynamics} clarifies incentives for short-run (non-cyclical) information acquisition: it entails a gamble on \emph{when} the process enters the stationary cycles. 
        Suppose the DM has solved for the optimal stationary dynamics: once in $[\lowtarget,\lowthreshold]\cup[\highthreshold,\hightarget]$, it is optimal to remain in the corresponding cycle. 
        Fix a starting belief $p<\lowtarget$ and allow at most one experiment region outside the stationary domain. 
        The DM chooses a stopping belief in $(p,\lowtarget)$ and a local two-point experiment with both posteriors to the left of $\lowtarget$. 
        A jump to the left triggers a temporary local loop; a jump to the right leads to drift toward the stationary regime with continuation value $v(\lowtarget)$. 
        Thus short-run acquisition trades off lingering at higher-certainty beliefs against jumping sooner to the steady state (lower certainty).
        This suggests a constructive decomposition: first solve the stationary policy; then solve the nearest short-run acquisition region taking the inward continuation value as given; and iterate outward.\footnote{This induction is well-defined under additional regularity, e.g. if $u$ and $c$ are piecewise analytic, to guarantee that there are only finitely many intervals where acquisition is profitable.}
        The logic also explains departures from static intuition: information can be optimal even without an immediate action switch if it changes the \emph{timing} of future switches. 
        In the long-run periodic regime this phenomenon vanishes: no information is acquired without action switches—otherwise the auxiliary problem would entail costs without benefits, echoing the static case.

        \paragraph{Ergodic distribution of beliefs}
          The result on long run belief dynamics also enables the characterization of the \emph{ergodic distribution of beliefs}; the proof is standard and omitted.
          \begin{proposition}\label{prop:ergodic-distribution}
            Let $\mu$ the ergodic distribution of beliefs under optimal information acquisition, identified with its density. Assume information acquisition does not stop under the optimal policy and denote $[\lowtarget,\lowthreshold] \cup [\highthreshold,\hightarget]$ the support of the long run belief cycle. Then $\mu$ is \emph{piecewise uniform} with $\mu(p) = \frac{1}{2} \frac{1}{\lowthreshold-\lowtarget}$ if $p \in [\lowtarget,\lowthreshold]$ and $\mu(p)=\frac{1}{2} \frac{1}{\hightarget-\highthreshold}$ if $p \in [\highthreshold,\hightarget]$.
          \end{proposition}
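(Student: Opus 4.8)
The plan is to reduce the stationary belief process to a finite-state regenerative chain and read the invariant law off it. By \hyref{Theorem}{thm:long-run-dynamics}, after an almost surely finite (hence long-run-irrelevant) transient the process $\{P_t\}$ is the non-degenerate belief cycle on $[\lowtarget,\lowthreshold]\cup[\highthreshold,\hightarget]$: the belief follows the deterministic flow $\dot{p}_t=\lambda(\pi-p_t)$ along one of two monotone arcs --- the low arc running from $\lowtarget$ up to $\lowthreshold$, the high arc running from $\hightarget$ down to $\highthreshold$ --- and is reset instantaneously by an update each time it reaches a threshold. I would obtain $\mu$ by first determining the long-run frequency with which each arc is visited (via the embedded jump chain) and then disintegrating the invariant measure along the flow inside each arc.

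First I would analyze the embedded chain, whose two states record which target the belief currently occupies, equivalently which arc is being traversed. By \hyref{Theorem}{thm:optimal-policies} every update is the binary experiment supported on $\{\lowtarget,\hightarget\}$, so Bayes plausibility at a threshold $p^i$ fixes the split: the belief jumps to $\hightarget$ with probability $\frac{p^i-\lowtarget}{\hightarget-\lowtarget}$ and to $\lowtarget$ with the complementary probability. This yields a $2\times 2$ stochastic matrix whose unique stationary distribution $(\rho_0,\rho_1)$ I would solve for. The decisive consequence is the balance identity $\rho_0(\lowthreshold-\lowtarget)=\rho_1(\hightarget-\highthreshold)$, obtained by equating the stationary $0\!\to\!1$ and $1\!\to\!0$ flows $\rho_0\frac{\lowthreshold-\lowtarget}{\hightarget-\lowtarget}=\rho_1\frac{\hightarget-\highthreshold}{\hightarget-\lowtarget}$. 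This relation is exactly the mean-preserving (martingale) content of Bayes plausibility and is independent of the volatility $\lambda$.

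To assemble the density I would use that each arc is swept monotonically and covered exactly once per visit, so the long-run rate at which the belief path passes through a given level is constant across the interior of each arc and proportional to that arc's visit frequency ($\rho_0$ on the low arc, $\rho_1$ on the high arc). Normalizing this long-run traversal measure, the weight placed on a sub-interval of an arc is proportional to its length, which gives a uniform conditional density on each arc; the total masses are then proportional to $\rho_0(\lowthreshold-\lowtarget)$ and $\rho_1(\hightarget-\highthreshold)$, which the balance identity equates. Hence each arc carries mass exactly $\tfrac12$, yielding $\tfrac12\tfrac{1}{\lowthreshold-\lowtarget}$ on $[\lowtarget,\lowthreshold]$ and $\tfrac12\tfrac{1}{\hightarget-\highthreshold}$ on $[\highthreshold,\hightarget]$, as claimed.

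The delicate step is the within-arc disintegration, where one must be careful about the measure in which the belief is equidistributed: the flow $\dot p=\lambda(\pi-p)$ is not uniform in belief space, since it decelerates as $p$ approaches $\pi$. The clean cancellation producing a uniform density therefore comes from weighting the flow by the jump-chain traversal frequencies (equivalently, from the stationary probability flux, which the Kolmogorov-forward balance of the piecewise-deterministic process shows is constant on the interior of each arc) rather than from raw sojourn times, and it is the balance identity that converts those frequencies into equal masses; I would make the normalization precise by matching the constant interior flux against the reset conditions at $\lowthreshold,\highthreshold$ and the injection conditions at $\lowtarget,\hightarget$. The equal $\tfrac12$ split, by contrast, is robust and traces entirely to the martingale balance of Bayes-plausible updates.
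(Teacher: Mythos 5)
Your route is genuinely different from the paper's. The paper works with a stationary transport equation for a population density: it posits that the density is carried unchanged along characteristics, $\mu(t+dt,p_{dt})=\mu(t,p)$, deduces $\lambda(\pi-p)\partial_p\mu=0$ in the interior of each arc (hence piecewise constancy), and pins down the two constants via the Bayes-plausibility boundary conditions at $\lowtarget,\hightarget$ together with total mass one. You instead pass to the embedded two-state jump chain and a crossing-rate argument. Your balance identity $\rho_0(\lowthreshold-\lowtarget)=\rho_1(\hightarget-\highthreshold)$ is exactly the content of the paper's boundary conditions, and it is the cleaner way to see where the equal $\tfrac12$ masses come from.

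The gap is the step your last paragraph flags but does not close: the identification of your "long-run traversal measure" with the ergodic distribution. The constancy of the level-crossing rate across the interior of an arc does not make the stationary \emph{occupation} density uniform: at stationarity the flux is $J=\lambda(\pi-p)\mu(p)$, so a constant flux together with the decelerating drift forces $\mu(p)\propto 1/|\pi-p|$, and the long-run fraction of time spent in a subinterval $[a,b]$ of the low arc is proportional to the traversal time $\tfrac{1}{\lambda}\log\tfrac{\pi-a}{\pi-b}$, not to $b-a$. The measure you construct (visit frequency times Lebesgue length) is the normalized crossing-intensity measure, and substituting it for the sojourn-time measure "rather than from raw sojourn times" is precisely the assertion that needs proof — the ergodic theorem delivers sojourn times, and under that standard reading both the within-arc shape and the equal split of mass between the arcs would come out differently (the arcs would carry masses proportional to $\rho_0\lowtime$ and $\rho_1\hightime$, which are not equal in general). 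The paper reaches the stated uniform density because its derivation conserves the density \emph{value} along characteristics (the advection equation $\partial_t\mu+\lambda(\pi-p)\partial_p\mu=0$) rather than conserving mass (the continuity equation $\partial_t\mu+\partial_p(\lambda(\pi-p)\mu)=0$); to land on the proposition as stated you must either adopt and justify that same convention or argue explicitly that the crossing-intensity measure is the intended notion of "ergodic distribution." As written, your argument does neither, so the decisive step is missing.
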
           
          The ergodic distribution can be interpreted as the eventual distribution of beliefs within a population of identical decision-makers; all objects of interest (spread of beliefs, average time to the next update,...) express in terms of the thresholds.
          This, in turn, enables using the model to study the effect of optimal information acquisition and the model’s primitives on population parameters, e.g. the spread and balance of beliefs within the population, and investigate broader questions such as whether higher information costs lead to more or less disagreement.

    \subsection{Learning traps}\label{sec:learning-traps}

        When do optimal dynamics induce learning to stop in finite time? 
        The following proposition gives simple conditions depending on whether or not there is gross value for information at $\pi$. 
    
        \begin{proposition}\label{prop:when-does-learning-stop}
            Under optimal dynamics:
            \begin{enumerate}[label=(\roman*)]
                \item Learning stops in finite time only if $\pi$ is not in the information acquisition region ($\pi \notin \inforegionopt$).
                \item If there no gross value for information at $\pi$ (i.e. $\pi \notin \grossvalregionopt$), information is acquired at most a finite number of times.
                \item If there is gross but not net value for information at $\pi$ ($\pi \in \grossvalregionopt \setminus \inforegionopt$), then either (a) information acquisition is acquired at most a finite number of times for all priors, or (b) there exists some open interval $(\underline{p},\overline{p})$ such that no information is ever acquired for any prior $p_0 \in (\underline{p},\overline{p})$ but any prior \emph{not} in $(\underline{p},\overline{p})$ leads to a belief cycle in the long run.
            \end{enumerate}
            Furthermore case (iii.b) occurs if and only if there exists some beliefs in the information acquisition region to the left and to the right of $\pi$ which both lead to the same conditionally optimal target beliefs as at $\pi$. Refer to the (possibly empty) interval $(\underline{p},\overline{p})$ as the "trap region".
        \end{proposition}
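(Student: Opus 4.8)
The plan is to read all three parts off the residual value of information $g:=\CavDist w=\Cav w - w$, which is continuous, valued in $[0,\kappa]$, and encodes both regions: $\grossvalregionopt=\{g>0\}$ is the open union of experiment intervals and $\inforegionopt=\{g=\kappa\}$ is the closed acquisition region. The workhorse is \hyref{Theorem}{thm:long-run-dynamics}: from any prior the trajectory eventually either stops (with $P_t\to\pi$) or enters a non-degenerate cycle. I would first record the one structural fact that drives everything. A non-degenerate cycle can form only inside a single experiment interval $(\lowtarget,\hightarget)$ that strictly straddles the average, $\lowtarget<\pi<\hightarget$, and only if that interval contains acquisition thresholds on the side(s) of $\pi$ from which beliefs arrive, reached in finite time under the drift. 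Since beliefs drift monotonically toward $\pi$ between updates, an experiment whose two posteriors lie on the same side of $\pi$ can never sustain a cycle—this is exactly the observation motivating \hyref{Definition}{def:belief-cycles}.

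I would prove part (ii) first, as it is immediate. If $\pi\notin\grossvalregionopt$ then $g(\pi)=0$, so $\pi$ lies in no experiment interval and no experiment straddles it; by the structural fact no non-degenerate cycle exists, and \hyref{Theorem}{thm:long-run-dynamics} forces case (A) for every prior, so learning stops after finitely many updates. (By continuity $g<\kappa$ on a neighborhood of $\pi$, which the drift enters in finite time, confirming that no acquisition is triggered thereafter.) Part (i) is then the contrapositive: if $\pi\in\inforegionopt$ I claim learning does not stop. Here $g(\pi)=\kappa$ places $\pi$ strictly inside a straddling interval $(\lowtarget,\hightarget)$ on which $\Cav w$ is the affine chord, and I would show the information region $\{g=\kappa\}$ cannot pinch to the single interior point $\{\pi\}$: it must contain a threshold $\lowthreshold\in(\lowtarget,\pi)$ or $\highthreshold\in(\pi,\hightarget)$ reached in finite time, which launches a non-degenerate cycle. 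The delicate point—the one I expect to cost the most care—is ruling out the asymptotic knife-edge in which $g=\kappa$ is attained only at $\pi$ (so that the would-be threshold is the unreachable limit $\pi$); I would exclude it via the variational inequality for the induced optimal stopping problem with affine stopping payoff, showing a genuine contact point forces an interval of contact abutting a reachable $\lowthreshold$ or $\highthreshold$.

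The substance is in part (iii) together with the "furthermore". Here $0<g(\pi)<\kappa$: $\pi$ sits strictly inside a straddling interval $(\lowtarget,\hightarget)$, but at $\pi$ the DM waits, so by continuity $\{g<\kappa\}$ contains a maximal open threshold-free interval around $\pi$. Let $\lowthreshold:=\sup(\inforegionopt\cap(\lowtarget,\pi))$ and $\highthreshold:=\inf(\inforegionopt\cap(\pi,\hightarget))$ be the innermost thresholds on each side (when they exist). I would show that every prior in $(\lowthreshold,\highthreshold)$ drifts monotonically toward $\pi$ without ever leaving this hole, hence never acquires—the trap region $(\underline p,\overline p)=(\lowthreshold,\highthreshold)$. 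Whether priors outside the hole instead cycle depends on whether both innermost thresholds exist. If both do, they form a non-degenerate cycle whose targets are the interval endpoints $\lowtarget,\hightarget$ (the experiment is binary, supported on the nearest points with $g=0$, which are the common endpoints because $\lowthreshold,\highthreshold$ and $\pi$ share the interval $(\lowtarget,\hightarget)$), and every outside prior reaches a threshold in finite time and enters it: this is (iii.b). If a threshold is missing on one side—say $\inforegionopt\cap(\pi,\hightarget)=\emptyset$—then after any update the positive-probability draw onto $\hightarget>\pi$ leaves the belief drifting down to $\pi$ through a threshold-free region forever, so a.s. acquisition is finite for every prior: this is (iii.a). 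This construction simultaneously yields the iff: (iii.b) holds exactly when $\inforegionopt$ meets both $(\lowtarget,\pi)$ and $(\pi,\hightarget)$ with the common targets $\lowtarget,\hightarget$—sufficiency by the explicit cycle-and-trap construction, necessity because the cycle that outside priors enter is non-degenerate and lives in the single experiment interval containing $\pi$, forcing its targets to coincide with $\lowtarget,\hightarget$ and its thresholds to lie on both sides of $\pi$.

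The main obstacle I anticipate, beyond the knife-edge in (i), is making the "reached in finite time" and "a.s. finitely many updates" arguments uniform across priors when $\grossvalregionopt$ may have infinitely many components accumulating near $\pi$, so that a naive count of crossings fails. For this I would not argue interval-by-interval but instead lean on the same monotone-drift and Kolmogorov $0$–$1$ reasoning underlying \hyref{Theorem}{thm:long-run-dynamics}, using that the hitting level is bounded away from $\pi$ precisely when a reachable threshold exists, and that beliefs entering the threshold-free neighborhood of $\pi$ (guaranteed in finite time by continuity of $g$) can never escape it.
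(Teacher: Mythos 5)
Your overall strategy is exactly the one the paper relies on (the paper states only the informal logic in the main text and does not give a separate formal proof): combine the interval structure of $\grossvalregionopt$ and $\inforegionopt$ from \hyref{Theorem}{thm:optimal-policies} with the convergence dichotomy of \hyref{Theorem}{thm:long-run-dynamics} and the monotone drift toward $\pi$, then do the case analysis on whether $g(\pi)=\CavDist w(\pi)$ equals $0$, lies in $(0,\kappa)$, or equals $\kappa$. Parts (ii) and (iii), including the trap construction $(\underline p,\overline p)=(\lowthreshold,\highthreshold)$ and the ``furthermore'' equivalence, are argued correctly and with more care than the paper itself (in particular your observation that optimal experiments are supported where $g=0$, so no jump can land strictly inside the hole, and your Borel--Cantelli-type argument for (iii.a)).

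The one step that does not go through as written is in part (i). You claim that $\pi\in\inforegionopt$ forces a reachable threshold $\lowthreshold\in(\lowtarget,\pi)$ \emph{or} $\highthreshold\in(\pi,\hightarget)$, ``which launches a non-degenerate cycle.'' A single one-sided threshold does not launch a non-degenerate cycle (by \hyref{Definition}{def:belief-cycles} you need $\lowtarget<\lowthreshold<\pi<\highthreshold<\hightarget$), and it does not prevent learning from stopping: each update at $\lowthreshold$ sends the belief to $\hightarget$ with probability $(\lowthreshold-\lowtarget)/(\hightarget-\lowtarget)>0$, and if $\inforegionopt\cap(\pi,\hightarget)=\emptyset$ that branch drifts down toward $\pi$ forever without triggering acquisition (recall $\pi$ itself is never reached in finite time), so acquisition is a.s.\ finite. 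Your variational/continuity argument therefore has to be run on \emph{each} side of $\pi$ separately to show that $\pi\in\inforegionopt$ forces reachable contact points in both $(\lowtarget,\pi)$ and $(\pi,\hightarget)$ — equivalently, that the only way either side can fail is the exact indifference $w(\pi)=f(\pi)/r=\Cav[w](\pi)-\kappa$, which is the same knife-edge you already flagged for the pinched-to-a-point case. With the ``or'' replaced by ``and'' and that two-sided exclusion carried out, the argument is complete.
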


        Whether learning stops depends on whether the belief process drifts into points in the information acquisition region, coming from any side of $\pi$. 
        \hyref{Proposition}{prop:when-does-learning-stop} follows this logic to delineate necessary and sufficient conditions. 
        Sufficient conditions can often be obtained without solving the full problem, by examining the best attainable payoffs from \emph{some} cycle initiated at or around $\pi$.
        For instance: if some cycle is profitable, then it must be that $\pi \in \inforegionopt$ and all priors eventually lead to periodic information acquisition; if the same is true in gross but not net value, there is a hole and path dependence arises; etc.

        \begin{figure}[h!]
            \centering
            \begin{tikzpicture}[xscale=9,yscale=9.5,>=stealth]
                \draw [-] (0,0) -- (1,0);
                \draw [dashed] (0,0) -- (-.1,0);
                \draw [dashed] (0,0) -- (1.1,0);

                \node[circle, draw=mathematicagreen, fill=mathematicagreen, inner sep=0pt, minimum size=5pt, label=above:{\color{mathematicagreen} $\lowtarget$}] (pL2) at (.2,0) {};
                \node[circle, draw=mathematicagreen, fill=mathematicagreen, inner sep=0pt, minimum size=5pt, label=above:{\color{mathematicagreen} $\hightarget$}] (pH2) at (.95,0) {};
                \draw[thick, mathematicagreen] (pL2) -- (pH2);

                \node[diamond, draw=mathematicaorange, fill=mathematicaorange, inner sep=0pt, minimum size=4pt, label=above:{\color{mathematicaorange} $\lowthreshold$}] (pl21) at (.4,0) {};
                \node[diamond, draw=mathematicaorange, fill=mathematicaorange, inner sep=0pt, minimum size=4pt, label=above:{\color{mathematicaorange}$\highthreshold$}] (ph22) at (.75,0) {};
                \draw[thick,mathematicaorange] (pl21) -- (ph22);

                \node[cross out, draw=mathematicablue, fill=mathematicablue, very thick, inner sep=0pt, minimum size=2pt, label=below:{\textcolor{mathematicablue}{$\pi$}}] (pi) at (.6,0) {}; 
            \end{tikzpicture}
            \\
            \begin{tikzpicture}[xscale=9,yscale=9.5,>=stealth]
                \draw [-] (0,0) -- (1,0);
                \draw [dashed] (0,0) -- (-.1,0);
                \draw [dashed] (0,0) -- (1.1,0);

                \node[circle, draw=mathematicagreen, fill=mathematicagreen, inner sep=0pt, minimum size=5pt, label=above:{\color{mathematicagreen}$\lowtarget$}] (pL2) at (.05,0) {};
                \node[circle, draw=mathematicagreen, fill=mathematicagreen, inner sep=0pt, minimum size=5pt, label=above:{\color{mathematicagreen}$\hightarget$}] (pH2) at (.9,0) {};
                \draw[thick, mathematicagreen] (pL2) -- (pH2);


                \node[diamond, draw=mathematicaorange, fill=mathematicaorange, inner sep=0pt, minimum size=4pt, label=above:{\color{mathematicaorange}$\lowthreshold$}] (pl21) at (.4,0) {};
                \node[diamond, draw=mathematicaorange, fill=mathematicaorange, inner sep=0pt, minimum size=4pt, label=above:{\color{mathematicaorange}$\highthreshold$}] (ph22) at (.85,0) {};
                \draw[thick,mathematicaorange] (pl21) -- (ph22);

                \node[cross out, draw=mathematicablue, fill=mathematicablue, very thick, inner sep=0pt, minimum size=2pt, label=below:{\textcolor{mathematicablue}{$\pi$}}] (pi) at (.3,0) {};

            \end{tikzpicture}
            \\
            \begin{tikzpicture}[xscale=9,yscale=9.5,>=stealth]
                \draw [-] (0,0) -- (1,0);
                \draw [dashed] (0,0) -- (-.1,0);
                \draw [dashed] (0,0) -- (1.1,0);

                \node[circle, draw=mathematicagreen, fill=mathematicagreen, inner sep=0pt, minimum size=5pt, label=above:{\color{mathematicagreen}$\lowtarget$}] (pL2) at (.1,0) {};
                \node[circle, draw=mathematicagreen, fill=mathematicagreen, inner sep=0pt, minimum size=5pt, label=above:{\color{mathematicagreen}$\hightarget$}] (pH2) at (.9,0) {};
                \draw[thick, mathematicagreen] (pL2) -- (pH2);

                \node[diamond, draw=mathematicaorange, fill=mathematicaorange, inner sep=0pt, minimum size=4pt, label=above:{\color{mathematicaorange}$\lowthreshold$}] (pl21) at (.2,0) {};
                \node[diamond, draw=mathematicaorange, fill=mathematicaorange, inner sep=0pt, minimum size=4pt, label=below:{\color{myred}$\underline{p}$}] (ph21) at (.35,0) {};
                \draw[thick,mathematicaorange] (pl21) -- (ph21);

                \node[diamond, draw=mathematicaorange, fill=mathematicaorange, inner sep=0pt, minimum size=4pt, label=below:{\color{myred} $\overline{p}$}] (pl22) at (.65,0) {};
                \node[diamond, draw=mathematicaorange, fill=mathematicaorange, inner sep=0pt, minimum size=4pt, label=above:{\color{mathematicaorange}$\highthreshold$}] (ph22) at (.8,0) {};
                \draw[thick,mathematicaorange] (pl22) -- (ph22);


                \node[cross out, draw=mathematicablue, fill=mathematicablue, very thick, inner sep=0pt, minimum size=2pt, label=below:{\textcolor{mathematicablue}{$\pi$}}] (pi) at (.5,0) {};

            \end{tikzpicture}
            \caption{Possible cases from \hyref{Proposition}{prop:when-does-learning-stop}}
            {\footnotesize The green interval represents $\grossvalregionopt$ and the orange region is $\inforegionopt$ (locally around $\pi$). 
            \\
            Case $1$ leads to a belief cycle for all priors. Case $2$ leads to learning stopping for all priors. Case $3$ leads to a cycle for all priors outside of $(\underline{p},\overline{p})$ and no information acquisition otherwise.
            }
            \label{fig:long-run-belief-pairs}
        \end{figure}   

        The most interesting case is the last one, (iii.b): path dependence arises, whereas in all other cases, all beliefs lead to the same long run outcome. 
        When there is a "hole" around $\pi$, path dependence takes a stark form: no information is ever acquired if the DM started with uncertain enough beliefs; even though there is gross value of information at $\pi$, it is not sufficiently high to warrant paying the initial cost of cyclical information acquisition.  
        Even though this results from optimality, such path dependence might have broader welfare consequences, e.g. in the presence of externalities from information acquisition or if information inequality is undesirable.

    \subsection{The simplified stationary problem}\label{sec:stationary-payoffs-and-auxiliary-problem}

        \paragraph{Stationary payoffs} 
        Consider an arbitrary belief cycle $\beliefcycle=\bigl( (\lowtarget,\hightarget),(\lowthreshold,\highthreshold),(\lowtime,\hightime) \bigr)$ and the associated dynamics.
        Let $w^0,w^1$ the net values -- as in the modified recursive equation \eqref{eq:net-recursive-equation} -- for a DM in the cyclical dynamics starting from $\lowtarget,\hightarget$ respectively.
        By definition they must verify:
        \begin{equation}
        \label{eq:cyclical-payoffs}\tag{CP}
            \begin{aligned}
                w^\theta & = \int_0^{\tau^\theta} e^{-rt} f(q^\theta_t) dt + e^{-r \tau^\theta} \Biggl( \frac{\hightarget-p^\theta}{\hightarget-\lowtarget} w^0 + \frac{p^\theta-\lowtarget}{\hightarget-\lowtarget} w^1 - \kappa \Biggr) 
            \end{aligned}
        \end{equation}
        for $\theta=0,1$.
        This system has a unique solution $\bigl(w^1(\beliefcycle),w^0(\beliefcycle)\bigr)$ for any belief cycle $\beliefcycle$ (including degenerate belief cycles with the continuous extension $w^0(\beliefcycle)=w^1(\beliefcycle)=f(\pi)/r$ when $q^0=q^1=\pi$). 
        A further simplification comes from focusing on the induced value $w^\pi$ from jumping into the cycle $\beliefcycle$ from $\pi$:
        \begin{align*}
         w^\pi:= \frac{\hightarget-\pi}{\hightarget-\lowtarget} w^0 + \frac{\pi-\lowtarget}{\hightarget-\lowtarget} w^1.
         \end{align*}
         Note that the recursive equation above rewrites as:
         \begin{align*}
         w^\theta = \int_0^{\tau^\theta} e^{-rt} f(q^\theta_t)dt + e^{-r \tau^\theta} \Biggl( e^{-\lambda \tau^{\theta}} w^\theta + \bigl( 1-e^{-\lambda \tau^{\theta}} \bigr) w^\pi - \kappa \Biggr).
         \end{align*}
        This has convenient interpretation: continuation values along the cycle are weighted averages of the long-term average value $w^\pi$ and the one-sided conditional values $w^0$ and $w^1$, where weights depend on the time until the cycle is reset. 
        Plugging into the definition of $w^\pi$ for $\theta=0,1$ yields a single-variable equation that gives the following explicit expression:
        \begin{align*}
            w^\pi = \alpha \times \frac{\displaystyle \int_0^{\tau^1} e^{-rt} f(q^1_t)dt - e^{-r \tau^1} \kappa}{1-e^{-r \tau^1}} + (1-\alpha) \times \frac{\displaystyle \int_0^{\tau^0} e^{-rt} f(q^0_t)dt - e^{-r \tau^0} \kappa}{1-e^{-r\tau^0}},
        \end{align*}
        where:
        \begin{align*}
            \alpha:= \frac{\frac{\pi-q^0}{q^1-q^0} \frac{1-e^{-r \tau^1}}{1-e^{-(r+\lambda)\tau^1}}}{\frac{\pi-q^0}{q^1-q^0}  \frac{1-e^{-r\tau^1}}{1- e^{-(r+\lambda)\tau^1}} + \frac{q^1-\pi}{q^1-q^0} \frac{1-e^{-r\tau^0}}{1- e^{-(r+\lambda)\tau^0}}} \in [0,1].
        \end{align*}
        In other words, $w^\pi$ is a weighted average of the one-shot payoffs along each side of the cycle.
        From now on, denote $w^\pi(\beliefcycle)$ the long-run average value for any cycle $\beliefcycle$.
        
        \paragraph{Stationary problem} 
        Eventual stationary behavior in the \emph{dynamic} problem must coincide with the optimal belief cycle from optimizing over $w^\pi(\beliefcycle)$; this provides a tractable non-recursive approach to study properties of optimal long-run behavior.
        \begin{proposition}
            \label{prop:directly-stationary-problem}
              Any belief cycle $\beliefcycle$ which maximizes $w^\pi(\beliefcycle)$ is long-run optimal in the dynamic information acquisition problem, and conversely. 
              Furthermore:
              \begin{align*}
              w(\pi) = \max \Biggl\{  f(\pi), \hspace{.5em} \max_{\beliefcycle} w^\pi(\beliefcycle) - \kappa \Biggr\}
              \end{align*} 
              and belief $p$ is contained in the trap region $(\underline{p},\overline{p})$ if and only if: 
              \begin{align*}
              \max_{\beliefcycle} \frac{p-\lowtarget}{\hightarget-\lowtarget} w^1(\beliefcycle) + \frac{\hightarget-p}{\hightarget-\lowtarget} w^0(\beliefcycle) - \kappa < \underline{w}(p)
              \end{align*}
          \end{proposition}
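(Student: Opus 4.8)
The plan is to route all three claims through a single identity relating the value of information at the invariant belief $\pi$ to the stationary optimization, namely $\Cav[w](\pi)=\max_{\beliefcycle} w^\pi(\beliefcycle)$. The point of departure is that $\pi$ is the unique belief at which the deterministic drift vanishes ($p_t\equiv\pi$), so the net recursive equation \eqref{eq:net-recursive-equation} evaluated at $\pi$ collapses to $w(\pi)=\sup_{\tau\geq 0}\bigl[\tfrac{1-e^{-r\tau}}{r}f(\pi)+e^{-r\tau}(\Cav[w](\pi)-\kappa)\bigr]$; the bracket is affine in $e^{-r\tau}$ and hence monotone in $\tau$, which immediately gives $w(\pi)=\max\{\underline w(\pi),\Cav[w](\pi)-\kappa\}$ with $\underline w(\pi)=f(\pi)/r$ (since $f(\pi)=u(\pi)-rc(\pi)$). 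Thus the formula in Part (2) reduces entirely to proving the identity.

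To establish the identity I would prove two inequalities. For ``$\geq$'', fix any belief cycle $\beliefcycle$: performing at $\pi$ the binary experiment $F$ supported on $\{\lowtarget,\hightarget\}$ is Bayes-plausible (as $\lowtarget<\pi<\hightarget$) and following $\beliefcycle$ thereafter is a feasible continuation, so optimality gives $w(\lowtarget)\geq w^0(\beliefcycle)$ and $w(\hightarget)\geq w^1(\beliefcycle)$; hence $\int w\,dF=\tfrac{\hightarget-\pi}{\hightarget-\lowtarget}w(\lowtarget)+\tfrac{\pi-\lowtarget}{\hightarget-\lowtarget}w(\hightarget)\geq w^\pi(\beliefcycle)$, and $\Cav[w](\pi)\geq\int w\,dF$ by definition of the concave envelope. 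For ``$\leq$'', I invoke Theorem~\ref{thm:optimal-policies}: the optimal experiment at $\pi$ is supported on the two closest points $\lowtarget^*,\hightarget^*$ with $\CavDist w=0$, and Theorem~\ref{thm:long-run-dynamics} guarantees that the optimal continuation from each such target is exactly the belief cycle $\beliefcycle^*$ generated by these endpoints (the same support remains optimal throughout the experiment interval, so beliefs oscillate between the induced thresholds). The value function therefore coincides with the cyclical payoffs at the targets, $w(\lowtarget^*)=w^0(\beliefcycle^*)$ and $w(\hightarget^*)=w^1(\beliefcycle^*)$ --- here uniqueness of the value function (Theorem~\ref{thm:recursive-characterization}) is used to identify the optimal continuation value with the unique solution of the cyclical system \eqref{eq:cyclical-payoffs} --- whence $\Cav[w](\pi)=w^\pi(\beliefcycle^*)\leq\max_{\beliefcycle}w^\pi(\beliefcycle)$. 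This same argument simultaneously yields Part (1): $\beliefcycle^*$ attains the maximum of $w^\pi$ and is the long-run cycle, while conversely any $w^\pi$-maximizer achieves $\Cav[w](\pi)$ and, when $\pi\in\inforegionopt$, is an optimal continuation from $\pi$ and hence optimal in the dynamic problem.

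For the trap characterization in Part (3) I would take as given the structural description from Proposition~\ref{prop:when-does-learning-stop}(iii.b): the trap $(\underline p,\overline p)$ is an interval around $\pi$ flanked by information-acquisition intervals, and the unique experiment interval straddling $\pi$ has endpoints $\lowtarget^*,\hightarget^*$. Writing $E(p,\beliefcycle):=\tfrac{p-\lowtarget}{\hightarget-\lowtarget}w^1(\beliefcycle)+\tfrac{\hightarget-p}{\hightarget-\lowtarget}w^0(\beliefcycle)-\kappa$ for the value of performing the cycle experiment immediately at $p$ and then following $\beliefcycle$ (note $E(\pi,\beliefcycle)=w^\pi(\beliefcycle)-\kappa$, tying back to Part (2)), the key observation is that on the middle region the only two candidate behaviors are ``never learn'' (value $\underline w(p)$) and ``enter the cycle now'' (value $E(p,\beliefcycle^*)$): from strictly inside the trap the drift carries beliefs toward $\pi$ and away from every information-acquisition point, so waiting can never trigger an update, whereas just outside the trap an update occurs immediately. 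Hence $w(p)=\max\{\underline w(p),E(p,\beliefcycle^*)\}$ there, and since entering any other cycle is feasible but weakly dominated one has $E(p,\beliefcycle^*)=\max_{\beliefcycle}E(p,\beliefcycle)$. The trap is then exactly $\{p:\max_{\beliefcycle}E(p,\beliefcycle)<\underline w(p)\}$, matching the stated inequality; strictness on the open interval and equality at $\underline p,\overline p$ follow from continuity of $w$.

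I expect the main obstacle to be the ``$\leq$'' half of the core identity, specifically the claim that the value function equals the cyclical payoffs $w^\theta(\beliefcycle^*)$ at the cycle targets. This is where the geometric characterization of optimal experiments (the interval decomposition of $\grossvalregionopt$), the convergence theorem, and uniqueness of $v$ must be combined: one has to verify that the optimal continuation from a target is genuinely the stationary cycle --- including ruling out beneficial one-sided short-run experiments on the same side of $\pi$ --- so that the continuation value is \emph{precisely} the solution of \eqref{eq:cyclical-payoffs} rather than merely bounded below by it. A secondary technical point is handling non-uniqueness and the degenerate regimes of Proposition~\ref{prop:when-does-learning-stop}(i),(ii),(iii.a), where no profitable cycle exists, the maximum over cycles is attained only by degenerate cycles, and both the trap and the optimal-cycle set are empty; there the equivalences hold vacuously and the $w(\pi)$ formula reduces to $\underline w(\pi)$.
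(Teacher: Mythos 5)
Your proposal is correct and follows exactly the route the paper intends: the paper omits this proof as ``straightforward,'' giving only the one-line intuition that beliefs do not drift at $\pi$ and that (by Theorems~\ref{thm:optimal-policies} and~\ref{thm:long-run-dynamics}) any information acquisition from $\pi$ must lead to a cycle, so the problem at $\pi$ collapses to a static choice between waiting forever and jumping into the best feasible cycle --- which is precisely the identity $\Cav[w](\pi)=\max_{\beliefcycle}w^\pi(\beliefcycle)$ you formalize via the two-inequality squeeze. Your replacement of the stated $f(\pi)$ by $f(\pi)/r=\underline{w}(\pi)$ is the dimensionally correct reading (consistent with the paper's own footnote on degenerate cycles), and your explicit conditioning of the trap characterization on the structure from Proposition~\ref{prop:when-does-learning-stop}(iii.b) matches the level of rigor the paper itself adopts.
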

          The proof of this result is direct and omitted. 
          Since beliefs do not drift from $\pi$ and by \hyref{Theorem}{thm:optimal-policies} any information acquisition from $\pi$ must lead to a cycle, the dynamic problem at $\pi$ presents an effectively static choice between getting payoffs $u(\pi)$ forever or jumping directly into the best feasible cycle.
          It is also qualitatively compelling as a literal starting point since the invariant distribution $\pi$ captures the least informed an agent can be, making it a natural choice of prior.
          
        \paragraph{Symmetric problems}
        In problems invariant to relabeling the states (\emph{symmetric problems}), stationary payoffs simplify and make the mechanics of cyclical information acquisition transparent. 
        Content collapses to a one-dimensional statistic (the distance of beliefs from the long-run average) so values admit an explicit representation in terms of \emph{frequency} and \emph{quality}. 
        Symmetry is milder than it looks: common costs (entropy, variance, log-likelihood ratio) are symmetric, and solutions are invariant to affine normalizations of flow payoffs and costs.

        In symmetric problems $v(p)=v(1-p)$, implying $\lowtarget=1-\hightarget$, $\lowthreshold=1-\highthreshold$, and $\lowtime=\hightime$. 
        Cycles can rewrite as $\beliefcycle=(\targetsym,\thresholdsym,\freqsym)$ with $1/2\le\thresholdsym\le\targetsym$ and $\targetsym_{\freqsym}=\thresholdsym$: $\thresholdsym$ is the acquisition trigger uncertainty threshold, $\targetsym$ the certainty target, and $\freqsym$ the waiting time. The cyclical value then reduces to:
        \begin{align}
        \tag{SC}
        \label{eq:symmetrical-cyclical-payoffs}
        \SymmetricalCyclePayoffs(\beliefcycle)
        = \bigl(1-e^{-r\freqsym}\bigr)^{-1}\!\left(\int_0^{\freqsym} e^{-rt}\,u(\targetsym_t)\,dt - e^{-r\freqsym}\bigl(c(\targetsym)-c(\thresholdsym)-\kappa\bigr)\right).
        \end{align}
        Intuitively, each period of length $\freqsym$ accumulates discounted flow payoffs along the drift from $\targetsym$ to $\targetsym_{\freqsym}$; at the end, the DM pays $c(\targetsym)-c(\thresholdsym)-\kappa$ to reset certainty to $\targetsym$. Discounting by $e^{-r\freqsym}$ captures that period length—and hence the frequency–quality trade-off—is endogenous.

    \subsection{Comparing informativeness across environments}\label{sec:Frequency-Quality-CS}

        \paragraph{Informativeness comparisons}
        In general, there is no single or simple way to compare how much information is acquired via \emph{dynamic} processes. 
        However, the structure of solutions in this setting suggests an intuitive approach.
        There are three natural criteria that capture "more information" being acquired: static informativeness of experiments, uncertainty thresholds triggering information acquisition, and frequency.
        \begin{definition}\label{def:comparison-of-belief-cycles}
        Consider cycles $\beliefcycle=((\lowtarget,\hightarget),(\lowthreshold,\highthreshold),(\lowtime,\hightime))$ and $\tilde{\beliefcycle}=((\tilde{q}^0,\tilde{q}^1),(\tilde{p}^0,\tilde{p}^1),(\tilde{\tau}^0,\tilde{\tau}^1))$ under possibly different respective environment $(\lambda,\pi),(\tilde{\lambda},\tilde{\pi})$. Say that:
        \begin{enumerate}[label=(\roman*)]
          \item $\beliefcycle$ has more informative experiments than $\tilde{\beliefcycle}$ if: $(\tilde{q}^0,\tilde{q}^1) \subset (\lowtarget,\hightarget)$;
          \item $\beliefcycle$ has lower uncertainty thresholds for information acquisition than $\tilde{\beliefcycle}$ if: $[\tilde{p}^0,\tilde{p}^1] \subset [\lowthreshold,\highthreshold]$;
          \item $\beliefcycle$ has more frequent information acquisition than $\tilde{\beliefcycle}$ if: $\tau^i \leq \tilde{\tau}^i$ for $i=0,1$.
        \end{enumerate}
        \end{definition}
        The first notion is a mean-preserving spread condition: experiments conducted in one belief cycle are Blackwell more informative than in the other.
        These three notions induce distinct partial orders and may not agree in ranking two policies.
        In particular, note that when varying $\lambda$ or $\pi$ the frequency comparison is not implied by the comparisons in terms of beliefs. 
        This definition also illustrates that the cycle decomposition can be used to define properties tailored to applications -- e.g. when asymmetric shifts have natural interpretations.
        In symmetric problems, "how much information" is acquired condenses to three quantities ($\targetsym$, $\thresholdsym$, $\freqsym$) and each notion of informativeness becomes a complete order.
        This makes analyzing the frequency-quality tradeoff more tractable. 
        Nonetheless, there are few general comparative statics that can be obtained; rich counter-examples can be found for many expected regularities (see an illustration below for $\lambda$) even in simple cases.

        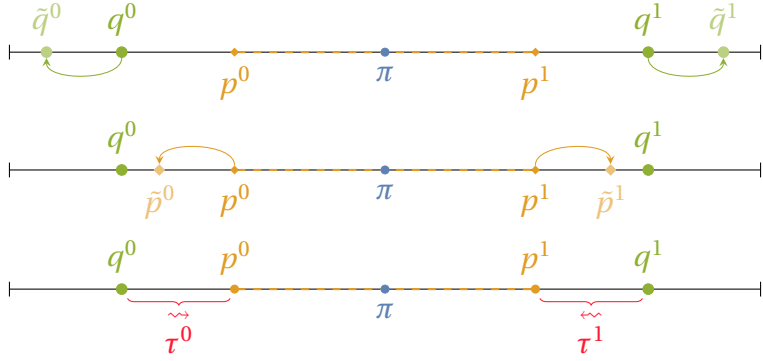
\begin{figure}[h!]
        \centering
           \begin{tikzpicture}[xscale=10,yscale=8,>=stealth]
                \draw [|-|] (0,0) -- (1,0);    

                \node[circle, draw=mathematicagreen, fill=mathematicagreen, inner sep=0pt, minimum size=4pt, label=above:{\textcolor{mathematicagreen}{$\lowtarget$}}] (pLstar) at (.15,0) {};
                \node[circle, draw=mathematicagreen, fill=mathematicagreen, inner sep=0pt, minimum size=4pt, label=above:{\textcolor{mathematicagreen}{$\hightarget$}}] (pHstar) at (.85,0) {};

                \node[diamond, draw=mathematicaorange, fill=mathematicaorange, inner sep=0pt, minimum size=3pt, label=below:{\textcolor{mathematicaorange}{$\lowthreshold$}}] (plstar) at (.3,0) {};
                \node[diamond, draw=mathematicaorange, fill=mathematicaorange, inner sep=0pt, minimum size=3pt, label=below:{\textcolor{mathematicaorange}{$\highthreshold$}}] (phstar) at (.7,0) {};

                \draw [thick, dashed, mathematicaorange] (phstar) -- (plstar); 

                \node[circle, draw=mathematicablue, fill=mathematicablue, inner sep=0pt, minimum size=3pt, label=below:{\textcolor{mathematicablue}{$\pi$}}] (pi) at (.5,0) {};

                \node[circle, draw=mathematicagreen!60!white, fill=mathematicagreen!60!white, inner sep=0pt, minimum size=4pt, label=above:{\textcolor{mathematicagreen!60!white}{$\tilde{q}^0$}}] (tildepLstar) at (.05,0) {};
                \node[circle, draw=mathematicagreen!60!white, fill=mathematicagreen!60!white, inner sep=0pt, minimum size=4pt, label=above:{\textcolor{mathematicagreen!60!white}{$\tilde{q}^1$}}] (tildepHstar) at (.95,0) {};
                \draw [->,mathematicagreen] (pLstar) edge [bend left=90] (tildepLstar);
                \draw [->,mathematicagreen] (pHstar) edge [bend left=-90] (tildepHstar);
                \end{tikzpicture}
            \\
             \begin{tikzpicture}[xscale=10,yscale=8,>=stealth]

                \draw [|-|] (0,0) -- (1,0);    

                \node[circle, draw=mathematicagreen, fill=mathematicagreen, inner sep=0pt, minimum size=4pt, label=above:{\textcolor{mathematicagreen}{$\lowtarget$}}] (pLstar) at (.15,0) {};
                \node[circle, draw=mathematicagreen, fill=mathematicagreen, inner sep=0pt, minimum size=4pt, label=above:{\textcolor{mathematicagreen}{$\hightarget$}}] (pHstar) at (.85,0) {};
        
                \node[diamond, draw=mathematicaorange, fill=mathematicaorange, inner sep=0pt, minimum size=3pt, label=below:{\textcolor{mathematicaorange}{$\lowthreshold$}}] (plstar) at (.3,0) {};
                \node[diamond, draw=mathematicaorange, fill=mathematicaorange, inner sep=0pt, minimum size=3pt, label=below:{\textcolor{mathematicaorange}{$\highthreshold$}}] (phstar) at (.7,0) {};

                \draw [thick, dashed, mathematicaorange] (phstar) -- (plstar); 

                \node[circle, draw=mathematicablue, fill=mathematicablue, inner sep=0pt, minimum size=3pt, label=below:{\textcolor{mathematicablue}{$\pi$}}] (pi) at (.5,0) {};

                \node[diamond, draw=mathematicaorange!60!white, fill=mathematicaorange!60!white, inner sep=0pt, minimum size=4pt, label=below:{\textcolor{mathematicaorange!60!white}{$\tilde{p}^0$}}] (tildeplstar) at (.2,0) {};
                \node[diamond, draw=mathematicaorange!60!white, fill=mathematicaorange!60!white, inner sep=0pt, minimum size=4pt, label=below:{\textcolor{mathematicaorange!60!white}{$\tilde{p}^1$}}] (tildephstar) at (.8,0) {};
                \draw [->,mathematicaorange] (plstar) edge [bend left=-90] (tildeplstar);
                \draw [->,mathematicaorange] (phstar) edge [bend left=90] (tildephstar);

                \end{tikzpicture}
            \\
             \begin{tikzpicture}[xscale=10,yscale=8,>=stealth]

                \draw [|-|] (0,0) -- (1,0);    

                \node[circle, draw=mathematicagreen, fill=mathematicagreen, inner sep=0pt, minimum size=4pt, label=above:{\textcolor{mathematicagreen}{$\lowtarget$}}] (pLstar) at (.15,0) {};
                \node[circle, draw=mathematicagreen, fill=mathematicagreen, inner sep=0pt, minimum size=4pt, label=above:{\textcolor{mathematicagreen}{$\hightarget$}}] (pHstar) at (.85,0) {};

                \node[circle, draw=mathematicaorange, fill=mathematicaorange, inner sep=0pt, minimum size=3pt, label=above:{\textcolor{mathematicaorange}{$\lowthreshold$}}] (plstar) at (.3,0) {};
                \node[circle, draw=mathematicaorange, fill=mathematicaorange, inner sep=0pt, minimum size=3pt, label=above:{\textcolor{mathematicaorange}{$\highthreshold$}}] (phstar) at (.7,0) {};

                \draw [thick, dashed, mathematicaorange] (phstar) -- (plstar); 

                \node[circle, draw=mathematicablue, fill=mathematicablue, inner sep=0pt, minimum size=3pt, label=below:{\textcolor{mathematicablue}{$\pi$}}] (pi) at (.5,0) {};

                \draw [decorate,decoration = {brace,raise = 3.5pt,mirror}, myred] (pLstar) --  (plstar) node [pos=.5, label=below:{\textcolor{myred}{$\overset{\rightsquigarrow}{\lowtime}$}}] {} ; 
                \draw [decorate,decoration = {brace,raise = 3.5pt,mirror}, myred] (phstar) --  (pHstar) node [pos=.5, label=below:{\textcolor{myred}{$\overset{\leftsquigarrow}{\hightime}$}}] {} ; 
                \end{tikzpicture}
            \caption{Visualizing the three notions of "more information acquisition in the long run"}
            \label{fig:information-comparison-long-run}
        \end{figure}
    
        \paragraph{Volatility and informativeness}\label{sec:CS-volatility}
        Two countervailing forces arise as volatility $\lambda$ increases.
        There is \emph{more to learn} as the environment changes faster, pushing toward more frequent tracking; information also \emph{becomes outdated faster}, so each update pays off for a shorter time, dampening tracking incentives. 
        Neither force dominates globally.
        \begin{proposition}
        \label{prop:CS-lambda-tau}
          Consider a symmetric problem. Let $\tau(\lambda)$ the time between moments of information acquisition as a function $\lambda>0$, fixing other parameters. Then:
          \begin{enumerate}
            \item $\lim_{\lambda \rightarrow 0} \tau(\lambda) = \infty$; furthermore for $\lambda$ close enough to $0$, $\tau$ is decreasing.
            \item There exists $\overline{\lambda}$ such that for all $\lambda \geq \overline \lambda$, $\tau(\lambda) = \infty$; furthermore for $\lambda$ smaller but close enough to $\overline{\lambda}$, $\tau$ is increasing.
          \end{enumerate}
        \end{proposition}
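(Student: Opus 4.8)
The plan is to work entirely within the reduced symmetric stationary problem of \hyref{Section}{sec:stationary-payoffs-and-auxiliary-problem}. By \hyref{Proposition}{prop:directly-stationary-problem} and the symmetric reduction, $\tau(\lambda)$ is the waiting time of the cycle $\beliefcycle=(\targetsym,\thresholdsym,\freqsym)$ maximizing $\SymmetricalCyclePayoffs(\beliefcycle)$ in \eqref{eq:symmetrical-cyclical-payoffs}, with $\pi=\tfrac12$ and the drift constraint $\thresholdsym=\tfrac12+e^{-\lambda\freqsym}(\targetsym-\tfrac12)$ linking the two free variables; I take $(\targetsym,\freqsym)$ as controls with $\thresholdsym$ determined. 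Writing $a:=\targetsym-\tfrac12$ and $\sigma:=\lambda\freqsym$, so that $\thresholdsym-\tfrac12=e^{-\sigma}a$ and $e^{-r\freqsym}=e^{-r\sigma/\lambda}$, makes the two limits transparent: $\lambda\to0$ is a slow-depreciation regime in which $\sigma$ stays bounded while $\freqsym=\sigma/\lambda$ diverges, and $\lambda\to\infty$ is a fast-depreciation regime in which informational gains collapse. I use the gross-benefit function $g(p):=u(p)-u(\tfrac12)\ge 0$, which vanishes only at $\tfrac12$ and increases in $|p-\tfrac12|$ by convexity and symmetry of $u$; learning is optimal from $\pi$ exactly when the maximal cycle profit $\Delta(\lambda):=\max_{\targetsym,\freqsym}\bigl[\SymmetricalCyclePayoffs(\beliefcycle)-\kappa-(c(\targetsym)-c(\tfrac12))\bigr]-u(\tfrac12)/r$ is strictly positive.

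For Part 1 I would first compute the first-order conditions. Differentiating $\SymmetricalCyclePayoffs$ in $\freqsym$ yields, after simplification using $f(p)=u(p)-rc(p)+\lambda(\pi-p)c'(p)$, the stopping condition $\SymmetricalCyclePayoffs(\beliefcycle)=f(\thresholdsym)/r+c(\targetsym)+\kappa$, and the condition in $\targetsym$ gives a companion equation. Sending $\lambda\to0$: since $e^{-r\freqsym}\to0$ whenever $\sigma$ is bounded below, $\SymmetricalCyclePayoffs\to u(\targetsym)/r$ and the two conditions degenerate to $u'(\targetsym_0)=rc'(\targetsym_0)$ together with $\tfrac1r\bigl(u(\targetsym_0)-u(\thresholdsym_0)\bigr)=c(\targetsym_0)-c(\thresholdsym_0)+\kappa$, a regular (triangular) system with an interior solution $\tfrac12<\thresholdsym_0<\targetsym_0$ provided $\kappa$ is not so large as to preclude learning. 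The implicit function theorem then gives a $C^1$ selection $(\targetsym^*(\lambda),\thresholdsym^*(\lambda))$ with bounded derivatives near $0$, so $\sigma^*(\lambda)=\log\frac{\targetsym^*(\lambda)-1/2}{\thresholdsym^*(\lambda)-1/2}\to\sigma_0:=\log\frac{\targetsym_0-1/2}{\thresholdsym_0-1/2}>0$. Consequently $\tau(\lambda)=\sigma^*(\lambda)/\lambda\to\infty$, and since $\frac{d\tau}{d\lambda}=\frac{\sigma^{*\prime}(\lambda)}{\lambda}-\frac{\sigma^*(\lambda)}{\lambda^2}$, the second term dominates for small $\lambda$ (as $\sigma^*\to\sigma_0>0$ with bounded derivative), making $\tau$ decreasing. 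The delicate points are ruling out boundary limits $\thresholdsym^*\to\tfrac12$ or $\targetsym^*\to\tfrac12$, which I exclude by noting that collapsing the threshold or target forces either vanishing gain against a fixed $\kappa$ or infinitely frequent payment of $\kappa$, both dominated, and verifying non-degeneracy of the limiting Jacobian to license the implicit function theorem.

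For Part 2 the existence of $\overline\lambda$ follows from a uniform upper bound on cycle profit. Extending the flow integral to $[0,\infty)$ and changing variables $z=e^{-\lambda t}a$ gives $\int_0^{\freqsym}e^{-rt}g(\targetsym_t)\,dt\le\int_0^\infty g(\targetsym_t)\,dt=\tfrac1\lambda\int_0^{a}\frac{g(1/2+z)}{z}\,dz\le \bar G/\lambda$ for a constant $\bar G$ uniform in $\targetsym$. Splitting on cycle length handles the denominator: for $\freqsym$ below a positive threshold the present value of repeated fixed costs $\frac{e^{-r\freqsym}\kappa}{1-e^{-r\freqsym}}$ blows up and already forces $\Delta<0$ for every $\lambda$; for longer cycles $1-e^{-r\freqsym}\ge m_1>0$, so the gross annuity is at most $\bar G/(m_1\lambda)$ and $\Delta\le \bar G/(m_1\lambda)-\kappa<0$ once $\lambda$ is large. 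Hence $\sup_{\targetsym,\freqsym}\Delta(\lambda)<0$ for all large $\lambda$; taking $\overline\lambda:=\inf\{L:\text{learning is suboptimal for all }\lambda\ge L\}$ gives a finite threshold with $\tau(\lambda)=\infty$ on $[\overline\lambda,\infty)$.

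The monotonicity of $\tau$ just below $\overline\lambda$ is the main obstacle. At $\overline\lambda$ the maximal profit equals zero, so the optimal cycle $(\targetsym^\dagger,\freqsym^\dagger)$ satisfies both first-order conditions together with the zero-profit condition $\Delta(\overline\lambda,\targetsym^\dagger,\freqsym^\dagger)=0$; these three equations pin down the threshold cycle and $\overline\lambda$. I would then apply the implicit function theorem to the two first-order conditions to express $d\tau^*/d\lambda$ as a ratio of Jacobian determinants and determine its sign on $(\overline\lambda-\epsilon,\overline\lambda)$. The difficulty is that, unlike near $0$, there is no clean leading-order structure fixing the sign: establishing $d\tau^*/d\lambda>0$ — equivalently, that as volatility rises toward the point where learning ceases the DM acquires information ever less frequently — requires exploiting how the zero-profit boundary interacts with the envelope derivative $\frac{d}{d\lambda}\Delta(\lambda)=\partial_\lambda\Delta|_{*}$ and ruling out the competing depreciation force that pushes $\freqsym$ down. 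I expect to resolve this by showing that near the boundary the binding consideration is profitability rather than depreciation, so the optimal cycle stretches $\freqsym$ toward the largest value still consistent with nonnegative profit and $\freqsym^*$ rises as $\lambda\uparrow\overline\lambda$, with the abrupt switch to no learning occurring exactly at $\overline\lambda$.
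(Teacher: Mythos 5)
Your framing is the right one — the paper's own route to this result also runs through the reduced stationary problem of \hyref{Proposition}{prop:directly-stationary-problem} and the symmetric cycle payoff \eqref{eq:symmetrical-cyclical-payoffs} — and two of your three pieces hold up. The Part 1 scaling argument is sound: the stopping condition $\SymmetricalCyclePayoffs = f(\thresholdsym)/r + c(\targetsym) + \kappa$ and the limiting system $u'(\targetsym_0)=rc'(\targetsym_0)$, $\tfrac1r(u(\targetsym_0)-u(\thresholdsym_0)) = c(\targetsym_0)-c(\thresholdsym_0)+\kappa$ are correct (the second forces $\targetsym_0>\thresholdsym_0$ because $\kappa>0$, which is exactly what keeps $\sigma_0>0$), and $\tau=\sigma^*(\lambda)/\lambda$ with $\sigma^*\to\sigma_0>0$ delivers both claims — though you should state explicitly that the claim is vacuous (or needs $\tau\equiv\infty$ read as weakly decreasing) when learning is unprofitable even at $\lambda=0^+$. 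The change of variables $z=e^{-\lambda t}(\targetsym-\tfrac12)$ giving the uniform bound $\bar G/\lambda$ on the gross gain per cycle is the right key estimate for the existence of $\overline\lambda$; it is in fact cleaner if you first note that, by \hyref{Proposition}{prop:directly-stationary-problem}, learning occurs iff some cycle satisfies $\int_0^{\freqsym}e^{-rt}\bigl(f(\targetsym_t)-f(\tfrac12)\bigr)dt>\kappa$, since $f(\targetsym_t)-f(\tfrac12)\leq u(\targetsym_t)-u(\tfrac12)$ and the bound then applies directly without annuitizing or splitting cases on $\freqsym$.

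The genuine gap is the monotonicity of $\tau$ just below $\overline\lambda$, which you acknowledge but do not close: "I expect to resolve this by showing\dots" is a statement of intent, and the IFT-on-Jacobians route you sketch has, as you concede, no identified sign structure, so as written the second half of Part 2 is unproven. There is a more direct handle you are missing. Writing the stationary objective in net form as $w^\pi=\bigl(1-e^{-r\freqsym}\bigr)^{-1}\bigl(\int_0^{\freqsym}e^{-rt}f(\targetsym_t)dt-e^{-r\freqsym}\kappa\bigr)$, the first-order condition in $\freqsym$ reads $f(\thresholdsym^*)=r w^\pi-r\kappa$; combining this with the threshold condition $w^\pi-\kappa=f(\tfrac12)/r$ that defines $\overline\lambda$ yields $f(\thresholdsym^*)\to f(\tfrac12)$ as $\lambda\uparrow\overline\lambda$. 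Since $f(p)-f(\tfrac12)>0$ for $p$ in a punctured right-neighborhood of $\tfrac12$ (the gain term $u(p)-u(\tfrac12)$ dominates the $O\bigl((p-\tfrac12)^2\bigr)$ cost terms near $\tfrac12$), this forces $\thresholdsym^*$ to converge either to $\tfrac12$ or to an outer root of $f(\cdot)=f(\tfrac12)$; in the former case $\freqsym^*=\tfrac1\lambda\log\tfrac{\targetsym^*-1/2}{\thresholdsym^*-1/2}\to\infty$, which (with the regularity you already invoke for the IFT) gives the increasing behavior. What remains — and what your proposal would still need — is ruling out convergence of $\thresholdsym^*$ to an outer root with $\freqsym^*$ bounded, which is where the actual work lies; without that step the proof of Part 2's second claim is incomplete.
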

        For low levels of volatility (nearly persistent states) the need for more information dominates. 
        At the other extreme, the cost of tracking a highly volatile state becomes prohibitive: information degrades too fast, so the DM eventually saturates their capacity to profitably track the state and disengages.
        Sharper comparative statics are fragile because volatility acts through several opposing channels. 
        Fix a candidate cycle $(p,q)$: increasing $\lambda$ (i) shortens the period, raising discounted costs per unit time and shifting the path toward lower flow payoffs, yet (ii) makes cycles more frequent and lowers the “discrete” discount $1-e^{-r\tau}$, which can raise value; and (iii) endogenously shifts $(p,q)$, altering both quality and frequency. 
        Even if $p$ and $q$ move co-monotonically, this results in behavior with few restrictions -- for instance, examples can be produced where the frequency of information acquisition \emph{oscillates} as the world becomes more volatile.
        Cyclicality gives a clear descriptive handle on information acquisition, yet there is enough richness in the model to produce wide variations in observed patterns even within well-behaved classes of cost and payoff functions.

\section{Vanishing fixed costs}\label{sec:Vanishing-Fixed-Costs}

    I now study the behavior of optimal information acquisition in the limit as fixed costs vanish.
    This approximates "small" fixed cost, and also provides a tractable method and intuitive selection criterion for the limit case of continuous information acquisition.
    Intuition suggests that, as $\kappa$ goes to zero, incentives to wait for undertaking locally profitable information acquisition vanish.
    Two difficulties arise in formalizing this intuition: first, directly characterizing the limit of information acquisition \emph{policies} is challenging;\footnote{
        The nested intervals structure of solutions from \hyref{Theorem}{thm:optimal-policies} turns out not to be a tractable object as one lacks a reliable way to "track" intervals: as $\kappa$ changes, intervals in $\grossvalregionopt$ and $\inforegionopt$ could appear or disappear, merge or split in ways that are driven by the complex interplay of payoffs and costs along the belief space. 
        It is also useful to note that no claim can be made about \emph{monotonic} convergence; indeed the behavior of the belief thresholds along the path of convergence can be fairly irregular.
    } 
    second, it is a priori unclear how to define the "$\kappa=0$ limit problem" to formalize approximation arguments.
    This issue has a natural solution, deriving from a simple non-recursive reformulation of the problem, which in turn delivers tight results.
    Limit optimal policy take a simple "wait-or-confirm" form (\hyref{Theorem}{thm:optimality-and-convergence-to-wait-or-confirm}), and long run optimal dynamics have an explicit solution in terms of the concave envelope of the virtual net flow payoff (\hyref{Theorem}{thm:explicit-stationary-policy-no-fixed-cost}).

    \subsection{The limit problem}
    \label{sec:info-acquisition-with-and-without-fixed-costs}

        The key preliminary steps of the analysis consist in, first, recasting the problem as choice over \emph{belief processes}, then rewriting the resulting problem in a \emph{non-recursive} form in terms of the virtual flow payoff, which proves more suitable to consider the vanishing fixed cost limit.

        \paragraph{Belief processes}
        With no fixed cost, it may be desirable to acquire information not just at discrete points but in continuous increments.
        When allowing for continuous information acquisition, one cannot simply substitute existing analogous cost specifications based on infinitesimal information flows: those are not well defined for discrete information acquisition and vice versa.
        Nonetheless, every feasible belief process can be approximated arbitrarily well by a process where information is only acquired countably many times and costs admit an essentially unique natural extension to the whole space.
        To formalize this idea, define the set of possible belief processes, where a conditional expectation condition captures the dynamic version of Bayes plausibility with a changing state:\footnote{
          A slightly more proper but less intuitive definition can be found in \hyref{Appendix}{sec:appendix:path-measures-formalism}: it expresses the Bayes plausibility constraint in integral form as a semimartingale decomposition instead of referring to conditional expectations.
        }
        \begin{align*}
            \beliefprocesses(p) := \biggl\{ (P_t)_{t \geq 0} \text{ càdlàg process in $[0,1]$ } \biggm| \Esp[P_0]=p, \, \forall t,s \geq 0, \hspace{.1em} \Esp[P_{t+s}|P_t] \overset{a.s.}{=} e^{-\lambda s} P_t + (1-e^{-\lambda s}) \pi  \biggr\}.
        \end{align*}    
        Denote $\beliefprocessesdiscrete(p)$ the subset of belief processes that can be generated by a discrete information acquisition policy $\{\tau_i,F_i\}$ with $P_{0^-}=p$.
        The expected payoff induced by belief process $ P \in \beliefprocessesdiscrete(p)$ can be rewritten as:
        \begin{align*}
            \mathfrak{J}(P) := \Esp \Biggl[ \int_0^\infty e^{-rt} u(P_t) dt - \sum_{i \geq 0} e^{-r\tau_i} \bigl( c(P_{\tau_i^P})-c(P_{\tau_i^{P-}}) + \kappa \bigr) \Biggr].
        \end{align*}
        Notice that the uniform posterior separability of costs associated with the process $P \in \beliefprocessesdiscrete(p)$ allows this rewriting as an expectation over paths.
        From now on, denote $v_\kappa$ the value function, $w_\kappa:=v_\kappa-c$ the corresponding net value function and $P^\kappa$ the optimal belief process for $\kappa > 0$.
        By definition, $v_\kappa(p) = \max_{P \in \beliefprocessesdiscrete(p)} \mathfrak{J}(P) = \mathfrak{J}(P^\kappa)$.

        \paragraph{Virtual flow payoff reformulation}
        Recall the "virtual net flow payoff" $f$ is:
            \begin{align*}
                f(p) := u(p) - r c(p) + \lambda (\pi-p) c'(p)
            \end{align*}
        As previously, we can provide a "net" reformulation of the problem that subsumes all costs into $f$; this time, however, it is given in a non-recursive form.
        \begin{lemma}
        \label{lemma:w0-virtual-flow-form}
            The net value function $w_\kappa$ converges pointwise to $w_0$, defined by:
            \begin{align*}
                w_0(p) = \max_{P \in \beliefprocesses(p)} \Esp \Biggl[ \int_0^\infty e^{-rt} f(P_t) dt \Biggr],
            \end{align*}
            and any sequence of maximizers of $w_\kappa$ converges to a maximizer of $w_0$.
        \end{lemma} 
        \hyref{Lemma}{lemma:w0-virtual-flow-form} nests several more general results (see \hyref{Appendix}{sec:appendix:path-measures-formalism} for details). 
        The problem can generally be reformulated in the "net" form:
        \begin{align*}
            \sup_{P \in \beliefprocessesdiscrete} \mathfrak{J}(P) = \sup_{P \in \beliefprocessesdiscrete} \Esp \Biggl[ \int_0^\infty e^{-rt} f(P_t) dt - \sum_{i} e^{-r\tau_i^P}  \kappa \Biggr]
        \end{align*}
        This suggests a natural way to extend payoffs and costs to arbitrary belief processes; consistency requires establishing that discrete information acquisition can approximate arbitrarily well the \emph{performance} of any belief process.
        When $\kappa>0$, discrete information acquisition is without loss.
        When $\kappa=0$, this gives:
        \begin{align*}
            \sup_{P \in \beliefprocessesdiscrete} \Esp \Biggl[ \int_0^\infty e^{-rt} f(P_t) dt \Biggr] = \max_{P \in \beliefprocesses} \Esp \Biggl[ \int_0^\infty e^{-rt} f(P_t) dt \Biggr]
        \end{align*}
        The last step establishes that the limit of solutions as fixed costs vanish is a solution of the limit $\kappa=0$ problem.
        This is despite the fact that the expect sum of discounted fixed costs is in general \emph{not} continuous at $\kappa=0$. 
        Intuitively, costs may explode if information acquisition becomes infinitely frequent, even as $\kappa$ goes to zero, but the supremum makes such processes inadmissible.
      
        \paragraph{Remark}
        \hyref{Lemma}{lemma:w0-virtual-flow-form} equivalently shows that there is a natural extension of the primitive "discrete" costs to arbitrary belief processes -- see \hyref{Appendix}{sec:appendix:path-measures-formalism} for details.
        When the fixed cost is non-zero, continuous information acquisition will generate infinite costs, hence the initial formulation of the problem is indeed without loss.   
        In the subcase when the process features only continuous information acquisition, it can be shown the extended cost function coincides with technologies based on quantifying infinitesimal information flows -- as in for instance \cite{zhong2022optimal,hebert2023rational}. 
        \citet{georgiadisharris2023} expresses a similarly general capacity constraint over information flows for arbitrary belief martingales, under which continuous learning is optimal but discrete learning has well-defined (infinite) costs.

    \subsection{Optimal information acquisition with vanishing fixed costs}
    \label{sec:optimal-information-acquisition-with-vanishing-fixed-cost}

        When the fixed cost becomes negligible, the gap between \emph{gross} and \emph{net} interim value of information, which loosely captures the incentive to \emph{wait}, vanishes. 
        This suggests that as the fixed cost vanishes so do incentives for waiting: if there are target beliefs that the DM could profitably jump to, they should do so immediately.
        This suggests that the gap between target beliefs (the support of optimal experiments) and threshold beliefs (closest belief triggering information acquisition) would disappear.
        To formalize this idea in the language of belief processes, I introduce the class of "wait-or-confirm" policies.

        \paragraph{Wait-or-confirm policies.} 
        Informally, a "wait-or-confirm" belief process drifts until it hits the boundary of an interval in the information acquisition region, then stays at that belief until it jumps over that interval.
        Since the rate of arrival of jumps is pinned down by those two beliefs and the compensated martingale condition, it is natural to parameterize the distribution of the whole process solely in terms of the open region where information is acquired immediately -- which the belief process only possibly "jumps out of" at the initial time and never enters.

        To formalize this in a synthetic fashion, divide the boundary $\partial \mathcal{I}$ of any open set $\mathcal{I}$ into points from which drifting towards $\pi$ leads either \emph{in} or \emph{out} of the set $\mathcal{I}$:
        \begin{align*}
            \inboundary \mathcal{I} &:= \bigl\{ p \in \partial \mathcal{I} \bigm| \exists \varepsilon>0, b_\pi(p,\varepsilon) \subset \mathcal{I} \bigr\},
            \\ 
            \outboundary \mathcal{I} &:= \partial \mathcal{I} \setminus \inboundary \mathcal{I} = \bigl\{ p \in \partial \mathcal{I} \bigm| \forall \varepsilon>0, \exists q \in b_\pi(p,\varepsilon) \setminus \mathcal{I} \bigr\};
        \end{align*}
        where $b_\pi(p,\varepsilon)$ denotes the (open) "$\pi$-neighborhood" of size $\varepsilon$ at $p$: $b_\pi(p,\varepsilon):= (p,p+\varepsilon)$ if $p < \pi$ and $(p-\varepsilon,p)$ if $p > \pi$.
        \begin{definition}[Wait-or-confirm policies]
            For any open set $\mathcal{I} \subset [0,1]$ and any $p \in [0,1]$, denote $\waitorconfirmprocess_p[\mathcal{I}]$ the distribution of the belief process $P \in \beliefprocesses(p)$ such that:
            \begin{itemize}
                \item (Initial jump) If $p \in \mathcal{I}$, $P$ is distributed according to the only binary experiment in $\BayesPlausible(p)$ supported over the two closest points from $p$ not in $\mathcal{I}$, otherwise $P_0=p$ a.s.
                \item (Waiting beliefs) At all $p \in \mathcal{I}^c \cup \outboundary \mathcal{I}$, $P$ evolves according to no information acquisition, i.e it drifts deterministically with $dP_t = \lambda(\pi-P_t) dt$
                \item (Confirmation beliefs) At all $p \in \inboundary \mathcal{I}$, $P$ stays at $p$ (\emph{confirming}) until some exponentially distributed time, at which it jumps to the closest belief $q(p)$ in the direction of $\pi$ that is not in $\mathcal{I}$.
            \end{itemize}
        \end{definition}
        A belief process $P \sim \waitorconfirmprocess_p[\mathcal{I}]$ is called a \emph{wait-or-confirm (information acquisition) policy} with initial belief $p$, and $\mathcal{I}$ is called its (instantaneous) information acquisition region.
        The definition is well-posed since the open set $\mathcal{I}$ is decomposable into a countable collection of open intervals.
        At "confirmation" beliefs, the compensated martingale condition of the belief process pins down the rate of arrival of the exponential jumps to be $\lambda \frac{\pi-p}{q(p)-p}$.

        \paragraph{Optimality and convergence.} 
        The first main result in this section characterizes an optimal wait-or-confirm policy in terms of the value function $w_0$ and establishes convergence.
        \begin{theorem}
        \label{thm:optimality-and-convergence-to-wait-or-confirm}
            The net value function  $w_0$ in the problem with $\kappa=0$ is concave. Furthermore:
            \begin{enumerate}
                \item \textit{(optimality)} The wait-or-confirm process $P \sim \waitorconfirmprocess_p \bigl[ \interior L_0 \bigr]$ is optimal, where:
                    \[
                    L_0:= \biggl\{ p \in [0,1] \biggm| \exists \varepsilon>0, \exists q \in b_\pi(p,\varepsilon), w_0(q) = w_0(p) + d_\pi w_0(p) (p-q) \biggr\};
                    \]
                and $d_\pi$ denotes the directional derivative in the direction of $\pi$.
                \item \textit{(convergence)} Assume $P^\kappa \rightarrow P$, then: 
                    \[
                        P \sim \waitorconfirmprocess_p \biggl[ \liminf_{\kappa \downarrow 0} \mathcal{I}_\kappa \biggr]
                    \]
                \item \textit{(relation and local uniqueness)} It is uniquely optimal to wait to acquire information at all beliefs such that $w_0$ is strictly concave is some $\pi$-neighborhood and $\liminf_{\kappa \downarrow 0} \mathcal{I}_\kappa \subset \interior L_0$.
            \end{enumerate}
        \end{theorem}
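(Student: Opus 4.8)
I would first establish concavity of $w_0$ directly from the virtual-flow representation in \hyref{Lemma}{lemma:w0-virtual-flow-form}. The key observation is that the feasible set $\beliefprocesses(p)$ is closed under \emph{initial-split mixtures}: given $P^0\in\beliefprocesses(p_0)$, $P^1\in\beliefprocesses(p_1)$, and $\alpha\in[0,1]$ with $p=\alpha p_1+(1-\alpha)p_0$, the process that performs at $t=0$ the binary experiment sending $p$ to $p_1$ with probability $\alpha$ and to $p_0$ otherwise, and thereafter follows the realized $P^i$, again lies in $\beliefprocesses(p)$. Feasibility is immediate: $\Esp[P_0]=p$, and the compensated-martingale identity $\Esp[P_{t+s}\mid P_t]=e^{-\lambda s}P_t+(1-e^{-\lambda s})\pi$ is \emph{linear} in $P_t$, hence preserved under mixing of processes that each satisfy it. Since $\Esp[\int_0^\infty e^{-rt}f(P_t)\,dt]$ is linear in the mixture (the time-$0$ atom has Lebesgue measure zero in the integral), the mixed process attains $\alpha w_0(p_1)+(1-\alpha)w_0(p_0)$, so $w_0(p)\geq\alpha w_0(p_1)+(1-\alpha)w_0(p_0)$. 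As optima are attained over the compact set $\beliefprocesses(p)$, this gives concavity.

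\textbf{Optimality.} For the wait-or-confirm policy I would run a verification argument around the candidate $M_t:=e^{-rt}w_0(P_t)+\int_0^t e^{-rs}f(P_s)\,ds$. For an \emph{arbitrary} admissible $P$, the constraint fixes $dP_t=\lambda(\pi-P_t)\,dt+dN_t$ with $N$ a martingale, so all information-driven motion is mean-zero; applying the Itô--Tanaka formula to the concave $w_0$, the second-order and jump corrections are non-positive, and the expected drift of $w_0(P_t)$ is dominated by $\lambda(\pi-P_t)\,d_\pi w_0(P_t)$. Hence $M_t$ is a supermartingale once $h(p):=f(p)+\lambda(\pi-p)\,d_\pi w_0(p)-r w_0(p)\leq 0$ for all $p$, re-deriving $w_0(p)\geq\Esp[\int_0^\infty e^{-rt}f\,dt]$ for every policy. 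It then remains to make $M_t$ a \emph{martingale} under $P\sim\waitorconfirmprocess_p[\interior L_0]$. On waiting beliefs (the strictly concave part outside $L_0$), the drift ODE $rw_0=f+\lambda(\pi-p)w_0'$ is exactly $h=0$; on confirmation beliefs $p\in\inboundary(\interior L_0)$ the defining property of $L_0$, namely $w_0(q)=w_0(p)+d_\pi w_0(p)(p-q)$, is precisely \emph{affineness of $w_0$ along the confirmed segment}, and substituting it into the confirm generator $rw_0(p)=f(p)+\lambda\frac{\pi-p}{q(p)-p}\bigl(w_0(q(p))-w_0(p)\bigr)$ collapses it to the same $h(p)=0$. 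The genuinely delicate step is the global inequality $h\leq 0$ \emph{inside} the affine pieces, where $w_0$ is linear but $f$ need not be: I expect this to reduce to the optimal selection of the confirmation targets, which are the tangency points of the relevant chord in the concavification characterization of \hyref{Theorem}{thm:explicit-stationary-policy-no-fixed-cost}, and tangency is what forces $h\leq 0$ on the interior. Non-smoothness of $w_0$ at junctions between affine and strictly concave pieces is absorbed into the one-sided directional derivative, which is why $d_\pi$ rather than $w_0'$ appears in the statement.

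\textbf{Convergence and local uniqueness.} For convergence I would invoke the epi-continuity results of \hyref{Section}{sec:info-acquisition-with-and-without-fixed-costs}: any Skorohod limit $P$ of optimizers $P^\kappa$ is optimal for $\kappa=0$, so only the \emph{structure} of $P$ needs identifying. Here the collapse of the threshold/target decomposition of \hyref{Theorem}{thm:optimal-policies} does the work: as $\kappa\downarrow 0$ each target merges with its adjacent threshold, the waiting time $\tau^\theta=\frac1\lambda\log\frac{\pi-q^\theta}{\pi-p^\theta}\to 0$, the jump probability to the far target vanishes, and the compensated-martingale condition forces the arrival rate to the Poisson confirm intensity $\lambda\frac{\pi-p}{q(p)-p}$; matching the generator of the limit with that of $\waitorconfirmprocess_p[\liminf_{\kappa\downarrow 0}\mathcal I_\kappa]$ yields part 2. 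For local uniqueness, if $w_0$ is \emph{strictly} concave in a $\pi$-neighborhood of $p$ then Jensen is strict, so any nondegenerate spread there makes the jump correction strictly negative and $h<0$, whence the supermartingale is strict under any policy that acquires information near $p$ and waiting is uniquely optimal. Combining this with part 2 gives $\liminf_{\kappa\downarrow 0}\mathcal I_\kappa\subset\interior L_0$: information survives in the limit only at beliefs where $w_0$ fails to be strictly concave toward $\pi$, i.e.\ only on the affine set $L_0$.

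\textbf{Main obstacle.} I expect the two hard steps to be (i) the variational inequality $h\leq 0$ on the interiors of the affine pieces, which couples optimality to the tangency/concavification selection of the confirm targets and requires a rigorous Itô--Tanaka (PDMP) generator computation for a process with \emph{atoms}, since at confirmation beliefs the process is held in place by singular information flow; and (ii) the distributional identification in part 2, namely showing that the discrete target--threshold policies converge in the Skorohod topology to the singular confirm process with the correct intensities, given that convergence of the regions $\mathcal I_\kappa$ need be neither monotone nor uniform.
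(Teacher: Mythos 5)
Your concavity argument (closure of $\beliefprocesses(p)$ under free initial splits plus linearity of the virtual-flow objective) is the paper's own argument, merely phrased with mixtures rather than the concavification operator, so that part is fine. For part 1 your verification route is workable but heavier than necessary, and it misdiagnoses where the difficulty lies. Since $w_0$ is \emph{defined} as a supremum, the supermartingale half of your argument --- and hence the global inequality $h\leq 0$, including on the interiors of the affine pieces, which you single out as the main obstacle --- is either immediate from the dynamic programming principle or simply not needed: the wait-or-confirm process never visits the interior of an affine piece after time $0$, so only attainment of $w_0(p)$ along the candidate path must be checked, which is exactly what the paper's \hyref{Proposition}{prop:optimal-jump-policy-from-w0} does by direct computation of the exponential-jump payoff. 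The one step you genuinely need and do not supply is why $h(p)=0$ holds \emph{at} a confirmation belief: substituting affineness into the confirm generator shows the two identities are equivalent, but the generator identity is the conclusion, not a premise. The paper closes this loop by noting that waiting is uniquely optimal at beliefs arbitrarily close to $p$ on the side away from $\pi$, so the equality $r w_0 = f + \lambda(\pi-p)\,d_\pi w_0$ extends to $p$ through the one-sided derivative; you should add that link.

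The genuine gap is part 2. Your convergence paragraph restates the collapse-of-thresholds intuition from \hyref{Section}{sec:informal-intuition-for-vanishing-fixed-costs}, which the paper explicitly warns ``cannot deliver rigorous arguments'': the intervals of $\inforegion_\kappa$ may appear, merge, split or accumulate as $\kappa$ varies, so there is no single target--threshold pair to track, and ``matching generators'' is unavailable because Skorohod convergence does not imply convergence of generators and the prelimit processes have no classical generator at their deterministic jump times. The paper's proof of \hyref{Proposition}{prop:convergence-to-jump-policy} instead partitions beliefs according to $\liminf_{\kappa\downarrow 0}\inforegion_\kappa$, uses convergence in distribution at continuity points to identify the immediate-jump and waiting regions, and in the residual boundary case extracts convergent subsequences of thresholds $z_n$ and targets $(q^0_n,q^1_n)$, rules out $p\in(q^0_\infty,q^1_\infty)$ and $p=q^1_\infty$ by contradiction with local affineness and strict concavity of $w_0$, and finally shows that the crossing time --- a vanishing deterministic drift time plus a geometric number of deterministic loops --- converges in distribution to the exponential with rate $\lambda(\pi-p)/(q^1_\infty-p)$. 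None of that case analysis or the geometric-to-exponential limit appears in your sketch, and that is where the work of part 2 (and of the inclusion $\liminf_{\kappa\downarrow 0}\inforegion_\kappa\subset \interior L_0$ in part 3) actually lies.
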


        The proof is in \hyref{Appendix}{sec:optimal_policies_with_vanishing_fixed_costs}.
        The concavity of $w_0$ derives from there being no residual interim value of information when $\kappa=0$: if at any belief $p$ we had $\Cav[w_0](p)>w_0(p)$, then it would be optimal to immediately jump, implying $w_0(p) = \Cav[w_0](p)$; hence $w_0= \Cav[w_0]$. 
        All costs are internalized in the net value, so the expectation of $w_0$ itself is the continuation value when information is acquired.
        Hence if $w_0$ is locally strictly concave, waiting is optimal: any information acquisition would decrease payoffs by Jensen's inequality.
        Symmetrically, \emph{linear} regions capture instantaneous information acquisition being weakly profitable.
        On the boundary between such regions, it is optimal to maintain the current belief so as to "skip over" beliefs which would trigger immediate information acquisition.

        The first part of \hyref{Theorem}{thm:optimality-and-convergence-to-wait-or-confirm} constructs an optimal policy but is silent as to whether it is a unique, and convergence of optimal policies for $\kappa>0$.
        It is easy to see that neither uniqueness nor convergence to some arbitrarily selected wait-or-confirm policy can be guaranteed in general.
        Indeed, consider the case where $f$ is affine over $[0,1]$: for $\kappa=0$, every feasible belief process is optimal; for any $\kappa>0$, it is uniquely optimal to never acquire information.
        Even in this case, the optimal policy when $\kappa>0$ converges to \emph{some} wait-or-confirm policy. 
        This is what the second part of the result states formally.

        To get some intuition on the how the Poisson structure emerges, assume that "target" and "threshold" beliefs simply get closer to one another when $\kappa$ decreases to $0$, and focus on the long run interval $(\lowtarget,\hightarget)$. 
        The result implies $\lowtarget$ and $\lowthreshold$ (resp. $\hightarget$ and $\highthreshold$) should converge to the same point.
        Consider the dynamics of the beliefs starting from $\lowtarget$. 
        The wait time until the next update ($\lowtime = \frac{1}{\lambda} \log \frac{\pi-\lowtarget}{\pi-\lowthreshold}$) converges to $0$ -- i.e immediate information acquisition after a jump to $q_0$. 
        When acquiring information, the outcome is a jump:
        \begin{align*}
        \begin{dcases}
        \text{ to } \lowtarget \text{ with probability } \frac{\hightarget-\lowthreshold}{\hightarget-\lowtarget}
        \\
        \text{ to } \hightarget \text{ with probability } \frac{\lowthreshold-\lowtarget}{\hightarget-\lowtarget}.
        \end{dcases}
        \end{align*}
        Since $\lowtarget$ and $\lowthreshold$ converge to the same limit, the former probability goes to $1$ and the latter to $0$. 
        This might seem counter-intuitive: the experiment looks as if it is uninformatively confirming belief $\lowtarget$ -- but both limits (in frequency and content) are being taken simultaneously. 
        At the limiting "target belief" there is no value of information, but after an infinitesimal amount of time the inward drift of the belief triggers information acquisition which has an infinitesimally small probability of leading to a jump to the other target belief, and otherwise leads \emph{immediately back} to the previous target belief. 
        This informally describes the DM \emph{continuously checking} whether the current belief is still valid by acquiring infinitesimally informative information. 

        In other words, the underlying information technology takes the familiar form of a "Poisson breakthrough" signal. 
        The DM optimally chooses to acquire a signal structure where a breakthrough arrives at some constant rate, conditional on the true state. 
        That rate is chosen so that (i) when a breakthrough arrives, it leads to the new belief which is exactly the other target belief and (ii) the inference from the lack of arrival of the breakthrough is such that it precisely confirms the current belief i.e. cancels out the unconditional drift. 
        The same logic applies to short-run information acquisition, except that following a jump the process drifts away from the information acquisition region.
        After a "Poisson breakthrough" in the non-cyclical regime, there will be a period of no information acquisition until a new information acquisition region is hit, at which points the DM starts acquiring information again in the same fashion.

          \begin{figure}[h!]
          \centering
          \begin{subfigure}{0.4\textwidth}
              \centering
              \includegraphics[width=\textwidth]{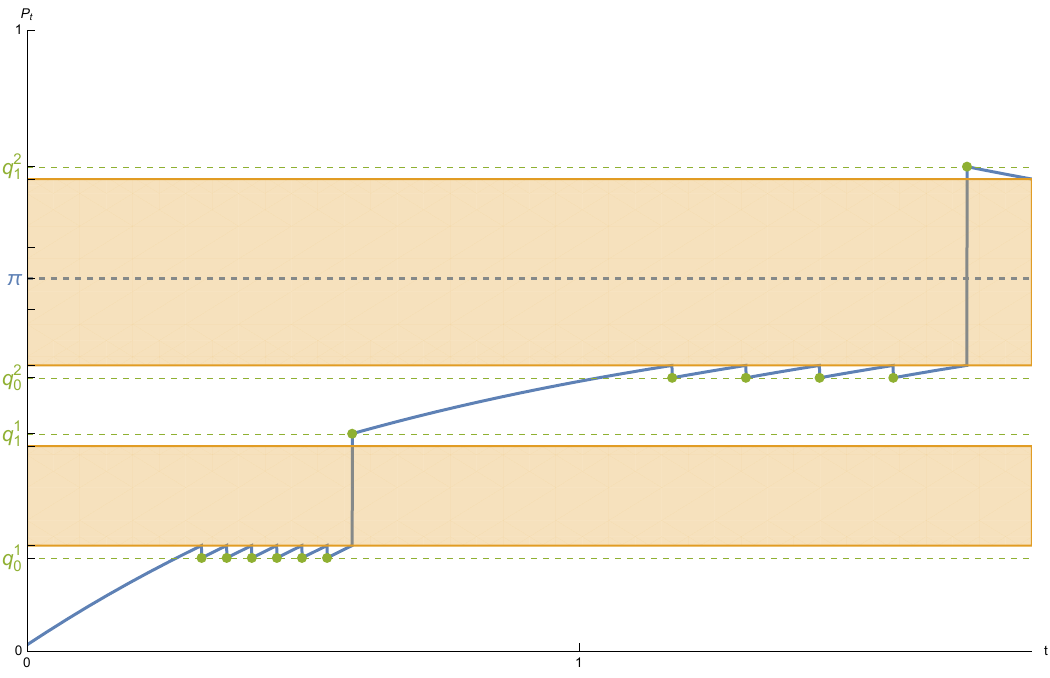}
              \caption{Belief dynamics as fixed costs become small}
          \end{subfigure}%
          ~ 
          \begin{subfigure}{0.4\textwidth}
              \centering
              \includegraphics[width=\textwidth]{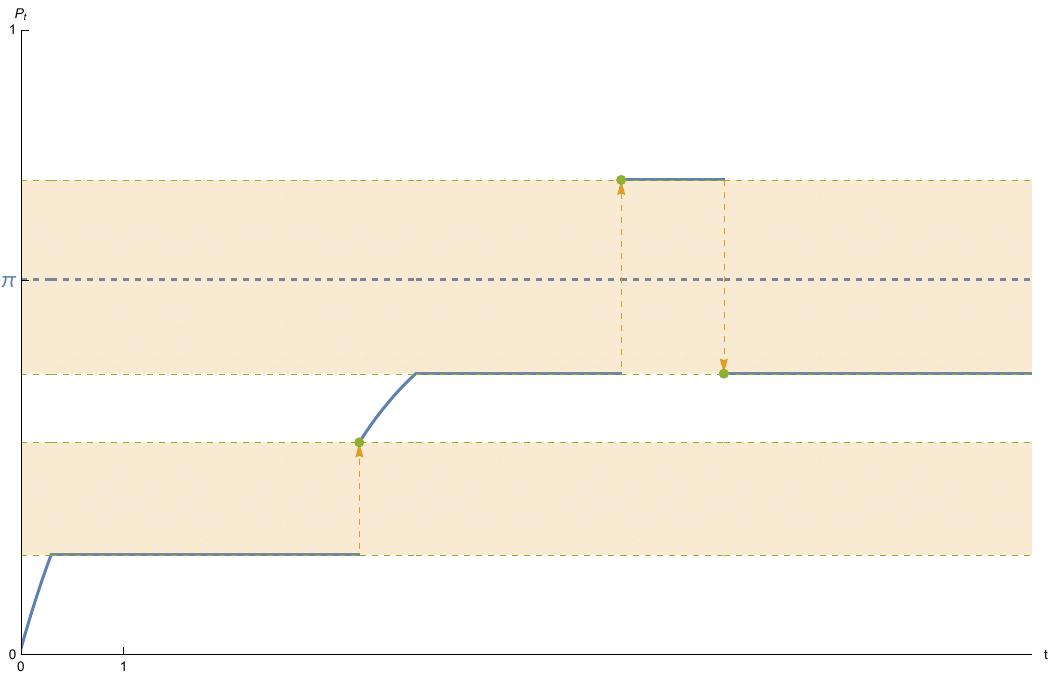}
              \caption{Belief dynamics in the limit with no fixed costs}
          \end{subfigure}
          \caption{Visualization of belief dynamics with vanishing fixed costs.}
          \label{fig:limit-process-dynamics-example}
          \end{figure}
        
        Convergence reinforces that the simple class of wait-or-confirm policies is a natural one to consider: it is without loss of optimality, consistent with continuity and limit considerations, and provides natural interpretation.
        The last point of \hyref{Theorem}{thm:optimality-and-convergence-to-wait-or-confirm} shows that the policy in the first point essentially breaks indifferences towards maximal information acquisition, whereas the limit optimal policy selects away from unnecessary information acquisition. 
        Robustness to the presence of a small fixed cost (along with selection of earliest stopping time and any selection criterion for optimal experiments) gives a natural selection of an optimal policy and that it is within the class of wait-or-confirm policies.
        Returning to the extreme illustration with $f$ linear, it is quite intuitive that when the DM is indifferent between all information acquisition policies, infinitesimal perturbation in the form of adding a vanishing fixed cost would uniquely select the policy which consists in never acquiring information.
        In that sense, the limit of optimal policies as fixed costs vanish intuitively breaks indifferences (in the limit problem) in favor of waiting, which is arguably a desirable property.

        \paragraph{Long run optimal policy.}
        In the case with no fixed costs, the regularity of the stationary regime combines with the added tractability from the virtual flow payoff reformulation in \hyref{Lemma}{lemma:w0-virtual-flow-form} to deliver an \emph{explicit} characterization of the optimal belief process in terms of the primitives of the problem.
        \begin{theorem}
        \label{thm:explicit-stationary-policy-no-fixed-cost}
            In the information acquisition problem with no fixed costs:
            \begin{itemize}
                \item If $\Cav[f](\pi)=f(\pi)$, then it is optimal in the long run to not acquire any information, uniquely so if $\Cav[f]=f$ in a neighborhood of $\pi$ and $f$ is locally strictly concave at $\pi$.
                \item If $\Cav[f](\pi)>\pi$, i.e there exists some (generically unique) interval $(q_0^f,q_1^f)$ containing $\pi$ s.t. the concave envelope of $f$ is everywhere above $f$ inside, and $f$ coincides with $Cav[f]$ at $q_0^f$ and $q_1^f$, then:
                \begin{align*}
                \forall p \in [q_0^f, q_1^f], \hspace{.5em} w_0(p) = \int_0^\infty e^{-rt} \Cav[f]\bigl( e^{-\lambda t} p + (1-e^{-\lambda t}) \pi \bigr) dt 
                \end{align*}
                and the belief process which consists in jumping from $q_0^f$ to $q_1^f$ at rate $\rho_0:= \lambda \frac{\pi-q_0^f}{q_1^f-q_0^f}$, from $q_1^f$ to $q_0^f$ at rate $\rho_1:= \lambda \frac{q_1^f-\pi}{q_1^f-q_0^f}$, and jumping immediately to $\{q_0^f,q_1^f\}$ from any $p \in (q_0^f,q_1^f)$ is optimal.
            \end{itemize}
        \end{theorem}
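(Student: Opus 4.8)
The plan is to work directly from the reduced representation in Lemma~\ref{lemma:w0-virtual-flow-form}, which subsumes all costs into $f$ and turns the problem into maximizing discounted virtual flow payoff over feasible belief processes. The first step is a universal upper bound via a Jensen argument. Writing $p_t := e^{-\lambda t} p + (1-e^{-\lambda t})\pi$ for the no-information drift path, any feasible $P \in \beliefprocesses(p)$ satisfies $\Esp[P_t] = p_t$ by the compensated martingale condition. Since $f \leq \Cav[f]$ pointwise and $\Cav[f]$ is concave, I would chain
\[
\Esp[f(P_t)] \leq \Esp[\Cav[f](P_t)] \leq \Cav[f](\Esp[P_t]) = \Cav[f](p_t),
\]
and integrate against $e^{-rt}$ to obtain $w_0(p) \leq \int_0^\infty e^{-rt}\Cav[f](p_t)\,dt$ for every $p$. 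This bound is exactly the candidate value in the second bullet, so the remaining task is to exhibit a feasible process that attains it.

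For the case $\Cav[f](\pi) > f(\pi)$, I would show the proposed two-point process attains the bound by checking that both inequalities above become equalities. Feasibility is a direct generator computation: starting at $q_0^f$ and jumping to $q_1^f$ at rate $\rho$ gives instantaneous drift $\rho(q_1^f - q_0^f)$, which matches the required $\lambda(\pi - q_0^f)$ precisely when $\rho = \rho_0$; the symmetric computation at $q_1^f$ pins down $\rho_1$, and both rates are positive since $q_0^f < \pi < q_1^f$. Because the process lives on $\{q_0^f, q_1^f\}$, where $f = \Cav[f]$ by the defining property of the interval, the first inequality is tight; because $\Cav[f]$ is affine on $[q_0^f,q_1^f]$ and $\Esp[P_t] = p_t \in [q_0^f,q_1^f]$, Jensen holds with equality, so the second inequality is tight as well. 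Hence $w_0(p)$ equals the integral formula on $[q_0^f,q_1^f]$ and the two-point jump process is optimal. As an internal check, one verifies that $\int_0^\infty e^{-rt}\Cav[f](p_t)\,dt$ is affine in $p$ on $[q_0^f,q_1^f]$ (since $\Cav[f]$ is affine there and $\int_0^\infty e^{-rt} p_t\,dt$ is affine in $p$), which is consistent with the immediate Bayes-plausible jump from any interior $p$ to the two endpoints.

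For the case $\Cav[f](\pi) = f(\pi)$, the long-run drift carries beliefs into a neighborhood of $\pi$ on which $\Cav[f] = f$, so along the entire tail of the drift path the upper bound coincides with the no-information value $\int_0^\infty e^{-rt} f(p_t)\,dt$; thus waiting attains the bound and is optimal in the long run, and under local strict concavity of $f$ at $\pi$ the Jensen inequality becomes strict for any non-degenerate $P_t$, forcing any optimizer to be deterministic and yielding uniqueness. The step I expect to be the main obstacle is the rigorous long-run framing for initial beliefs outside $[q_0^f,q_1^f]$: the explicit formula is asserted only on the interval, so connecting it to actual long-run behavior of the optimal policy requires the convergence structure for $\kappa>0$ together with the wait-or-confirm characterization of Theorem~\ref{thm:optimality-and-convergence-to-wait-or-confirm}, arguing that beliefs drift through the region where $f = \Cav[f]$ before entering the stationary cycle. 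A secondary technical point is ensuring $\Cav[f]$ is well defined despite possible blow-up of $c'$ (hence of $f$) near the boundary; this is innocuous because $(q_0^f,q_1^f)$ and the relevant drift paths remain in the interior.
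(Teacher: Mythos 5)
Your proposal is correct and follows essentially the same route as the paper: the upper bound $w_0(p)\leq\int_0^\infty e^{-rt}\Cav[f](p_t)\,dt$ via $f\leq\Cav[f]$, Jensen, and the compensated martingale constraint, followed by verification that the two-point Markov jump process attains it because $f=\Cav[f]$ on its support and $\Cav[f]$ is affine on $[q_0^f,q_1^f]$. The only cosmetic difference is that you verify feasibility by matching instantaneous drifts at the two endpoints rather than computing the transition matrix $e^{t\Psi}$ explicitly as the paper does, and the paper additionally flags some casework for the subcase $\Cav[f](\pi)=f(\pi)$ without $\Cav[f]=f$ holding in a neighborhood of $\pi$, which your sketch passes over lightly.
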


        The central takeaway of \hyref{Theorem}{thm:explicit-stationary-policy-no-fixed-cost} is that one can solve for optimal long run behavior by simply \emph{concavifying the virtual flow payoff $f$} around $\pi$.
        The proof strategy relies on first establishing that the given expression of the value is an upper bound for $w_0$, then showing that the feasible policy described achieves this upper bound.
        Stationary payoffs are expressed as an integral over the \emph{deterministic} drift path from $p$ when the belief process is actually random; this captures the effect that the expected payoff from jumping between $q_0^f$ and $q_1^f$ is a linear combination of the payoffs at these two points and that the probabilities move towards their long run average on the line between $f(q_0^f)$ and $f(q_1^f)$. 
        An immediate corollary of this result is that the limit of optimal long run belief cycles (and their associated belief process) for $\kappa>0$ must converge to this policy as $\kappa$ goes to $0$. 

        The explicit characterization from \hyref{Theorem}{thm:explicit-stationary-policy-no-fixed-cost} does not extend to short-run information acquisition. 
        Outside of the long run regime, the upper bound becomes strict because of transitory dynamics: it is no longer necessary that the intervals where $\Cav[f]>f$ exactly correspond to regions where it is optimal to acquire Poisson signals.
        One can still apply the idea of the recursive methodology from \hyref{Section}{sec:Dynamics}, solving for transitory optimal information acquisition within the Poisson class and "from the inside out" (constraining to one interval to the left of the stationary case, etc.), but there seems to be no explicit characterization beyond the long run.

    \subsection{Dynamics of beliefs and actions with vanishing fixed costs}
    \label{sec:dynamics-with-no-fixed-cost}

        \paragraph{Updating dynamics: random beliefs versus random times}
        As fixed costs vanish, belief changes become lumpier; information is gathered continuously in the long run.
        This derives from properties of UPS costs: there is no incentive to delay or break up profitable information acquisition.
        Time intervals where beliefs are continuously changing are the ones in which no information is acquired; learning exactly compensates the depreciation of knowledge. 
        This limit characterization reveals a transition between two regimes. 
        With positive fixed costs, information is acquired at predictable intervals, and the uncertainty lies in the outcome of the experiment.
        As fixed costs become negligible belief jumps become rare but predictable, and the uncertainty shifts to the timing of belief changes. 
        Depending on the magnitude of fixed costs, an observer might perceive the agent as either frequently and significantly adjusting their beliefs or as holding a steady belief with rare, sudden shifts. 

        \paragraph{Action switching: lumpy updating versus lumpy action}
        Because the DM eventually holds only two beliefs, they eventually take only two actions. 
        This constitutes one possible resolution of an otherwise puzzling feature: in models where each belief entails a different optimal action (e.g. a pricing problem), continuous belief drift implies continuous action adjustment. 
        This can be seen as an unpalatable prediction in general: real decision makers often only adjust their behavior lumpily.
        The literature on "sticky prices" has proposed various modeling approaches including: exogenous opportunities of changing actions \citep{Calvo1983JME,Taylor1980JPE}, menu costs \citep{SheshinskiWeiss1977RES,Mankiw1985QJE}, inattentiveness models \citep{mankiw2002sticky,MankiwReis2007JEEA,reis2006inattentiveproducers,reis2006inattentiveconsumers} which are akin to a pure fixed cost version of the present model, and rational inattention models \citep{sims2003implications,mackowiak2009optimal}.
        The limit case of the present model as fixed costs vanish outlines a different approach.
        There, discreteness derives from the flexibility of information acquisition in the absence of timing frictions in either the form of a fixed cost (like in inattentiveness models) or an exogenous time grid.
        Continuous optimal information acquisition leads the DM to optimally only ever hold finitely many beliefs; hence lumpy belief changes translate into lumpy action switches. 
        This generates new empirical questions: can one differentiate between agents who change their action periodically because they bear a cost to do so, and discrete action switches driven by information chosen so as to hold only finitely many beliefs? 
        Naturally, realistic examples might feature a mixture of explanations, with the information-driven explanation providing a complementary approach.

        \paragraph{Comparative statics} 
        The explicit long run solution in terms of $f$ makes comparative statics a much more approachable exercise.
        In particular, since analysis reduces to characterizing changes in the concave envelope of $f$ around $\pi$, one can directly use and adapt existing tools in the literature on static information acquisition and persuasion problems, see in particular \cite{curello2022comparative,whitmeyer2024}.
        This implies, in particular, that more informative beliefs are held at any time in the solution under $\tilde{f}$ than under $f$ if $\tilde{f}-f$ is a convex function.
        New interpretations follow from relating this back to primitives since $f(p)=u(p)+r c(p) -\lambda (\pi-p)c'(p)$ contains \emph{all} the parameters in the model. 
        For instance, raising the discount $r$ always imply better information acquisition in that sense; \emph{worse} information is acquired as $\lambda$ increases \emph{if} $p \mapsto (\pi-p)c'(p)$ is a convex function (this is the case with entropy costs), although the frequency of jumps increases.
    

\section{Dynamic portfolio allocation: an illustrative example}\label{sec:Examples-Applications}

    I turn to a particular application to portfolio allocation, where an underlying state governs the distribution of risky assets' return.
    Building up from a simpler case towards a richer environment, I apply the results of the previous sections and expand on their implications in this context.
    In the benchmark case, the state simply captures uncertainty about expected returns; optimal information leads to cyclical diversification with continuous portfolio rebalancing interrupted by periodic jumps to a more extreme allocation.
    Adding a friction in the form of a broker's fee generates exclusions: some initially less informed agents are confined to the safe asset by the initial cost of information acquisition.
    I then consider a case where uncertainty is not purely about relative asset returns but market conditions: all assets have higher risk and more correlated returns in one state than in the other.
    This may generates distortions between "good" and "bad" news.
    These cases are nested within a general framework that can tackle a variety of applications.

    \def\kappaval{0.02}
\def\lambdaval{0.5}
\def\pival{0.5}
\def\rval{10.0}
\def\qzero{0.053}
\def\qone{0.947}
\def\pzero{0.183}
\def\pone{0.817}
\def\tauzero{0.69}
\def\tauone{0.69}

    \pgfplotsset{
            table/search path={Tikz/Tikz_Portfolio_Example},
        }
    \begin{figure}[h!]
    \centering
        \begin{subfigure}{.4\textwidth}
            \begin{tikzpicture}
            \begin{axis}[
              xmin=0, xmax=1,
              ymin=2.2, ymax=3.5,  
              height=4cm, width=8cm,
              xtick = {0,.5,1},
              ytick distance = .5,
              ticklabel style = {font=\tiny},
              ylabel={},
              axis lines=middle,
              axis line style={-},
              ]
              
            \addplot+ [mathematicablue, no marks, smooth, thick] table [x=x,y=u, col sep=comma] {portfolio_example_1.csv}
              node[left,pos=.8] {$u$};

            \end{axis}
            \end{tikzpicture}
            \caption{Indirect utility}
        \end{subfigure}
        \hspace*{2em}
        \begin{subfigure}{.4\textwidth}
            \begin{tikzpicture}
            \begin{axis}[
              xmin=0, xmax=1,
              ymin=0, ymax=1,
              height=4cm, width=8cm,
              clip=false,
              xtick = {0,.5,1},
              ytick distance = .5,
              ticklabel style = {font=\tiny},
              ylabel={},
              axis lines=middle,
              axis line style={-}
              ]
              
            \addplot+ [mathematicaorange, thick, no marks] table [x=x, y=beta,  col sep=comma]{portfolio_example_1.csv} node[left,pos=.3] {\footnotesize $\alpha^*$};

            \addplot+ [mathematicagreen, thick, no marks] table [x=x, y=gamma, col sep=comma]{portfolio_example_1.csv} node[below,pos=.15] {\footnotesize $\gamma^*$};

            \end{axis}
            \end{tikzpicture}
            \caption{Optimal strategies}
        \end{subfigure}
        \\[1em]
        \begin{subfigure}{.4\textwidth}
            \begin{tikzpicture}
            \begin{axis}[
              xmin=0, xmax=1,
              ymin=.24, ymax=.35,
              height=4cm, width=8cm,
              xtick={0,1},
              ytick distance = .05,
              ticklabel style={font=\tiny},
              ylabel={},
              axis lines=middle,
              axis line style={-}
            ]

            \addplot+ [mathematicaorange, thick, no marks, smooth]
              table [x=x, y=v, col sep=comma] {portfolio_example_1.csv}
              node[left,pos=1] {$v$};

            \addplot+ [mathematicaorange, thick, dashed, no marks, smooth]
              table [x=x, y=Gv, col sep=comma] {portfolio_example_1.csv}
              node[above,pos=.5] {$\mathcal{G}v$};

            \addplot+ [red, thick, dotted, no marks, smooth]
              table [x=x, y=GvMinusKappa, col sep=comma] {portfolio_example_1.csv}
              node[right,pos=.65] {$\mathcal{G}v-\kappa$};

            \end{axis}
            \end{tikzpicture}
            \caption{Value function}
        \end{subfigure}
        \hspace*{2em}
        \begin{subfigure}{.4\textwidth}
            \begin{tikzpicture}
            \begin{axis}[
              xmin=0, xmax=1,
              ymin=.24, ymax=.275,  
              height=4cm, width=8cm,
              xtick={0,.5,1},
              ytick distance = .05,
              ticklabel style={font=\tiny},
              ylabel={},
              axis lines=middle,
              axis line style={-}
            ]

            \addplot+ [mathematicaorange, thick, no marks, smooth]
              table [x=x, y=w, col sep=comma] {portfolio_example_1.csv}
              node[below,pos=.5] {$v-c$};

            \addplot+ [mathematicaorange, thick, dashed, no marks, smooth]
              table [x=x, y=Cavw, col sep=comma] {portfolio_example_1.csv}
              node[below,pos=.5] {$\Cav[v-c]$};

            \addplot+ [mathematicagreen,mark=*,only marks,thick,
                        mark options={color=mathematicagreen,fill=mathematicagreen}] 
                table [x=x,y=targ_axis_0p24_pts, col sep=comma] {portfolio_example_1.csv};

            \addplot+ [mathematicagreen,no marks,dashed,line width=.6mm,unbounded coords=jump] 
                table [x=x,y=targ_axis_0p24_intervals, col sep=comma] {portfolio_example_1.csv};
    
            \addplot [mathematicaorange,mark=diamond*,only marks] 
                table [x=x,y=thrs_axis_0p24_pts, col sep=comma] {portfolio_example_1.csv};

            \addplot [mathematicaorange,no marks,line width=.6mm,unbounded coords=jump] 
                table [x=x,y=thrs_axis_0p24_intervals, col sep=comma] {portfolio_example_1.csv};                     
            \end{axis}
            \end{tikzpicture}
            \caption{Net value function}
        \end{subfigure}
        \caption{Solution in the portfolio choice problem, benchmark case.}
        \label{fig:portfolio-choice-benchmark}
        {\footnotesize $m_H=4,m_L=1$, $\sigma^2=2,\psi=.5,z=0$ $\kappa=\kappaval,\lambda=\lambdaval,\pi=\pival,r=\rval,c \propto \text{LLR}$ \\ $q_0=\qzero, q_1=\qone,p_0=\pzero,p_1=\pone,\tau_0=\tauzero,\tau_1=\tauone$}
    \end{figure}

    \paragraph{Dynamics of diversification} 
    An investor allocates a unit flow budget between three available assets over time; they have mean-variance preferences over returns at any date. 
    One asset is safe and yields fixed return $s \geq 0$. 
    The other two assets (labeled $A$ and $B$) are risky.
    The parameters of the distribution of returns depend on the underlying state $\theta \in \{0,1\}$, which captures fundamentals about the economy that may change over time.
    The problem is parameterized throughout in terms of the choice of a \emph{share of investment in the risky assets} $\gamma \in [0,1]$ and a \emph{subdivision of that budget} $\alpha \in [0,1]$ representing the share allocated to risky asset $A$, so that the portfolio choice problem at a given date can be written:
    \begin{align*}
    u(p) := \max_{\gamma,\alpha \in [0,1]} (1-\gamma) s + \gamma \Esp_p[ \alpha X_A + (1-\alpha) X_B ] - \frac{\psi}{2} \gamma^2 \Var_p[\alpha X_A + (1-\alpha) X_B],
    \end{align*}
    where $\Esp_p,\Var_p$ denote expectation and variance under the prior $p$ that $\theta=1$, $X_i$ is the random flow return of asset $i \in \{A,B\}$, and $\psi$ parameterizes risk aversion.     

    As a benchmark case, assume that $\theta$ captures a difference in \emph{mean return} between the risky assets -- this may, for instance, reflect the varying performance of two firms competing for a given market. 
    Formally, assume both assets have independent returns with the same variance $\sigma^2$ in either state ($\theta=0,1$), but symmetric means: $\Esp[X_A|\theta=1]=\Esp[X_B|\theta=0]=:m_H>m_L:=\Esp[X_A|\theta=0]=\Esp[X_B|\theta=1]$ ($A$ has high returns in state $1$, $B$ has high returns in state $0$).
    Denote $m_i(p):=p\Esp[X_i|\theta=1]+(1-p)\Esp[X_i|\theta=0]$ the expected return of asset $i \in \{A,B\}$ given belief $p$ that $\theta=1$, so that the problem rewrites:
    \begin{align*}
    \max_{\gamma,\alpha \in [0,1]} (1-\gamma) s + \gamma \bigl( \alpha m_A(p) + (1-\alpha) m_B(p) \bigr) - \frac{\psi}{2} \gamma^2 \bigl( \alpha^2 + (1-\alpha)^2 \bigr) \sigma^2
    \end{align*}
    Further, assume for now that $\psi < \frac{m_L - s}{\sigma^2}$ i.e. risk aversion is not too high relative to standardized excess returns relative to the safe asset. 
    This implies that (for now), it is never optimal to hold the safe asset; the optimal share of asset $A$ has closed form expression:
    \begin{align*}
     \alpha^*(p) := \frac{1}{2} \left(1 + \frac{m_A(p)-m_B(p)}{\psi \sigma^2}\right),
    \end{align*}
    which is plotted in \hyref{Figure}{fig:portfolio-choice-benchmark}. 
    Notice that $|\alpha^*(p)-1/2|$ is increasing in $|p-1/2|$: the less uncertain the DM is about the current state, the less diversified is the chosen portfolio.
    Conversely, the more uncertain they are, the more they prefer to hold a diversified portfolio.
    
    Results from the previous sections immediately imply a precise description of the investor's optimal long run information acquisition.
    Optimal behavior features a repeating pattern: continuous rebalancing of the portfolio towards greater diversification as the investor becomes more uncertain about the current state, interrupted by periodic sudden restructuring towards holding a more extreme portfolio.
    Previous work in static contexts highlighted that information acquisition may lead to under-diversification \citep[see e.g.][]{van2010information}.
    With costly periodic information acquisition in a changing environment, we see a similar effect taking place at times of information acquisition but unfolding over a cycle of endogenous length.

    \def\kappaval{0.02}
\def\lambdaval{0.5}
\def\pival{0.5}
\def\rval{10.0}
\def\qzero{0.053}
\def\qone{0.947}
\def\pzero{0.183}
\def\pone{0.817}
\def\tauzero{0.69}
\def\tauone{0.69}

    \pgfplotsset{
            table/search path={Tikz/Tikz_Portfolio_Example},
        }
    \begin{figure}
    \centering
        \begin{subfigure}{.4\textwidth}
            \begin{tikzpicture}
            \begin{axis}[
              xmin=0, xmax=1,
              ymin=0, ymax=1,  
              height=4cm, width=8cm,
              xtick = {0,.5,1},
              ytick distance = .5,
              ticklabel style = {font=\tiny},
              ylabel={},
              axis lines=middle,
              axis line style={-},
              ]
              
            \addplot+ [mathematicablue, no marks, smooth, thick] table [x=x,y=u, col sep=comma] {portfolio_example_2.csv}
              node[left,pos=.8] {$u$};

            \end{axis}
            \end{tikzpicture}
            \caption{Indirect utility}
        \end{subfigure}
        \hspace*{2em}
        \begin{subfigure}{.4\textwidth}
            \begin{tikzpicture}
            \begin{axis}[
              xmin=0, xmax=1,
              ymin=0, ymax=1,
              height=4cm, width=8cm,
              clip=false,
              xtick = {0,.5,1},
              ytick distance = .5,
              ticklabel style = {font=\tiny},
              ylabel={},
              axis lines=middle,
              axis line style={-}
              ]
              
            \addplot+ [mathematicaorange, thick, no marks] table [x=x, y=beta,  col sep=comma]{portfolio_example_2.csv} node[right,pos=.3] {\footnotesize $\alpha^*$};

            \addplot+ [mathematicagreen, thick, no marks] table [x=x, y=gamma, col sep=comma]{portfolio_example_2.csv} node[right,pos=.2] {\footnotesize $\gamma^*$};

            \end{axis}
            \end{tikzpicture}
            \caption{Optimal strategies}
        \end{subfigure}
        \\[1em]
        \begin{subfigure}{.4\textwidth}
            \begin{tikzpicture}
            \begin{axis}[
              xmin=0, xmax=1,
              ymin=0, ymax=0.09,
              height=4cm, width=8cm,
              xtick={0,1},
              ytick distance = .05,
              ticklabel style={font=\tiny},
              ylabel={},
              axis lines=middle,
              axis line style={-}
            ]

            \addplot+ [mathematicaorange, thick, no marks, smooth]
              table [x=x, y=v, col sep=comma] {portfolio_example_2.csv}
              node[left,pos=1] {$v$};

            \addplot+ [mathematicaorange, thick, dashed, no marks, smooth]
              table [x=x, y=Gv, col sep=comma] {portfolio_example_2.csv}
              node[above,pos=.5] {$\mathcal{G}v$};

            \addplot+ [red, thick, dotted, no marks, smooth]
              table [x=x, y=GvMinusKappa, col sep=comma] {portfolio_example_2.csv}
              node[right,pos=.65] {$\mathcal{G}v-\kappa$};

            \end{axis}
            \end{tikzpicture}
            \caption{Value function}
        \end{subfigure}
        \hspace*{2em}
        \begin{subfigure}{.4\textwidth}
            \begin{tikzpicture}
            \begin{axis}[
              xmin=0, xmax=1,
              ymin=-.02, ymax=.014,  
              height=4cm, width=8cm,
              xtick={0,.5,1},
              ytick distance = .01,
              ticklabel style={font=\tiny},
              ylabel={},
              axis lines=middle,
              axis x line shift=.02,
              axis line style={-}
            ]

            \addplot+ [mathematicaorange, thick, no marks, smooth]
              table [x=x, y=w, col sep=comma] {portfolio_example_2.csv}
              node[below,pos=.5] {$v-c$};

            \addplot+ [mathematicaorange, thick, dashed, no marks, smooth]
              table [x=x, y=Cavw, col sep=comma] {portfolio_example_2.csv}
              node[below,pos=.5] {$\Cav[v-c]$};

            \addplot+ [mathematicagreen,mark=*,only marks,thick,
                        mark options={color=mathematicagreen,fill=mathematicagreen}] 
                table [x=x,y=targ_axis_-0p02_pts, col sep=comma] {portfolio_example_2.csv};

            \addplot+ [mathematicagreen,no marks,dashed,line width=.6mm,unbounded coords=jump] 
                table [x=x,y=targ_axis_-0p02_intervals, col sep=comma] {portfolio_example_2.csv};
    
            \addplot [mathematicaorange,mark=diamond*,only marks] 
                table [x=x,y=thrs_axis_-0p02_pts, col sep=comma] {portfolio_example_2.csv};

            \addplot [mathematicaorange,no marks,line width=.6mm,unbounded coords=jump] 
                table [x=x,y=thrs_axis_-0p02_intervals, col sep=comma] {portfolio_example_2.csv};                     
            \end{axis}
            \end{tikzpicture}
            \caption{Net value function}
        \end{subfigure}
        \caption{Solution in the portfolio choice problem with a broker's fee.}
        \label{fig:portfolio-choice-with-brokers-fee}
        {\footnotesize $m_H=4,m_L=1$, $\sigma^2=2,\psi=.5,z=2.5$ $\kappa=\kappaval,\lambda=\lambdaval,\pi=\pival,r=\rval,c \propto \text{LLR}$ \\ $q_0=\qzero, q_1=\qone,p_0=\pzero,p_1=\pone,\tau_0=\tauzero,\tau_1=\tauone$}
    \end{figure}
    
    \paragraph{Exclusion with frictions}
    Modify the previous problem to add a broker's fee $z>0$, which is a fixed amount paid for investing in the risky assets (\hyref{Figure}{fig:portfolio-choice-with-brokers-fee}).
    Now, the information acquisition cycle may involve phases where the investor temporarily exits the market and only holds the safe asset, especially if investment fixed costs are high or the market is intrinsically very volatile.
    The solution may also feature path dependence, with some investors being effectively excluded from the market.
    If the information costs or the fixed cost of investing are high enough, there may exist a trap region for beliefs around $\pi$. 
    If that is the case, investors who start sufficiently uninformed will never acquire any information; their belief drifts to the no information average and they hold only the safe asset.
    Meanwhile, investors who started with better information keep acquiring smaller amounts of information to maintain a cycle.
    As previously emphasized, any path dependence in the model comes from optimality -- the trap exists because initial costs are not warranted by future benefits.
    Yet, if there were externalities from information acquisition or with a concern for inequality, path dependence in access to information would have welfare implications.
    Varying parameters also alters the domain of the trap region, which provides multiple possible explanations for differences across categories of investors.

    \paragraph{Continuous monitoring} 
    As fixed costs of information acquisition become small, portfolio choice concentrates over just two possible allocations, each favoring one of the risky assets.
    Reallocation towards a more diverse portfolio vanishes in favor of sporadic but stark rebalancing.
    An investor with easier access to information adjusts their allocation less frequently, and maintains a stable investment strategy until a drastic change appears reasonable.
    The frequency and precise coarseness of periodic checks are used to calibrate the target levels of confidence the investor aims to maintain.
    By contrast, reallocation for an investor with worse access to information comes from hedging against uncertainty because of the inability to continuously monitor; this also leads them to holding a more extreme portfolio when they do update, which displays more unstable holding patterns overall.

    \paragraph{Asymmetry between market regimes}
    The state $\theta_t$ may be used to capture a more complex "market regime" which affects all assets, symmetrically or not.\footnote{
        More generally, we can consider any variation of the previous setup with arbitrarily structured returns:
        \begin{align*}
            \Esp \left(\begin{pmatrix}
                X_A \\ X_B
            \end{pmatrix}
            \middle| \theta\right)
            =
            \begin{pmatrix}
                m_A(\theta) \\ m_B(\theta)
            \end{pmatrix},
            \hspace{2em}
            \Var \left(\begin{pmatrix}
                X_A \\ X_B
            \end{pmatrix}
            \middle| \theta \right)
            =
            \begin{pmatrix}
                \sigma_A^2(\theta) & \rho(\theta)
                \\
                \rho(\theta) & \sigma_B^2(\theta)
            \end{pmatrix}
        \end{align*}
        This setup captures a variety of situations where returns jointly depend on an underlying regime and extends a standard portfolio environment to evolving market conditions in dynamic portfolio choice.
        Generalizing to more than two assets is also direct, and the analysis can also be extended beyond two states in some cases (see \hyref{Section}{sec:Discussion-Extensions}).
    }
    For instance, we could consider that $\theta_t=0$ represents a "business as usual" situation with low expected returns, low variance and low correlation -- a regime in which companies have fewer opportunities for risk-taking, so real returns are lower but safer and uncorrelated -- and $\theta_t=1$ denotes a "gold rush" situation where some exogenous disruption (e.g. a new technology) creates opportunities for high returns but entails much higher risk and correlation across assets.

    The indirect utility and optimal strategies are represented in \hyref{Figure}{fig:portfolio-choice-asymmetry}. 
    With the values chosen, the "opportunity effect" of state $1$ being higher risk and higher reward dominate, so that state $0$ being more likely is relatively bad news.
    Notice, however, that this effect is complex over the range of beliefs: the indirect utility is higher but steeper when the state is more likely to be one, and the range for which it is profitable to invest when $\theta=1$ is likelier is compressed towards more certain beliefs. 

    Breaking away from the symmetry assumption means that the optimal belief cycle need no longer be symmetric as well.
    In particular, the belief cycle may exhibit asymmetries in the frequency of information acquisition after receiving "good" or "bad" news, as well as in the relative content of what "good" and "bad" news mean in term of relative certainty.
    This brings up a potentially interesting conceptual point when linked with known biases in dynamic attention, in particular in the context of portfolio performance. 
    Some specifications of the model exhibit behavior akin to the "Ostrich effect", notably documented and analyzed in the finance literature \citep{sicherman2016financial,galai2006ostrich,karlsson2009ostrich}, which consists in agents showing bias against information acquisition after receiving bad news. 
    \citet{sicherman2016financial} study the frequency at which investors review the state of their portfolio, and two key stylized findings that are of interest in our context are that: (a) investors tend to check the status of their portfolio less frequently after poor performances than after good performances, and (b) investors tend to check less frequently when the market is more volatile. 
    While (b) could be tied to the non-monotonicity of the frequency of updates in volatility studied in \hyref{Section}{sec:CS-volatility}, (a) is characterized by patterns that can be rationalized by asymmetries within the present model (see \hyref{Figure}{fig:portfolio-choice-asymmetry}).
    Of course, this should not lead to the conclusion that this behavior is necessarily rational or purely deriving from the incentives of costly information acquisition.
    Rather, this constitutes an analytical building block which highlights that asymmetries in the frequencies of information acquisition \emph{can} arise from rationally optimal behavior; this may be useful for isolating and better understanding behavioral patterns which derive from various cognitive biases versus some form of adaptation to the environment. 

    \def\kappaval{0.02}
\def\lambdaval{0.5}
\def\pival{0.5}
\def\rval{10.0}
\def\qzero{0.075}
\def\qone{0.963}
\def\pzero{0.248}
\def\pone{0.867}
\def\tauzero{1.045}
\def\tauone{0.465}

    \pgfplotsset{
            table/search path={Tikz/Tikz_Portfolio_Example},
        }
    \begin{figure}
    \centering
        \begin{subfigure}{.4\textwidth}
            \begin{tikzpicture}
            \begin{axis}[
              xmin=0, xmax=1,
              ymin=0, ymax=1.5,  
              height=4cm, width=8cm,
              xtick = {0,.5,1},
              ytick distance = .5,
              ticklabel style = {font=\tiny},
              ylabel={},
              axis lines=middle,
              axis line style={-},
              ]
              
            \addplot+ [mathematicablue, no marks, smooth, thick] table [x=x,y=u, col sep=comma] {portfolio_example_3.csv}
              node[left,pos=.8] {$u$};

            \end{axis}
            \end{tikzpicture}
            \caption{Indirect utility}
        \end{subfigure}
        \hspace*{2em}
        \begin{subfigure}{.4\textwidth}
            \begin{tikzpicture}
            \begin{axis}[
              xmin=0, xmax=1,
              ymin=0, ymax=1,
              height=4cm, width=8cm,
              clip=false,
              xtick = {0,.5,1},
              ytick distance = .5,
              ticklabel style = {font=\tiny},
              ylabel={},
              axis lines=middle,
              axis line style={-}
              ]
              
            \addplot+ [mathematicaorange, thick, no marks] table [x=x, y=beta,  col sep=comma]{portfolio_example_3.csv} node[left,pos=.3] {\footnotesize $\alpha^*$};

            \addplot+ [mathematicagreen, thick, no marks] table [x=x, y=gamma, col sep=comma]{portfolio_example_3.csv} node[right,pos=.2] {\footnotesize $\gamma^*$};

            \end{axis}
            \end{tikzpicture}
            \caption{Optimal strategies}
        \end{subfigure}
        \\[1em]
        \begin{subfigure}{.4\textwidth}
            \begin{tikzpicture}
            \begin{axis}[
              xmin=0, xmax=1,
              ymin=0, ymax=0.15,
              height=4cm, width=8cm,
              xtick={0,1},
              ytick distance = .05,
              ticklabel style={font=\tiny},
              ylabel={},
              axis lines=middle,
              axis line style={-}
            ]

            \addplot+ [mathematicaorange, thick, no marks, smooth]
              table [x=x, y=v, col sep=comma] {portfolio_example_3.csv}
              node[left,pos=1] {$v$};

            \addplot+ [mathematicaorange, thick, dashed, no marks, smooth]
              table [x=x, y=Gv, col sep=comma] {portfolio_example_3.csv}
              node[above,pos=.5] {$\mathcal{G}v$};

            \addplot+ [red, thick, dotted, no marks, smooth]
              table [x=x, y=GvMinusKappa, col sep=comma] {portfolio_example_3.csv}
              node[right,pos=.65] {$\mathcal{G}v-\kappa$};

            \end{axis}
            \end{tikzpicture}
            \caption{Value function}
        \end{subfigure}
        \hspace*{4em}
        \begin{subfigure}{.4\textwidth}
            \begin{tikzpicture}
            \begin{axis}[
              xmin=0, xmax=1,
              ymin=-.02, ymax=.05,  
              height=4cm, width=8cm,
              xtick={0,.5,1},
              ytick distance = .01,
              ticklabel style={font=\tiny},
              ylabel={},
              axis lines=middle,
              axis x line shift=.02,
              axis line style={-}
            ]

            \addplot+ [mathematicaorange, thick, no marks, smooth]
              table [x=x, y=w, col sep=comma] {portfolio_example_3.csv}
              node[below,pos=.5] {$v-c$};

            \addplot+ [mathematicaorange, thick, dashed, no marks, smooth]
              table [x=x, y=Cavw, col sep=comma] {portfolio_example_3.csv}
              node[above,pos=.3] {$\Cav[v-c]$};

            \addplot+ [mathematicagreen,mark=*,only marks,thick,
                        mark options={color=mathematicagreen,fill=mathematicagreen}] 
                table [x=x,y=targ_axis_-0p02_pts, col sep=comma] {portfolio_example_3.csv};

            \addplot+ [mathematicagreen,no marks,dashed,line width=.6mm,unbounded coords=jump] 
                table [x=x,y=targ_axis_-0p02_intervals, col sep=comma] {portfolio_example_3.csv};
    
            \addplot [mathematicaorange,mark=diamond*,only marks] 
                table [x=x,y=thrs_axis_-0p02_pts, col sep=comma] {portfolio_example_3.csv};

            \addplot [mathematicaorange,no marks,line width=.6mm,unbounded coords=jump] 
                table [x=x,y=thrs_axis_-0p02_intervals, col sep=comma] {portfolio_example_3.csv};                     
            \end{axis}
            \end{tikzpicture}
            \caption{Net value function}
        \end{subfigure}
         \caption{Solution in the portfolio choice problem, asymmetric case.}
        \label{fig:portfolio-choice-asymmetry}
        {\footnotesize $m_A(0)=1,m_B(0)=4$, $\sigma_A^2(0)=\sigma_B^2(0)=2, \rho(0)=0$; $m_A(1)=5,m_B(1)=2$, $\sigma_A^2(1)=\sigma_B^2(1)=4, \rho(1)=.5$ \\ $\psi=.5,z=2.5,\kappa=\kappaval,\lambda=\lambdaval,\pi=\pival,r=\rval,c \propto \text{LLR}$ \\ $q_0=\qzero, q_1=\qone,p_0=\pzero,p_1=\pone,\tau_0=\tauzero,\tau_1=\tauone$}
    \end{figure}
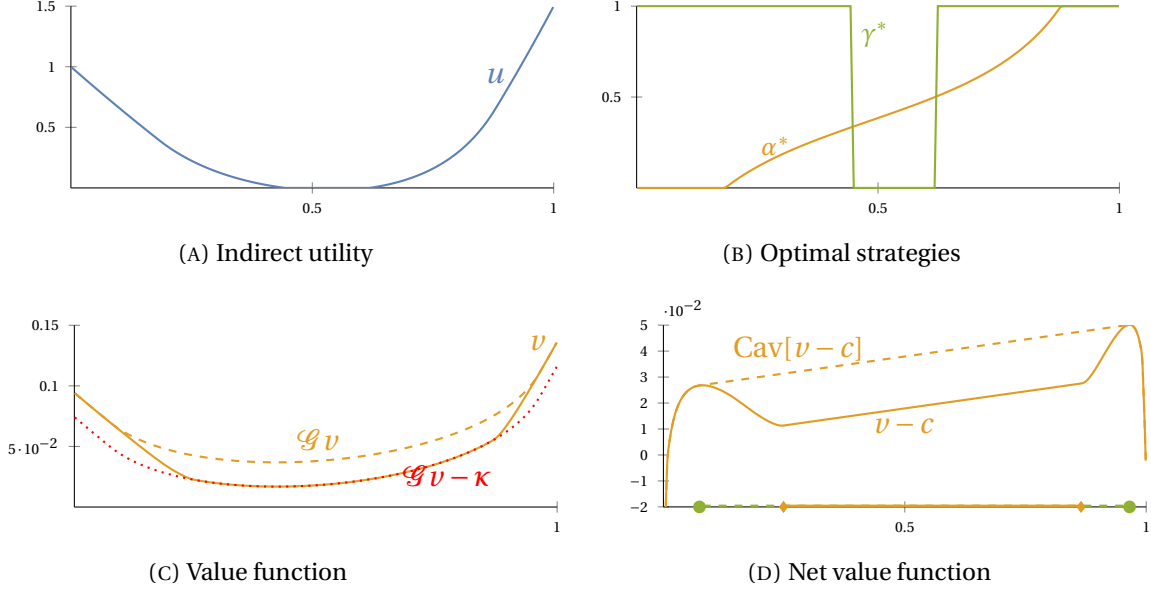

\section{Extensions and Discussion}\label{sec:Discussion-Extensions}

  \paragraph{More than two values}
    The recursive analysis is unchanged if the state has more than two values.\footnote{
        With $N$ states, transitions are governed by a time-homogeneous rate matrix $\Lambda$, which describes the rate of jumps between states; beliefs dynamics have explicit form using the matrix exponential. 
        See the appendix in the working paper version for details.}
    Optimal information acquisition regions are defined similarly and optimal experiments are constructed using supporting hyperplanes (with experiment convex polytopes replacing experiment intervals).  
    While the geometric structure of solution is preserved, belief dynamics are more complex, as belief paths can now curve through a multi-dimensional space, rather than simply drifting along a line to $\pi$. 
    In the long run, dynamics may involve "super-cycles" between multiple experiment regions rather than simple cycles. 
    The simpler cyclical dynamics generalize in some cases -- intuitively when belief paths have sufficiently low curvature (for instance if state transitions occur uniformly) or if there are sufficiently few information acquisition regions. 
    However, in most cases, the more complex dynamics introduce interesting possibilities for richer, multi-dimensional patterns of information acquisition. 

  \paragraph{Information costs}
    Generalizing the analysis to alternative information costs is challenging because the structure of optimal experiments and dynamics depends on uniform posterior separability. 
    UPS costs offer a simple belief-based framework that reduces complex dynamics to decisions over "certainty thresholds," which is useful for modeling agents with cognitive or resource constraints.
    Nonetheless, UPS costs have undesirable properties for certain applications \citep[see for example][for discussions]{denti2022posterior,caplin2022rationally,denti2022experimental,bloedel2020cost,hebert2021neighborhood}. 
    For alternative costs, the overall framework remains valid.\footnote{The recursive structure, fixed-point analysis, and general verification theorem extend to substantially weaker assumptions; results like continuity, convexity, and existence of solutions require only minimal conditions such as continuity over feasible posterior distributions and concavity in the prior \citep[as in][]{denti2022experimental}.} 
    Tractability of the continuation value operator in belief space is the essential bottleneck to explore alternative cost classes and investigate how differences in static information costs translate into dynamics of repeated information acquisition.

  \paragraph{Exogenous information}
    Results extend to accommodate exogenous background information.
    In the definition of the problem it is not essential that the belief dynamics \emph{in between moments of information acquisition} follow a deterministic drift: one may substitute richer dynamics, for instance considering a Brownian flow of information so that in between moments of information acquisition beliefs follow $dp_t = \lambda (\pi-p_t) dt + \sigma p_t(1-p_t) dB_t$.
    The recursive equation \eqref{eq:recursive-equation} is virtually unchanged; even though the stopping problem is now stochastic, the fixed point result (\hyref{Theorem}{thm:recursive-characterization}) and interval characterization of solutions (\hyref{Theorem}{thm:optimal-policies}) go through under mild regularity assumptions.
    Dynamics are naturally different but can be analyzed in the same manner: this is a simple way to incorporate richer dynamics. 
    With Brownian information acquisition times become stochastic and beliefs may escape the cycle regime.

  \paragraph{Relation to dynamic persuasion}
    The model can be reinterpreted as capturing a situation of dynamic communication, where the DM is a sender who periodically commits to sending information (at a fixed cost, but flexibly designed) about a changing state of the world to a strategic agent who takes myopically optimal decisions. 
    The equivalence is clearest in the modified Bellman equation for the net value function \eqref{eq:net-recursive-equation}: replacing the virtual flow payoff $f$ with some other arbitrary continuous function, it describes the sender's problem, mirroring results in the static problem \citep[see e.g.][]{gentzkow2014costly}. 
    This is particuarly related, formally and thematically, to \citet{ely2017beeps}: the recursive characterization mirrors his Theorem 1, with the notable difference of the fixed cost which endogenizes timing choice (versus a fixed time grid).
        
    \paragraph{Conclusion}
    The question of how to optimally adapt to an evolving environment is fundamental in a wide range of contexts.
    Because attention is a finite resource and information is costly to obtain, it is natural to expect decision makers not to constantly seek new information, and instead periodically and imperfectly update their knowledge of current circumstances.
    As a result, how frequently and how precisely decision makers acquire information has important consequences. 
    Yet, it is challenging to analyze the dynamic value of information, precisely because of the entanglement of its components across time.
    The model developed in this paper is a stepping stone, studying a tractable framework which both precisely captures the tradeoff between frequency and quality in general environments and delivers a solution method to study more detailed questions of interest.
    The general solution leads to further simplifications when focusing on specific environments. 
    Developing precise applications in e.g. finance or policy is a promising avenue for future research.
    The tractability of the model and the characterization of solution also delineates promising paths for future theoretical work integrating optimal periodic information acquisition with a changing world in more complex settings like strategic environments.

\bibliographystyle{plainnat}
\bibliography{references.bib}

\begin{thebibliography}{54}
\providecommand{\natexlab}[1]{#1}
\providecommand{\url}[1]{\texttt{#1}}
\expandafter\ifx\csname urlstyle\endcsname\relax
  \providecommand{\doi}[1]{doi: #1}\else
  \providecommand{\doi}{doi: \begingroup \urlstyle{rm}\Url}\fi

\bibitem[Afrouzi and Yang(2021)]{afrouzi2021dynamic}
Hassan Afrouzi and Choongryul Yang.
\newblock Dynamic rational inattention and the phillips curve.
\newblock 2021.

\bibitem[Attouch(1984)]{Attouch1984}
Hedy Attouch.
\newblock \emph{Variational Convergence for Functions and Operators}.
\newblock Applicable Mathematics Series. Pitman, Boston, MA, 1984.
\newblock ISBN 9780273085836.

\bibitem[Attouch and Wets(1989)]{AttouchWets1989}
Hedy Attouch and Roger J.-B. Wets.
\newblock Epigraphical analysis.
\newblock \emph{Annales de l'Institut Henri Poincar{\'e} C, Analyse non
  lin{\'e}aire}, 6\penalty0 (Suppl{\'e}ment):\penalty0 73--100, 1989.
\newblock \doi{10.1016/S0294-1449(17)30036-7}.
\newblock URL \url{https://www.numdam.org/item/AIHPC_1989__S6__73_0.pdf}.

\bibitem[Aumann et~al.(1995)Aumann, Maschler, and Stearns]{aumann1995repeated}
Robert~J Aumann, Michael Maschler, and Richard~E Stearns.
\newblock \emph{Repeated games with incomplete information}.
\newblock MIT press, 1995.

\bibitem[Ball and Knoepfle(2023)]{Ball_2023}
Ian Ball and Jan Knoepfle.
\newblock Should the timing of inspections be predictable?
\newblock \emph{SSRN Electronic Journal}, 2023.
\newblock ISSN 1556-5068.
\newblock \doi{10.2139/ssrn.4419716}.
\newblock URL \url{http://dx.doi.org/10.2139/ssrn.4419716}.

\bibitem[Beer et~al.(1992)Beer, Rockafellar, and Wets]{BeerRockafellarWets1992}
Gerald Beer, R.~Tyrrell Rockafellar, and Roger J.-B. Wets.
\newblock A characterization of epi-convergence in terms of convergence of
  level sets.
\newblock \emph{Proceedings of the American Mathematical Society}, 116\penalty0
  (3):\penalty0 753--761, 1992.
\newblock \doi{10.1090/S0002-9939-1992-1119262-6}.

\bibitem[Billingsley(2013)]{billingsley2013convergence}
Patrick Billingsley.
\newblock \emph{Convergence of probability measures}.
\newblock John Wiley \& Sons, 2013.

\bibitem[Bloedel and Zhong(2020)]{bloedel2020cost}
Alexander~W Bloedel and Weijie Zhong.
\newblock The cost of optimally-acquired information.
\newblock \emph{Unpublished Manuscript, November}, 2020.

\bibitem[Calvo(1983)]{Calvo1983JME}
Guillermo~A. Calvo.
\newblock Staggered prices in a utility-maximizing framework.
\newblock \emph{Journal of Monetary Economics}, 12\penalty0 (3):\penalty0
  383--398, 1983.
\newblock \doi{10.1016/0304-3932(83)90060-0}.

\bibitem[Caplin et~al.(2022)Caplin, Dean, and Leahy]{caplin2022rationally}
Andrew Caplin, Mark Dean, and John Leahy.
\newblock Rationally inattentive behavior: Characterizing and generalizing
  shannon entropy.
\newblock \emph{Journal of Political Economy}, 130\penalty0 (6):\penalty0
  1676--1715, 2022.

\bibitem[Che and Mierendorff(2019)]{che2019optimal}
Yeon-Koo Che and Konrad Mierendorff.
\newblock Optimal dynamic allocation of attention.
\newblock \emph{American Economic Review}, 109\penalty0 (8):\penalty0
  2993--3029, 2019.

\bibitem[Che et~al.(2024)Che, Kim, and Mierendorff]{che2024predictive}
Yeon-Koo Che, Jinwoo Kim, and Konrad Mierendorff.
\newblock Predictive enforcement.
\newblock \emph{arXiv preprint arXiv:2405.04764}, 2024.

\bibitem[Curello and Sinander(2022)]{curello2022comparative}
Gregorio Curello and Ludvig Sinander.
\newblock The comparative statics of persuasion.
\newblock \emph{arXiv preprint arXiv:2204.07474}, 2022.

\bibitem[Dasaratha et~al.(2023)Dasaratha, Golub, and
  Hak]{dasaratha2023learning}
Krishna Dasaratha, Benjamin Golub, and Nir Hak.
\newblock Learning from neighbours about a changing state.
\newblock \emph{Review of Economic Studies}, 90\penalty0 (5):\penalty0
  2326--2369, 2023.

\bibitem[Davies(2024)]{davies2024learning}
Benjamin Davies.
\newblock Learning about a changing state.
\newblock \emph{arXiv preprint arXiv:2401.03607}, 2024.

\bibitem[Denti(2022)]{denti2022posterior}
Tommaso Denti.
\newblock Posterior separable cost of information.
\newblock \emph{American Economic Review}, 112\penalty0 (10):\penalty0
  3215--3259, 2022.

\bibitem[Denti et~al.(2022)Denti, Marinacci, and
  Rustichini]{denti2022experimental}
Tommaso Denti, Massimo Marinacci, and Aldo Rustichini.
\newblock Experimental cost of information.
\newblock \emph{American Economic Review}, 112\penalty0 (9):\penalty0
  3106--3123, 2022.

\bibitem[Dworczak and Kolotilin(2024)]{dworczak2024persuasion}
Piotr Dworczak and Anton Kolotilin.
\newblock The persuasion duality.
\newblock \emph{Theoretical Economics}, 19\penalty0 (4):\penalty0 1701--1755,
  2024.

\bibitem[Ely(2017)]{ely2017beeps}
Jeffrey~C Ely.
\newblock Beeps.
\newblock \emph{American Economic Review}, 107\penalty0 (1):\penalty0 31--53,
  2017.

\bibitem[Ethier and Kurtz(1986)]{ethier2009markov}
Stewart~N Ethier and Thomas~G Kurtz.
\newblock \emph{Markov processes: characterization and convergence}.
\newblock John Wiley \& Sons, 1986.

\bibitem[Frankel and Kamenica(2019)]{frankel2019quantifying}
Alexander Frankel and Emir Kamenica.
\newblock Quantifying information and uncertainty.
\newblock \emph{American Economic Review}, 109\penalty0 (10):\penalty0
  3650--3680, 2019.

\bibitem[Galai and Sade(2006)]{galai2006ostrich}
Dan Galai and Orly Sade.
\newblock The ``ostrich effect'' and the relationship between the liquidity and
  the yields of financial assets.
\newblock \emph{The Journal of Business}, 79\penalty0 (5):\penalty0 2741--2759,
  2006.

\bibitem[Gentzkow and Kamenica(2014)]{gentzkow2014costly}
Matthew Gentzkow and Emir Kamenica.
\newblock Costly persuasion.
\newblock \emph{American Economic Review}, 104\penalty0 (5):\penalty0 457--462,
  2014.

\bibitem[Georgiadis-Harris(2023)]{georgiadisharris2023}
Alkis Georgiadis-Harris.
\newblock Preparing to act: Information acquisition and the timing of actions.
\newblock 2023.

\bibitem[H{\'e}bert and Woodford(2021)]{hebert2021neighborhood}
Benjamin H{\'e}bert and Michael Woodford.
\newblock Neighborhood-based information costs.
\newblock \emph{American Economic Review}, 111\penalty0 (10):\penalty0
  3225--3255, 2021.

\bibitem[H{\'e}bert and Woodford(2023)]{hebert2023rational}
Benjamin H{\'e}bert and Michael Woodford.
\newblock Rational inattention when decisions take time.
\newblock \emph{Journal of Economic Theory}, 208:\penalty0 105612, 2023.

\bibitem[Kamenica and Gentzkow(2011)]{kamenica2011bayesian}
Emir Kamenica and Matthew Gentzkow.
\newblock Bayesian persuasion.
\newblock \emph{American Economic Review}, 101\penalty0 (6):\penalty0
  2590--2615, 2011.

\bibitem[Karlsson et~al.(2009)Karlsson, Loewenstein, and
  Seppi]{karlsson2009ostrich}
Niklas Karlsson, George Loewenstein, and Duane Seppi.
\newblock The ostrich effect: Selective attention to information.
\newblock \emph{Journal of Risk and uncertainty}, 38:\penalty0 95--115, 2009.

\bibitem[Khaw et~al.(2017)Khaw, Stevens, and Woodford]{khaw2017discrete}
Mel~Win Khaw, Luminita Stevens, and Michael Woodford.
\newblock Discrete adjustment to a changing environment: Experimental evidence.
\newblock \emph{Journal of Monetary Economics}, 91:\penalty0 88--103, 2017.

\bibitem[L{\'e}vy et~al.(2022)L{\'e}vy, P{\k{e}}ski, and
  Vieille]{levy2022stationary}
Rapha{\"e}l L{\'e}vy, Marcin P{\k{e}}ski, and Nicolas Vieille.
\newblock Stationary social learning in a changing environment.
\newblock \emph{arXiv preprint arXiv:2201.02122}, 2022.

\bibitem[Ma{\'c}kowiak and Wiederholt(2009)]{mackowiak2009optimal}
Bartosz Ma{\'c}kowiak and Mirko Wiederholt.
\newblock Optimal sticky prices under rational inattention.
\newblock \emph{American Economic Review}, 99\penalty0 (3):\penalty0 769--803,
  2009.

\bibitem[Ma{\'c}kowiak et~al.(2018)Ma{\'c}kowiak, Mat{\v{e}}jka, and
  Wiederholt]{mackowiak2018dynamic}
Bartosz Ma{\'c}kowiak, Filip Mat{\v{e}}jka, and Mirko Wiederholt.
\newblock Dynamic rational inattention: Analytical results.
\newblock \emph{Journal of Economic Theory}, 176:\penalty0 650--692, 2018.

\bibitem[Mankiw(1985)]{Mankiw1985QJE}
N.~Gregory Mankiw.
\newblock Small menu costs and large business cycles: A macroeconomic model of
  monopoly.
\newblock \emph{The Quarterly Journal of Economics}, 100\penalty0 (2):\penalty0
  529--538, 1985.
\newblock \doi{10.2307/1885395}.

\bibitem[Mankiw and Reis(2002)]{mankiw2002sticky}
N~Gregory Mankiw and Ricardo Reis.
\newblock Sticky information versus sticky prices: a proposal to replace the
  new keynesian phillips curve.
\newblock \emph{The Quarterly Journal of Economics}, 117\penalty0 (4):\penalty0
  1295--1328, 2002.

\bibitem[Mankiw and Reis(2007)]{MankiwReis2007JEEA}
N.~Gregory Mankiw and Ricardo Reis.
\newblock Sticky information in general equilibrium.
\newblock \emph{Journal of the European Economic Association}, 5\penalty0
  (2-3):\penalty0 603--613, 2007.
\newblock \doi{10.1162/jeea.2007.5.2-3.603}.

\bibitem[Marinacci and Montrucchio(2019)]{marinacci2019unique}
Massimo Marinacci and Luigi Montrucchio.
\newblock Unique tarski fixed points.
\newblock \emph{Mathematics of Operations Research}, 44\penalty0 (4):\penalty0
  1174--1191, 2019.

\bibitem[Morris and Strack(2019)]{morris2019wald}
Stephen Morris and Philipp Strack.
\newblock The wald problem and the relation of sequential sampling and ex-ante
  information costs.
\newblock \emph{Available at SSRN 2991567}, 2019.

\bibitem[Moscarini et~al.(1998)Moscarini, Ottaviani, and
  Smith]{moscarini1998social}
Giuseppe Moscarini, Marco Ottaviani, and Lones Smith.
\newblock Social learning in a changing world.
\newblock \emph{Economic Theory}, 11:\penalty0 657--665, 1998.

\bibitem[Peskir and Shiryaev(2006)]{peskir2006optimal}
Goran Peskir and Albert Shiryaev.
\newblock \emph{Optimal stopping and free-boundary problems}.
\newblock Springer, 2006.

\bibitem[Pomatto et~al.(2023)Pomatto, Strack, and Tamuz]{pomatto2023cost}
Luciano Pomatto, Philipp Strack, and Omer Tamuz.
\newblock The cost of information: The case of constant marginal costs.
\newblock \emph{American Economic Review}, 113\penalty0 (5):\penalty0
  1360--1393, 2023.

\bibitem[Reis(2006{\natexlab{a}})]{reis2006inattentiveconsumers}
Ricardo Reis.
\newblock Inattentive consumers.
\newblock \emph{Journal of monetary Economics}, 53\penalty0 (8):\penalty0
  1761--1800, 2006{\natexlab{a}}.

\bibitem[Reis(2006{\natexlab{b}})]{reis2006inattentiveproducers}
Ricardo Reis.
\newblock Inattentive producers.
\newblock \emph{The Review of Economic Studies}, 73\penalty0 (3):\penalty0
  793--821, 2006{\natexlab{b}}.

\bibitem[Rockafellar and Wets(1998)]{RockafellarWets1998}
R.~Tyrrell Rockafellar and Roger J.-B. Wets.
\newblock \emph{Variational Analysis}, volume 317 of \emph{Grundlehren der
  mathematischen Wissenschaften}.
\newblock Springer, Berlin, Heidelberg, 1998.
\newblock \doi{10.1007/978-3-642-02431-3}.

\bibitem[Sheshinski and Weiss(1977)]{SheshinskiWeiss1977RES}
Eytan Sheshinski and Yoram Weiss.
\newblock Inflation and costs of price adjustment.
\newblock \emph{The Review of Economic Studies}, 44\penalty0 (2):\penalty0
  287--303, 1977.
\newblock \doi{10.2307/2297067}.

\bibitem[Sicherman et~al.(2016)Sicherman, Loewenstein, Seppi, and
  Utkus]{sicherman2016financial}
Nachum Sicherman, George Loewenstein, Duane~J Seppi, and Stephen~P Utkus.
\newblock Financial attention.
\newblock \emph{The Review of Financial Studies}, 29\penalty0 (4):\penalty0
  863--897, 2016.

\bibitem[Sims(2003)]{sims2003implications}
Christopher~A Sims.
\newblock Implications of rational inattention.
\newblock \emph{Journal of monetary Economics}, 50\penalty0 (3):\penalty0
  665--690, 2003.

\bibitem[Steiner et~al.(2017)Steiner, Stewart, and
  Mat{\v{e}}jka]{steiner2017rational}
Jakub Steiner, Colin Stewart, and Filip Mat{\v{e}}jka.
\newblock Rational inattention dynamics: Inertia and delay in decision-making.
\newblock \emph{Econometrica}, 85\penalty0 (2):\penalty0 521--553, 2017.

\bibitem[Taylor(1980)]{Taylor1980JPE}
John~B. Taylor.
\newblock Aggregate dynamics and staggered contracts.
\newblock \emph{Journal of Political Economy}, 88\penalty0 (1):\penalty0 1--23,
  1980.
\newblock \doi{10.1086/260845}.

\bibitem[Van~Nieuwerburgh and Veldkamp(2010)]{van2010information}
Stijn Van~Nieuwerburgh and Laura Veldkamp.
\newblock Information acquisition and under-diversification.
\newblock \emph{The Review of Economic Studies}, 77\penalty0 (2):\penalty0
  779--805, 2010.

\bibitem[Varas et~al.(2020)Varas, Marinovic, and Skrzypacz]{varas2020random}
Felipe Varas, Iv{\'a}n Marinovic, and Andrzej Skrzypacz.
\newblock Random inspections and periodic reviews: Optimal dynamic monitoring.
\newblock \emph{The Review of Economic Studies}, 87\penalty0 (6):\penalty0
  2893--2937, 2020.

\bibitem[Whitmeyer(2024)]{whitmeyer2024}
Mark Whitmeyer.
\newblock Making information more valuable.
\newblock 2024.

\bibitem[Whittle(1988)]{whittle1988restless}
Peter Whittle.
\newblock Restless bandits: Activity allocation in a changing world.
\newblock \emph{Journal of applied probability}, 25\penalty0 (A):\penalty0
  287--298, 1988.

\bibitem[Wong(2023)]{wong2023dynamic}
Yu~Fu Wong.
\newblock Dynamic monitoring design.
\newblock \emph{Available at SSRN 4466562}, 2023.

\bibitem[Zhong(2022)]{zhong2022optimal}
Weijie Zhong.
\newblock Optimal dynamic information acquisition.
\newblock \emph{Econometrica}, 90\penalty0 (4):\penalty0 1537--1582, 2022.

\end{thebibliography}

\vspace{2em}

\appendix

\setcounter{section}{0}
\renewcommand{\thesection}{\Alph{section}}
\renewcommand{\thesubsection}{\Alph{section}.\arabic{subsection}}

{\huge \scshape \bfseries Appendix}

\section{Preliminaries}\label{sec:appendix:preliminaries}

    \paragraph{Belief process construction} 
    Given an initial belief $p \in \Delta(\Theta)$, arbitrary (random) sequences of increasing times $(\tau_i)_{i \geq 0}$ (with $\tau_0=0$ a.s.) and distributions $(F_i)_{i \geq 0}$ in $\Delta \Delta(\Theta)$, construct the \emph{induced belief process} $\{P_t\}_{t \geq 0}$ iteratively as follows. 
    Draw $P_0$ according to $F_0(p)$. 
    For $i \geq 0$, iterate the construction: for $t \in [0,\tau_{i+1}-\tau_i)$ the belief process is generated by the deterministic drift induced by the Markov chain, i.e. $P_{\tau_i+t} = e^{-\lambda t} P_{\tau_i} + (1-e^{-\lambda t}) \pi$ a.s. 
    If $\tau_{i+1} = \infty$, the entire belief process is characterized; otherwise if $\tau_{i+1} < \infty$, at $\tau_{i+1}$ a new belief is drawn according to the realization of the corresponding experiment, i.e. $P_{\tau_{i+1}} \sim F_{i+1}$. 
    If $\tau_\id \rightarrow T$ for some finite $T$, simply extend the process by assuming that no information acquisition occurs after $T$: $P_t = e^{(t-T) \lambda} P_{T} + (1-e^{(t-T)\lambda}) \pi$ for $t \geq T$.
    This is a well-defined construction since it can be generated using countably many uniform random variables.
    From now on, for any initial belief $p$ and sequence of times and experiments $\xi=(\tau_i,F_i)_{i \geq 0}$, denote $P^{p,\xi}=(P_t^{p,\xi})_{t \geq 0}$ the induced belief process as previously defined.

    \paragraph{Admissible policies and formal problem}
    Define $\Xi(p)$ the set of \textbf{information acquisition policies} given initial belief $p$ as the set of (random) sequences $\xi = (\tau_i,F_i)_{i \geq 0}$ of information acquisition times and experiments, such that (i) the $\tau_i$ are progressively measurable with respect to the induced belief process $P^{p,\xi}$; (ii) each experiment $F_i$ is measurable with respect to the left-limit stopped process $\bigl( \lim_{ s \uparrow t } P_s^{p,\xi}\bigr)_{t < \tau_i}$; (iii) for all $\id$, $F_\id$ is consistent with Bayes rule i.e. $F_\id \in \BayesPlausible(P^{p,\xi}_{\tau_\id^-})$. 
    For any $\xi \in \Xi(p)$, denote $(\tau_i^\xi,F_i^\xi)$ the corresponding full form. Let $\Xi = \cup_p \Xi(p)$.

    The formal statement of the information acquisition problem is:
      \begin{align*}
        v(p):= \sup_{\xi \in \Xi(p)} \Esp \Biggl[  \int_0^\infty e^{-rt} u \bigl(P^{p,\xi}_t \bigr) dt - \sum_{i \geq 0} e^{-r \tau^\xi_i} \Bigl( C(F^\xi_i)+ \kappa \Bigr) \Biggr]
      \end{align*}
    The recursive operators $\StopVal,\InfoVal$ are defined on $\CandidateValue$ as:
      \begin{align*}
        \InfoVal \tilde{v}(p) := \sup_{F \in \BayesPlausible(p)} \int \tilde{v} dF - C(F); 
        \hspace{2em} \StopVal g(p) := \sup_{\tau \geq 0} \int_0^\tau e^{-rt} u(e^{t\Lambda} p) dt + e^{-r\tau} (g(e^{\tau\Lambda} p) - \kappa)
      \end{align*}

\section{Characterization of solutions: omitted proofs}\label{sec:appendix:recursive-characterization}

    \subsection{Existence and uniqueness of a fixed point: proof of \hyref{Theorem}{thm:recursive-characterization}}\label{sec:appendix:FP-existence-uniqueness}

        It is straightforward to prove that the value function $v$ is a fixed point of $\Phi := \StopVal \circ \InfoVal$ and lies in $\CandidateValue$.
        The proof of existence and uniqueness relies on applying Theorem 1 from \cite{marinacci2019unique}.
        The space of real-valued bounded measurable functions on $\Delta(\Theta)$ equipped with the pointwise order is a Riesz space; $\CandidateValue$ is order-convex and chain-complete since on a Riesz space those sets are exactly the order intervals and $\CandidateValue = [\underline{v},\overline{v}]$. 
        Recall $\BellmanOperator= \StopVal \circ \InfoVal$ so it suffices to prove that both $\StopVal$ and $\InfoVal$ are monotone for $\Phi$ to be monotone -- which is direct using pointwise domination inside of each supremum.
        We then need to verify that $\BellmanOperator(\CandidateValue) \subseteq \CandidateValue$. Clearly $\BellmanOperator$ maps the space of bounded measurable functions to itself and for any $v \in \CandidateValue$ by definition of the supremum and feasibility of the policy which consists in never acquiring information $\BellmanOperator v \geq \underline{v}$. 
        Since $\overline{v}$ is concave and $C$ is Blackwell-monotone: $\InfoVal \overline{v}(p) = \overline{v}(p)$; hence, we have $\BellmanOperator \overline{v}(p) \leq \overline{v}(p)$, using the fact that $\overline{v}$ upper bounds the best achievable flow payoff starting from any belief in the stopping problem. 
        Using this and monotonicity of $\BellmanOperator$ gives for any $v \in \CandidateValue$, $\BellmanOperator v \leq \BellmanOperator \overline{v} < \overline{v}$. Putting the two together gives $\BellmanOperator(\CandidateValue) \subseteq \CandidateValue$.
        Both $\StopVal$ and $\InfoVal$ are order-convex using convexity of the supremum operator, so $\BellmanOperator=\StopVal \circ \InfoVal$ is order-convex. 
        The upper perimeter is given in this case by:
            \begin{align*}
              \partial^\diamond \CandidateValue = \{ w \in \CandidateValue | \inf_{p \in \Delta(\Theta)} \overline{v}(p)-w(p) = 0 \}
            \end{align*}
        i.e. it is the set of functions that get arbitrarily close to the upper bound $\overline{v}$. This is obtained by definition of the upper perimeter and can also be seen from Proposition 4 in Marinacci and Montrucchio. Now take any $w \in  \partial^\diamond I$ and any $p$ such that $\overline{v}(p)-w(p)<\varepsilon$ for $\varepsilon<\kappa/2$. We can show a direct contradiction to $\BellmanOperator w = w$ by observing that $\BellmanOperator w(p) \leq \overline{v}(p)-\kappa$ (intuitively the RHS is an upper bound on the best possible outcome: the agent cannot do strictly better than perfect observation right now and forever after, which itself cannot be obtained without paying the fixed cost at least once, even if we ignore all other costs). 
        This completes the proof of existence and uniqueness.
        Since both operators $\InfoVal,\StopVal$ map continuous functions to continuous functions, we can redo the proof of existence and uniqueness as before over the subset of continuous functions in $\CandidateValue$; since this yields a fixed point of $\BellmanOperator$ in a subset of $\CandidateValue$, this fixed point must be the unique one over the whole set.
   
    \subsection{Optimal policies: proof of \hyref{Theorem}{thm:optimal-policies}}\label{sec:appendix:optimal-policies}

        \paragraph{Optimal experiments and the continuation value operator} 
        Implications of uniformly posterior separable costs for \emph{static} information acquisition are now well-studied. 
        The following proposition states a collection of key properties in the context of this model. 
        These can be proven using, for instance, results in \cite{caplin2022rationally,gentzkow2014costly,dworczak2024persuasion}.

        \begin{appendixproposition}
        \label{prop:concavification}
            Consider an arbitrary continuous function $\tilde{v}$ over $\Delta(\Theta)$.
            \begin{enumerate}[label=(\roman*)]
              \item \textbf{Value of information via concave envelope:} $\InfoVal \tilde{v}(p) = \Cav [\tilde{v}-c] (p) + c(p)$.
              \item \textbf{Geometric characterization of optimal experiments:} any optimal experiment at $p \in \Delta(\Theta)$ is supported over points where the supporting hyperplane of the convex hull of the subgraph of $\tilde{v}-c$ at $p$ meets the graph of $\tilde{v}-c$ and conversely, any Bayes-Plausible experiment supported over those points is optimal. Further, for any $p \in \Delta(\Theta)$ there must exist some optimal experiment which induces at most $|\Theta|$ possible distinct posteriors (i.e. $|\supp(F_p)| \leq |\Theta|$).
            \end{enumerate}
        \end{appendixproposition}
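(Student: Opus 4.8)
The plan is to reduce both claims to the classical concavification characterization of static information acquisition, exploiting the fact that, once the UPS structure is substituted in, the objective becomes linear in $F$. I would first establish (i) by a direct algebraic rewriting. Since $C(F) = \int c \, dF - c(p)$ whenever $p = \int q \, dF(q)$, for any $F \in \BayesPlausible(p)$ the objective is
\[
\int \tilde v \, dF - C(F) = \int (\tilde v - c)\, dF + c(p).
\]
Taking the supremum over $F \in \BayesPlausible(p)$, the constant $c(p)$ factors out and what remains is $\sup\bigl\{ \int g \, dF \bigm| \int q \, dF(q) = p \bigr\}$ for $g := \tilde v - c$. This is exactly the concavification functional: the supremum of the expectation of $g$ over all distributions with barycenter $p$ equals the concave envelope $\Cav[g](p)$. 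I would invoke this standard lemma (as established in \citet{caplin2022rationally,gentzkow2014costly,dworczak2024persuasion}), which yields $\InfoVal \tilde v(p) = \Cav[\tilde v - c](p) + c(p)$ immediately.

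For (ii) I would characterize the maximizers of the same functional. By definition of the concave envelope, the point $(p, \Cav[g](p))$ lies on the upper boundary of the convex hull of the subgraph of $g$; hence there is a supporting hyperplane to this convex hull at that point. Any optimal $F$ must be supported on the contact set where this hyperplane meets the graph of $g$: a posterior strictly below the hyperplane would pull $\int g \, dF$ beneath the envelope value, whereas any Bayes-plausible distribution concentrated on the contact set attains it. This gives both directions of the support characterization. For the cardinality bound, I would appeal to Carathéodory's theorem: because $\Delta(\Theta)$ is a simplex of dimension $|\Theta|-1$, the point $(p, \Cav[g](p))$ can be written as a convex combination of at most $|\Theta|$ extreme points of the convex hull that lie on the graph of $g$, and this combination defines an optimal experiment with $|\supp(F_p)| \leq |\Theta|$.

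The main obstacle is less in the geometry than in the regularity needed to make the concavification lemma and the supporting-hyperplane argument rigorous. Existence of a maximizer follows from weak-* compactness of $\BayesPlausible(p)$ together with upper semicontinuity of $F \mapsto \int (\tilde v - c)\, dF$; here continuity of $\tilde v$ is assumed in the statement and convexity of $c$ makes it lower semicontinuous, so $g$ is upper semicontinuous and bounded above on the compact simplex, which is exactly what guarantees that $\Cav[g]$ is upper semicontinuous and its defining supremum is attained. Some care is required where $c$ may fail to be finite on the boundary of $\Delta(\Theta)$: there $g$ can equal $-\infty$, so I would first argue that no optimal experiment places mass on such points and that the relevant barycenter $p$ lies in the region where $g$ is real-valued, restricting attention to a domain on which the supporting hyperplane argument and Carathéodory's theorem apply cleanly. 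With these regularity points in place, the algebraic identity of (i) and the extremal characterization of (ii) follow from the cited static results.
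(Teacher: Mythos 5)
Your proposal is correct and follows essentially the same route as the paper, which does not spell out a proof of \hyref{Proposition}{prop:concavification} but simply defers to the standard static concavification results in \citet{caplin2022rationally,gentzkow2014costly,dworczak2024persuasion}; your algebraic reduction of the UPS objective to $\int(\tilde v - c)\,dF + c(p)$, the supporting-hyperplane contact-set characterization, and the Carath\'eodory bound are exactly the arguments those references supply (and the identity in (i) is the one-line computation the paper already displays in \hyref{Section}{sec:optimal-policies}). Your attention to boundary behavior of $c$ and to attainment of the supremum is a sensible addition, consistent with how the paper handles existence via continuity of $C$ on each $\BayesPlausible(p)$ in \hyref{Lemma}{lemma:subproblem-solutions-existence}.
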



        \paragraph{Stopping value and dynamic value of information acquisition.} 
        The following proposition restates classical facts about optimal stopping (see e.g. \cite{peskir2006optimal}).
        \begin{appendixproposition}
        \label{prop:optimal-stopping-facts}
            Let $w$ a candidate value function. The value in the problem of optimal timing of one-shot information acquisition with continuation value $w$, which is given by $\BellmanOperator w = \StopVal (\InfoVal w)$, is the unique solution $\tilde{w}$ to:
            \begin{align*}
                \min \biggl\{ r \tilde{w}(p) - u(p) - \lambda (\pi-p) \tilde{w}'(p), \tilde{w}(p) - \InfoVal w(p) \biggr\} = 0.
            \end{align*}
            The stopping policy defined by $\tau(p) := \inf \biggl\{ t \geq 0 \biggm| \BellmanOperator w(p_t) = \InfoVal w(p_t) \biggr\}$ is optimal.
        \end{appendixproposition}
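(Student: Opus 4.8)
The plan is to treat this as a standard deterministic optimal stopping problem: I would establish the variational-inequality (obstacle problem) characterization by a verification argument and then read off optimality of the first hitting time. The key structural simplification is that the belief path $p_t = e^{-\lambda t} p + (1-e^{-\lambda t})\pi$ is deterministic and drifts monotonically toward $\pi$, so from any $p$ the trajectory traces a monotone segment of $[0,1]$ terminating at $\pi$, and the stopping problem collapses to a one-dimensional choice of where along that segment to stop. Write $g := \InfoVal w$, so that $\tilde w := \BellmanOperator w = \StopVal g$ is the value of stopping with terminal payoff $g - \kappa$.

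First I would show that $\tilde w$ satisfies the variational inequality via the dynamic programming principle. Stopping immediately is feasible, so $\tilde w(p) \geq g(p) - \kappa$, which gives nonnegativity of the obstacle term. Waiting a short time $h$ and then continuing optimally is also feasible, so $\tilde w(p) \geq \int_0^h e^{-rt} u(p_t)\,dt + e^{-rh}\tilde w(p_h)$; rearranging, dividing by $h$, and letting $h \downarrow 0$ yields $r\tilde w(p) - u(p) - \lambda(\pi-p)\tilde w'(p) \geq 0$ wherever $\tilde w$ is differentiable. Complementary slackness then follows: at any $p$ where stopping is strictly suboptimal the optimal policy continues for a positive time, so the last inequality holds with equality and the first term of the $\min$ vanishes, while at any $p$ where stopping is optimal the obstacle term vanishes. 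Hence the $\min$ equals zero.

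Next I would prove uniqueness and optimality of the hitting time together by a verification argument exploiting the deterministic dynamics. Given any solution $\hat w$ of the variational inequality, along each trajectory the continuation inequality gives $\frac{d}{dt}\bigl[ e^{-rt} \hat w(p_t) \bigr] \leq -e^{-rt} u(p_t)$; integrating up to any stopping time $\tau$ and invoking the obstacle bound $\hat w \geq g - \kappa$ at $p_\tau$ shows $\hat w(p) \geq \int_0^\tau e^{-rt} u(p_t)\,dt + e^{-r\tau}\bigl(g(p_\tau) - \kappa\bigr)$, and taking the supremum over $\tau$ gives $\hat w \geq \StopVal g = \tilde w$. For the reverse inequality I would stop at the first hitting time of the contact set $\{\hat w = g - \kappa\}$: on its complement the first term of the $\min$ is zero, so the ODE $r\hat w = u + \lambda(\pi-p)\hat w'$ holds and integration yields equality, giving $\hat w \leq \tilde w$. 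Thus $\hat w = \tilde w$, and the same computation shows that the first entry $\tau(p)$ into the stopping region $\{\BellmanOperator w = \InfoVal w - \kappa\}$ attains the supremum, since the discounted value is conserved along the continuation region and the stopping payoff is collected at $\tau(p)$.

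The main obstacle is regularity. Because the terminal payoff $g = \InfoVal w = \Cav[w-c] + c$ is only continuous and generically has kinks, $\tilde w$ need not be $C^1$; and, unlike in diffusive stopping problems, deterministic stopping does not enjoy smooth pasting, so $\tilde w'$ can jump across the boundary of the stopping region. I would address this either by interpreting the variational inequality in the viscosity sense and invoking the comparison principle for first-order Hamilton--Jacobi obstacle problems to obtain uniqueness, or, more concretely, by exploiting the monotone deterministic structure: the restriction $t \mapsto \tilde w(p_t)$ is absolutely continuous along each trajectory and satisfies the ODE-and-obstacle almost everywhere, which is exactly what the integration and comparison arguments above require, with $\tilde w'$ read as the a.e. derivative. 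Continuity of $\tilde w = \BellmanOperator w$ itself is already available from the Berge-type maximum-theorem argument underlying \hyref{Lemma}{lemma:fixed-point-continuous}.
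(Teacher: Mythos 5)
Your proposal is correct, and it is essentially the argument the paper is pointing at: the paper gives no proof of this proposition at all, simply labelling it a collection of ``classical facts about optimal stopping'' and citing Peskir--Shiryaev. What you have written is the standard deterministic verification argument that such a citation stands in for (dynamic programming inequality in both directions, complementary slackness, integration of $\frac{d}{dt}\bigl[e^{-rt}\hat w(p_t)\bigr]$ along the monotone trajectory to $\pi$, and the first hitting time of the contact set attaining the supremum), so you are supplying the omitted details rather than taking a different route. One point in your favour: you carry the $-\kappa$ through the stopping payoff and write the contact set as $\{\hat w = \InfoVal w - \kappa\}$, which is what the definition of $\StopVal$ and the main-text condition $v = \InfoVal v - \kappa$ require; the proposition as printed drops the $\kappa$ from both the obstacle term and the hitting set, which appears to be a typo that your argument silently corrects.

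The one place where you should commit rather than hedge is the uniqueness claim. Reading $\tilde w'$ as an a.e.\ derivative along trajectories is \emph{not} by itself enough for uniqueness: a function whose restriction $t \mapsto e^{-rt}\hat w(p_t)$ differs from the true value by a Cantor-type (singular continuous) perturbation in the continuation region still satisfies the ODE-and-obstacle almost everywhere but is not the value function, so the class of admissible solutions must be pinned down. Either of your two proposed fixes works, but they are not interchangeable in strength: the viscosity formulation with a first-order comparison principle gives uniqueness among continuous functions, whereas the a.e.\ route gives uniqueness only within the class of functions absolutely continuous along trajectories, and you then owe an argument that $\BellmanOperator w$ itself belongs to that class. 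That last step does go through here, but for a reason you should make explicit: the obstacle $\InfoVal w = \Cav[w-c] + c$ is concave-plus-$C^1$, hence locally Lipschitz on the interior of $\Delta(\Theta)$, so $s \mapsto \int_0^s e^{-rt}u(p_t)\,dt + e^{-rs}\bigl(\InfoVal w(p_s) - \kappa\bigr)$ is locally Lipschitz and its running supremum from the right --- which is exactly $e^{-rt}\BellmanOperator w(p_t)$ plus an absolutely continuous term --- inherits local Lipschitzness. With that sentence added, either branch of your regularity discussion closes the argument.
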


        \paragraph{A general verification theorem} 
        Combining \hyref{Proposition}{prop:concavification} and \hyref{Proposition}{prop:optimal-stopping-facts} above to give \hyref{Theorem}{thm:optimal-policies} is justified by the following verification result, which is fairly direct.
        \begin{appendixproposition}[Verification: optimal policies given value function]
        \label{prop:verification-theorem-general-form}
            Let $v$ be the unique fixed point of $\BellmanOperator$. Any optimal strategy $\{\tau_i,F_i\} \in \Xi$ must verify a.s. for any $\id \in \naturals$:
            \begin{align*}
                \tau_i - \tau_{i-1} & \in \argmax_{\substack{\tau \in [0,\infty] \\ p_0 = P_{\tau_{i-1}}}} \int_0^\tau e^{-rt} u(p_t) dt + e^{-r\tau} \InfoVal v(p_\tau);
                \\[.5em]
                F_i & \in \argmax_{F \in \BayesPlausible(P_{\tau_i^-})} \int v dF - C(F).
            \end{align*}
            Where both argmaxes are non-empty a.s. Conversely, any strategy which is almost surely induced in this way by iterated selections of measurable mappings is optimal.
      \end{appendixproposition}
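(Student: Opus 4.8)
The plan is to establish this by the martingale optimality principle, using the fixed-point identity $v=\BellmanOperator v=\StopVal(\InfoVal v)$ to control both the drift and the jumps of an associated gains process. Fix $p$ and an admissible policy $\xi=\{\tau_i,F_i\}\in\Xi(p)$ with induced process $\{P_t\}$, write $J(p,\xi)$ for its objective value, and define
\[
Y_t := e^{-rt} v(P_t) + \int_0^t e^{-rs} u(P_s)\,ds - \sum_{i:\,\tau_i \le t} e^{-r\tau_i}\bigl(C(F_i)+\kappa\bigr),
\]
with the convention $Y_{0^-}=v(p)$. Two elementary consequences of $v$ solving \eqref{eq:recursive-equation} drive everything. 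First, at any acquisition instant and any $F\in\BayesPlausible(q)$ one has $\int v\,dF - C(F) - \kappa \le \InfoVal v(q)-\kappa \le v(q)$, the last inequality being the $\tau=0$ bound in $\StopVal(\InfoVal v)$; hence each jump of $Y$ has nonpositive conditional mean, vanishing exactly when $F_i$ attains $\InfoVal v(P_{\tau_i^-})$ and $P_{\tau_i^-}$ lies in the stopping region $\stoppingregion[v]=\{\,v=\InfoVal v-\kappa\,\}$. Second, over any no-acquisition interval the dynamic programming principle for the deterministic operator $\StopVal$ gives $v(p_s)\ge \int_0^{\sigma} e^{-rt}u(p_{s+t})\,dt + e^{-r\sigma} v(p_{s+\sigma})$ for all $\sigma\ge0$, with equality precisely when the drifting path stays in the continuation region; this furnishes the drift half of the supermartingale property without differentiating $v$.

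Combining the two, $\{Y_t\}$ is a supermartingale for every admissible $\xi$. Since $v$ is bounded (hence $e^{-rt}v(P_t)\to0$), $Y_\infty$ is the realized discounted payoff, and the supermartingale property yields $v(p)=\Esp[Y_{0^-}]\ge \Esp[Y_\infty]=J(p,\xi)$; so $v$ dominates every policy's value. For the converse direction, feed $v$ into the subproblems: by \hyref{Lemma}{lemma:subproblem-solutions-existence} the sets $S^*(\cdot)$ and $E^*(\cdot)$ are nonempty, and any measurable selection $\tau_i-\tau_{i-1}\in S^*(P_{\tau_{i-1}})$, $F_i\in E^*(P_{\tau_i^-})$ produces an admissible policy (progressive measurability of the times and left-limit measurability of the experiments are inherited from the selection, invoking a measurable-selection theorem). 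Along such a policy every inequality above binds—the stopping selection forces drift equality by time-consistency of optimal stopping, the experiment selection forces the jump mean to zero—so $\{Y_t\}$ is a martingale and $v(p)=J(p,\xi)$, proving optimality.

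For necessity, suppose $\xi$ is optimal, so $J(p,\xi)=v(p)$ and the supermartingale $\{Y_t\}$ has equal expectations at $0^-$ and $\infty$; a supermartingale of constant expectation is a martingale, forcing $\Esp[Y_{\tau_i}-Y_{\tau_{i-1}}\mid\mathcal F_{\tau_{i-1}}]=0$ for each $i$. This increment splits into a pathwise-nonpositive drift term and a jump term of nonpositive conditional mean; their sum being zero forces each to vanish almost surely. Vanishing of the jump term gives, a.s., $F_i\in\argmax_{F\in\BayesPlausible(P_{\tau_i^-})}\int v\,dF-C(F)$ together with $P_{\tau_i^-}\in\stoppingregion[v]$; vanishing of the drift term forces the path on $(\tau_{i-1},\tau_i)$ to remain in the continuation region. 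Substituting the former into the latter shows that $\tau_i-\tau_{i-1}$ attains the value $v(P_{\tau_{i-1}})$ in the deterministic stopping problem started at $P_{\tau_{i-1}}$ with stopping payoff $\InfoVal v$ net of the fixed cost—i.e. it lies in the first displayed argmax—completing both directions. (For the genuine value function one may further note $\stoppingregion[v]=\inforegion[v]$, so stopping occurs exactly when acquiring is beneficial.)

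The main obstacle is the infinite-horizon bookkeeping: justifying $\Esp[Y_\infty]=J(p,\xi)$ requires a uniform-integrability or truncation-and-dominated-convergence argument that must also accommodate the degenerate regimes where $\tau_i\to T<\infty$ or $\tau_i=\infty$ (no further acquisition), where the belief-process construction in \hyref{Appendix}{sec:appendix:preliminaries} must be invoked to make $Y_\infty$ well defined. Relatedly, the necessity step hinges on the increments being conditionally \emph{nonpositive} rather than merely of nonpositive mean, so that zero aggregate expected decrease pins each one down almost surely; the deterministic-flow equality characterization supplies exactly this pointwise sign, which is why routing the drift through the $\StopVal$ dynamic programming principle, rather than an HJB requiring differentiability of $v$, is the convenient device.
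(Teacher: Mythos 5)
Your proposal is correct, and it supplies precisely the details the paper elides: the appendix states this verification result with only the remark that it is ``fairly direct,'' so there is no written proof to compare against. Your route---the martingale optimality principle applied to the gains process $Y_t = e^{-rt}v(P_t) + \int_0^t e^{-rs}u(P_s)\,ds - \sum_{\tau_i \le t} e^{-r\tau_i}(C(F_i)+\kappa)$---is the canonical way to make the claimed directness rigorous, and the two inequalities you extract from $v = \StopVal(\InfoVal v)$ (nonpositive conditional jump means, with equality exactly at optimal experiments taken in the stopping region; pathwise nonincrease of the discounted value plus accrued flow along the deterministic drift, with equality exactly in the continuation region) are exactly the right decomposition. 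Your observation that the drift term is \emph{pathwise} nonpositive, so that zero aggregate expected decrease pins every increment to zero almost surely, is the key step for necessity and is correctly identified. Two small points of care: first, the degenerate regimes ($\tau_i \to T < \infty$, or infinite accumulated cost) are handled trivially because $\kappa>0$ forces such policies to have value $-\infty$, so they never bind in either direction of the argument; second, the first displayed argmax in the proposition (and in \hyref{Lemma}{lemma:subproblem-solutions-existence}) omits the $-e^{-r\tau}\kappa$ term that appears in the Bellman operator, whereas your derivation naturally produces the argmax \emph{with} that term---since $e^{-r\tau}\kappa$ varies with $\tau$ these argmaxes genuinely differ, and the version your supermartingale argument yields is the one consistent with \eqref{eq:recursive-equation}; this appears to be a slip in the paper's statement rather than in your proof.
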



    \subsection{Convergence: proof of \hyref{Theorem}{thm:long-run-dynamics}}\label{sec:appendix:dynamics-long-run-behavior}

        Define the "long run domain" $D=[q^0,q^1]$ as either the (closure of the) interval in $\grossvalregionopt$ that contains $\pi$, if there is one, or if not an arbitrarily chosen closed interval around $\pi$ in which no information is acquired.
        Once the belief process enters $D$, it must either follow a cycle or information acquisition must stop.
        Hence it suffices to prove that the first entry time of the process in $D$ is almost surely finite.

        Consider an arbitrary initial belief $p$, assume without loss that $p<\pi$ (the proof is symmetric in the alternative case) and that $p$ is in the waiting region (the initial jump makes no difference).
        Denote by $\{(q^0_n,p^0_n,p^1_n,q^1_n)\}_{n \in N}$ the collection of "effective on-path information acquisition intervals", i.e intervals in $\grossvalregionopt$ such that $\grossvalregionopt \cap \inforegionopt \neq \emptyset$ (where $(q^0_n,q^1_n)$ denote the endpoints of the interval in $\grossvalregionopt$ and $(p^0_n,p^1_n)$ the minimum and maximum of $\grossvalregionopt \cap \inforegionopt$ respectively) that are between $p$ and $\pi$ but not in $D$.  Note $N$ is countable but not necessarily finite -- label intervals using the natural numbers in a natural ordered fashion from left to right, i.e. $p < p^0_0$ and for all $n$ $q^0_n < q^1_n < q_0^{n+1}$.
        For any $\tilde{q}<\tilde{p}<\pi$, denote $\tau(\tilde{q},\tilde{p})$ the time it takes for beliefs to deterministically drift from $\tilde{q}$ to $\tilde{p}$.
        Define the following sequence of independent random variables:
        \begin{align*}
            T_0 & := \tau(p,p^0_0) + \tau(q^0_0,p^0_0) \times X_0 \text{ where } X_0 \sim \mathcal{G}\biggl( \frac{p_0^0-q_0^0}{q^1_0-q^0_0} \biggr)
            \\
            \forall n \geq 1, \; & T_n := \tau(q_1^{n-1},p^0_{n-1}) + \tau(q^0_n,p^0_n) \times X_n \text{ where } X_n \sim \mathcal{G}\biggl( \frac{p_n^0-q_n^0}{q^1_n-q^0_n} \biggr)
        \end{align*}
        where all the $X_n$ are independent and defined on some probability space $(\Omega,\mathcal{F},\prob)$, and $\mathcal{G}$ denotes the geometric distribution.
        $T_n$ describes the amount of times it takes to "cross over" the $n$-th interval of information acquisition, after crossing the $n-1$-th.     
        The main object of interest is the \emph{total} time it takes to cross over all effective on-path information acquisition intervals, which is $\mathbf{T}:= \sum_{n \in N} T_n$.
        The event $\{ \mathbf{T} = \infty \}$ is a tail event in the sense that it is in the terminal $\sigma$-algebra of the sequence of $\sigma$-algebras generated by the $T_n$, hence a classical application of Kolmogorov's 0-1 law entails that $\prob (\mathbf{T} = \infty) \in \{0,1\}$.
        Denote $E := \{ \omega \in \Omega | \forall n, \; X_n = 1 \}$ the set of realizations such that the process jumps over each interval on the first information acquisition time. 
        By definition of the $X_n$, $\prob(E)>0$ and by construction of the $T_n$, for any $\omega \in E$: $\mathbf{T}(\omega) \leq \tau(p,q^0) < \infty$.
        Hence $\prob (\mathbf{T} < \infty) > 0$, so it must be that $\prob (\mathbf{T} < \infty) = 1$. 
        Up to a constant, $\mathbf{T}$ is the first entry time of the belief process in $D$, so this completes the proof.

\section{Information acquisition with and without fixed cost}\label{sec:appendix:path-measures-formalism}
    
    \subsection{Optimization over arbitrary belief processes -- proof of \hyref{Lemma}{lemma:w0-virtual-flow-form}}

        \paragraph{Belief processes} 
        Throughout, let $D$ the space of $\Delta(\Theta)$-valued càdlàg functions over $[0,\infty)$; since the domain will always be time we call elements of $D$ "belief paths". 
        Equip $D$ with the usual Skorohod metric, denoted $d$ (see \cite{billingsley2013convergence} for a formal definition), making it a complete separable metric space. 
        Further equip $D$ with the Borel $\sigma$-algebra induced by its metric topology and denote $\Delta(D)$ the set of probability measures over this measurable space.
        
        Let $(\Omega,\mathcal{F},\mathbb{Q})$ a probability space. 
        A belief process is viewed as a random variable over $D$, i.e. a measurable function from $(\Omega,\mathcal{F},\mathbb{Q})$ to $D$ equipped with the Borel $\sigma$-algebra generated by its metric topology induced by $d$, with a compensated martingale property. 
        Let $\beliefprocesses(p)$ the set of belief processes, given initial belief $p$:
        \begin{align*}
            \beliefprocesses(p):= \biggl\{ P & \text{ càdlàg process in } [0,1] \biggm| \Esp[P_0]=p, 
            \\ & \quad \exists (M_t) \text{ a local martingale such that } P_t \overset{a.s}{=} P_0 + M_t + \int_0^t \lambda (\pi - P_s) ds \biggr\}
        \end{align*}
        Identify elements in $\beliefprocesses(p)$ with their distributions and slightly abuse notations to interpret $\beliefprocesses(p)$ either as a space of random variables or as a space of measures.
        Further recall that for any information acquisition policy $\xi \in \Xi$, $P^{\xi}$ denotes the corresponding belief process (omitting dependence on the initial point). 
        Denote by $\beliefprocessesdiscrete(p)$ the subset of $\beliefprocesses(p)$ such that information is only acquired at countably many moments in time: $\beliefprocessesdiscrete(p) = \biggl\{ Q \in \beliefprocesses(p) \biggm| \exists \xi \in \Xi, \hspace{.5em} Q = P^{\xi,p} \text{ a.s.} \biggr\}$.
        $\beliefprocessesdiscrete(p)$ is a dense class in $\beliefprocesses(p)$ (it contains in particular any discrete-time approximation on a grid).
        For any belief process in $P \in \beliefprocessesdiscrete$, denote $\{\tau_i^P\}_{i \in \naturals}$ the ordered random times of the discontinuities of $P$ and for any $i$ let $F_i^P$ the random distribution of $P_{\tau_i}$.
        Straightforwardly $\{\tau_i^P,F_i^P\} \in \Xi$ and this pins down a unique (in distribution) element of $\Xi$ so that there is a one to one mapping between $\beliefprocessesdiscrete$ and $\Xi$.

        \paragraph{Payoffs over belief processes}
        The information acquisition problem for $\kappa>0$ rewrites as:
        \begin{align*}
        v_\kappa(p) := \max_{P \in \beliefprocessesdiscrete(p)} \Esp \left[ \int_0^\infty e^{-rt} u(P_t)dt - \sum_{i \geq 0} e^{-r\tau_i^P} \left( \Esp \left[c \left(P_{\tau_i^P} \right) - c\left(P_{\tau_i^{P-}} \right) \middle| P_{\tau_i^{P-}} \right] + \kappa \right) \right].
        \end{align*}
        The discounted utility term is consistently defined for all belief processes in $\beliefprocesses(p)$.
        For any $p \in D$, define $\Upath(p) := \int_0^\infty e^{-rt} u(p_t) dt$, which is a continuous function over the space of paths: for any sequence of paths $p_n \in D$, $p_n \xrightarrow[]{d} p$ implies that for some sequence of increasing continuous bijective function $\lambda_n : \reals_+ \rightarrow \reals_+$ such that $\lambda_n(t) \rightarrow t$ for all $t$ as $n \rightarrow \infty$, $p_n(\lambda_n(t)) \rightarrow p(t)$ for almost every $t$ (since every càdlàg path has countably many discontinuities, see Lemma 5.1. in \cite{ethier2009markov}); using convergence almost everywhere and continuity of $u$ with a dominated convergence argument over the bound $|u(p_n(t))-u(p(t))| \leq |u(p_n(t))-u(p_n(\lambda_n t))|+|u(p_n(\lambda_n t))-u(p(t))|$, we get that $U(p_n) \rightarrow U(p)$. 
        Utility over belief processes is simply expected utility, where the expectation is over paths: $\Uprocess(\mu) := \int_{D} \Upath d\mu$, which is well-defined and continuous since $u$ is bounded and continuous.
        
        The cost term is only defined over belief processes in $\beliefprocessesdiscrete(p)$, i.e. such that all randomness occurs in countably many jumps. 
        Since $\beliefprocessesdiscrete(p)$ is dense in $\beliefprocesses(p)$, it is natural to define costs over the whole class via the limit of costs from approximations in $\beliefprocessesdiscrete(p)$.
        However, it is a priori unclear whether costs satisfy uniform continuity properties required to make this extension generally well-defined; to make this formal, it is helpful to first rewrite the cost.
        Consider an arbitrary belief process $P \in \beliefprocessesdiscrete$ with discontinuities at times $(\tau_i)_{i \geq 0}$.
        Consider first the case $\tau_0 > 0$; rewrite, with all equalities pathwise:
        \begin{align*}
            \sum_{ i \geq 0 } e^{-r\tau_i} \biggl( c(P_{\tau_i}) - c(P_{\tau_i^-}) \biggr) 
            & = - \sum_{ i \geq 0} \biggl( e^{-r\tau_i} c(P_{\tau_i^-}) - e^{-r\tau_{i-1}} c(P_{\tau_{i-1}}) \biggr) - c(p) \quad \text{ with $\tau_{i-1}:=0$ }
            \\
            & =  - \sum_{ i \geq 0 } \int_{\tau_{i-1}}^{\tau_i} \frac{\partial}{\partial t} \bigl( e^{-r t} c(P_t) \bigr) dt - c(p) 
            \\
            & = \int_0^\infty e^{-rt} \bigr( r c(P_t) - \lambda (\pi - P_t) c'(P_t) \bigr) dt - c(p) 
        \end{align*}
        Where the last equality uses that $dP_t = \lambda(\pi-P_t)dt$ between the $\tau_i$ (hence almost everywhere); this can be equivalently obtained by writing explicitly $P_{\tau_i^-} = e^{-\lambda(\tau_i-\tau_{i-1})} P_{\tau_{i-1}} + \bigl(1-e^{-\lambda(\tau_i-\tau_{i-1})} \bigr)\pi$.
        If $\tau_0=0$, the proof is the same with sums starting from $i=1$.

        To extend the \emph{variable cost} component to arbitrary belief processes, define for any $P \in \beliefprocesses(p)$:
        \begin{align*}
         \Cprocess(P):= \Esp \left[ \int_0^\infty e^{-rt} \bigr( r c(P_t) - \lambda (\pi - P_t) c'(P_t) \bigr) dt - c(p) \right]
        \end{align*}
        This is always well-defined with possibly infinite costs $\Cprocess(P) \in[0,\infty]$. 
        Indeed, a direct application of the Meyer-It\^o formula for convex functions of semimartingales, with $dP_t = \lambda (\pi - P_t) dt + dM_t$ for some martingale $M$, gives $\Esp[c(P_t)] \geq c(p) + \Esp \left[ \int_0^t \lambda (\pi - P_s) c'(P_s) ds \right]$.
        Further since $q \mapsto r c(q) - \lambda (\pi - q) c'(q)$ is bounded below by convexity of $c$, we can always write its integral along the path $P_t$ (it may take value $+\infty$). 
        If the integral explodes with positive probability there is nothing further to prove. 
        If the integral converges then integration by part shows that the process $X_t := e^{-r c(P_t)} + \int_0^t e^{-rs} \bigl( rc(P_s) - \lambda (\pi - P_s) c'(P_s) \bigr)$ is a submartingale; passing to the limit yields $\Cprocess(P) \geq 0$.
        Furthermore, this function is continuous over the domain of \emph{admissible} processes (i.e. with finite costs). 
        If $c$ and $c'$ are bounded, this is all belief processes; if $c$ or $c'$ equals infinity at a boundary point, then it is without loss of optimality to restrict attention to processes that belong to a restricted compact domain a.s. after some initial amount of time; we recover continuity in the Skorohod topology since $q \mapsto r c(q) - \lambda (\pi - q) c'(q)$ is continuous over the restricted domain.
    
        To deal with the \emph{fixed cost} component, observe that for all $P \in \beliefprocesses(p) \setminus \beliefprocessesdiscrete(p)$ and any approximating sequence $P^n \in \beliefprocessesdiscrete(p)$ with $P^n \rightarrow P$, letting $\tau_i^n$ the jump times of $P^ n$ we have: $\Esp \left[ \sum_{i} e^{-r\tau_i^n} \kappa  \right] \rightarrow +\infty$. 
        This is intuitive because approximation of a process which is \emph{not} in $\beliefprocessesdiscrete(p)$ must require arbitrary close jump times (an infinitely fine time grid), so discounted fixed costs explode. 
        Therefore we can "continuously" extend the fixed cost component by letting for any $P \in \beliefprocesses(p)$:
        \begin{align*}
            \mathfrak{H}_\kappa(P) := \begin{cases}
                \sum_{i} e^{-r \tau_i^P} \kappa & \text{ if } P \in \beliefprocessesdiscrete(p)
                \\ 
                + \infty & \text{ otherwise }
            \end{cases}
        \end{align*}
        Putting everything together, we have that by construction for any $\kappa \geq 0$:
        \begin{align*}
            \sup_{P \in \beliefprocessesdiscrete(p)} \Uprocess(P) - \Cprocess(P) - \mathfrak{H}_\kappa(P) = \sup_{P \in \beliefprocesses(p)} \Uprocess(P) - \Cprocess(P) - \mathfrak{H}_\kappa(P)
        \end{align*}
        For $\kappa>0$, this is exactly our original problem: nothing is added since non-discrete information acquisition bears infinite costs. 
        When $\kappa=0$, this extends the problem where the fixed cost term disappears ($\mathfrak{H}_0 \equiv 0$): the objective function $\Uprocess-\Cprocess$ is now well-defined for any belief process in $\beliefprocesses(p)$; density of $\beliefprocessesdiscrete(p)$ and continuity up to infinitely costly processes guarantees that the payoffs of any candidate belief process can be approximated arbitrarily close by processes in $\beliefprocessesdiscrete(p)$, yielding the equality.
        Lastly, the "net" form of the problem comes directly from rewriting $\Uprocess-\Cprocess$ as a single integral and taking the constant $c(p)$ to the other side:
        \begin{align*}
            w_\kappa(p) := v_\kappa(p) - c(p) = \sup_{P \in \beliefprocesses(p)} \int e^{-rt} \underbrace{(u(P_t)-rc(P_t)+\lambda(\pi-P_t)c'(P_t))}_{=f(P_t)}dt - \mathfrak{H}_\kappa(P)
        \end{align*}
        
        \paragraph{Limit of solutions and solution of the limit problem} 
        The cost function is not continuous in the parameter $\kappa$ at $0$, so convergence of solutions to a solution of the limit problem as $\kappa$ vanishes cannot be proven with e.g. Berge's Theorem of the maximum.
        The notion of epi-convergence, which is the weakest notion of functional convergence which guarantees convergences of minimizers, is better suited. 
        First recall the definition, and its main implication, the proof of which is direct \citep[see][for classical references]{BeerRockafellarWets1992,Attouch1984,AttouchWets1989,RockafellarWets1998}.
        \begin{appendixdefinition}
            Let $(X,d)$ a metric space and a sequence of functionals $f_n : X \rightarrow \overline{\reals}$. Say that $f_n$ epi-converges to $f : X \rightarrow \overline{\reals}$ and denote it $f_n \xrightarrow[n \rightarrow \infty]{\text{epi}} f$ if for every $x \in X$:
            \begin{enumerate}[label=(\roman*)]
                \item For any $x_n$ s.t. $x_n \rightarrow x$, $f(x) \leq \liminf_{n} f_n(x_n)$,
                \item There exists $x_n \rightarrow x$ such that $f(x) \geq \limsup_n f_n(x_n)$.
            \end{enumerate}
        \end{appendixdefinition}
        \begin{appendixproposition}
            If $f_n \xrightarrow[n \rightarrow \infty]{\text{epi}} f$ then for any sequence $x_n \in \argmin f_n$:
            \begin{align*}
                x_n \rightarrow x \Longrightarrow x \in \argmin f
            \end{align*}
        \end{appendixproposition}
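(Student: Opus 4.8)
The plan is to derive the inclusion directly from the two defining inequalities of epi-convergence, chaining them through the optimality of each iterate $x_n$. Fix the hypothesis: $x_n \in \argmin f_n$ and $x_n \to x$. The goal is to show $x \in \argmin f$, i.e. $f(x) \leq f(y)$ for every $y \in X$, so I would fix an arbitrary comparison point $y$ and produce this inequality.

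First I would apply condition (i) of epi-convergence to the minimizing sequence itself: since $x_n \to x$, we obtain $f(x) \leq \liminf_n f_n(x_n)$. This bounds the candidate value $f(x)$ from above by the asymptotic optimal values, and crucially uses the sequence $x_n$ that the hypothesis hands us (which I do not get to choose). Next, for the fixed target $y$, I would invoke condition (ii) to obtain a recovery sequence $y_n \to y$ with $\limsup_n f_n(y_n) \leq f(y)$. Optimality of $x_n$ in the $n$-th problem gives $f_n(x_n) \leq f_n(y_n)$ for every $n$, and since $\liminf$ is monotone with respect to termwise domination of sequences, this yields $\liminf_n f_n(x_n) \leq \liminf_n f_n(y_n) \leq \limsup_n f_n(y_n) \leq f(y)$.

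Combining the two chains gives $f(x) \leq \liminf_n f_n(x_n) \leq f(y)$. As $y \in X$ was arbitrary, $f(x) \leq \inf_{y} f(y)$, so $x$ attains the minimum and $x \in \argmin f$, as claimed.

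The argument is essentially mechanical and short; there is no genuine analytic obstacle, since the statement is \emph{conditional} on $x_n \to x$ and therefore requires neither compactness of $X$ nor any attainment result for the limit problem. The only point demanding care is bookkeeping over which of the two epi-convergence conditions is applied to which sequence: the liminf bound (i) must be used on the actual minimizing sequence $x_n$, whereas the recovery condition (ii) supplies the freely constructed approximating sequence $y_n$ for each comparison point $y$. Keeping these two roles distinct is what makes the termwise inequality $f_n(x_n)\leq f_n(y_n)$ available and lets the $\liminf$/$\limsup$ sandwich close.
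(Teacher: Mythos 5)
Your proof is correct and follows essentially the same route as the paper: condition (i) applied to the minimizing sequence gives $f(x)\leq \liminf_n f_n(x_n)$, and condition (ii) applied to an arbitrary comparison point, combined with optimality of $x_n$, bounds that liminf by $\inf f$. The paper merely compresses your second chain into the single inequality $\limsup_n\{\inf f_n\}\leq \inf f$; the underlying mechanism is identical.
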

        \begin{appendixlemma}
            Let $\kappa_n$ a sequence of strictly positive real numbers such that $\kappa_n \rightarrow 0$ as $n \rightarrow \infty$, then $\mathfrak{H}_{\kappa_n}$ epi-converges to the constant $0$ function as $n$ goes to infinity. 
        \end{appendixlemma}

        \begin{proof}
            Consider some arbitrary $P \in \beliefprocesses(p)$. 
            By definition for any $P^n \rightarrow P$, $\liminf_n \mathfrak{H}_{\kappa_n}(P^n) \geq 0 = \mathfrak{H}_{0}(P)$. 
            To show the second condition, construct a sequence $P^n$ as follows: fix a time grid with uniform step size $h_n$ -- let $\tau_0=0$, $\tau_j = \tau_{j-1}+h_n$; let $P^n$ follow the unconditional drift for each $t \in [\tau_j,\tau_{j+1})$ and "update" the process to $P$ at each $\tau_j$: $P^n_{\tau_j}=P_{\tau_j}$ for all $j$. 
            By construction of $P^n$ discontinuities are at the fixed $h_n$-spaced grid points, so that:
                \begin{align*}
                    \mathfrak{H}_{\kappa_n}(P^n) = \sum_{j \in \naturals} \kappa_n e^{-r h_n j} = \frac{\kappa_n}{1- e^{-r h_n}},
                \end{align*}
            choosing grid steps such that $h_n \rightarrow 0$ and $\frac{\kappa_n}{1-e^{-r h_n}} \rightarrow 0$ gives that $\mathfrak{H}_{\kappa_n}(P^n) \rightarrow 0$.
        \end{proof}
        
       This ensures that every limit of solutions of the problem with a fixed cost goes to a solution of the problems with no fixed cost as $\kappa$ vanishes, as the following corollary states formally.
        \begin{appendixcorollary}
        \label{prop:convergence-of-solutions-general}
            Let $v_\kappa$ the value function corresponding to any $\kappa \geq 0$ and $P^\kappa$ the optimal belief process. Then for any $\overline{\kappa} \geq 0$: (i) $v_\kappa$ converges pointwise to $v_{\overline{\kappa}}$ as $\kappa \rightarrow \overline{\kappa}$; (ii) if $P^\kappa$ converges in distribution to $P^{\overline{\kappa}}$ as $\kappa \rightarrow \overline{\kappa}$, then $P^{\overline{\kappa}}$ is an optimal belief process in the problem with fixed cost $\overline{\kappa}$.
        \end{appendixcorollary}

        \paragraph{Relation to infinitesimal information flows}
        The "extended" variable cost function considered over processes that feature only continuous information acquisition coincides with (a natural extension to the changing state environment of) existing cost functions based on infinitesimal information flows -- see e.g. \cite{zhong2022optimal,bloedel2020cost,hebert2023rational}. 
        Indeed, assume $P$ is a belief process and let $M$ its martingale component (the "information acquisition" part of the change in beliefs). 
        If $M$ has a well-defined infinitesimal generator $\mathcal{L}_M$ over some appropriate domain of functions, then the same holds for $P$ and Dynkin's formula for the discounted cost process gives:
        \begin{align*}
         \Esp[e^{-rT} c(P_T) ] = c(p) + \Esp \Biggl[ \int_0^T e^{-rs} \biggl( \mathcal{L}_M c(P_t) - r c(P_t) + \lambda (\pi - P_t)c'(P_t) \biggr) dt \Biggr],
        \end{align*} 
        where existence of the infinitesimal generator now imposes $P_0=p$ (no deterministic jumps).
        Assuming costs are finite, rearranging and taking a limit yields:
        \begin{align*}
            \Cprocess(P) = \Esp \Biggl[ \int_0^\infty e^{-rt} \mathcal{L}_M c(P_t) dt \Biggr]
        \end{align*}
        Belief processes generated from either only discrete or only continuous information acquisition do not overlap and do not cover the whole space of belief processes. 
        Defining general costs via approximating any belief process arbitrarily well with discrete information acquisition consistenly extends these ideas over arbitrary processes while bypassing the question of existence of the generator or optimality of continuous information acquisition.
        A similar point is made in \citet{georgiadisharris2023} with a capacity constraint: the structure of optimization over belief processes naturally allows for a general semimartingale formalism, which specializes to expressions in terms of the characteristics when they are well-behaved.

    \subsection{Optimal policies with vanishing fixed costs}
        \label{sec:optimal_policies_with_vanishing_fixed_costs}

        \paragraph{Proof of \hyref{Theorem}{thm:optimality-and-convergence-to-wait-or-confirm}}
        The proof is separated into two results, which are proven below.
        \begin{appendixproposition}
        \label{prop:optimal-jump-policy-from-w0}
            The optimal net value function  $w_0$  in the information acquisition problem with $\kappa=0$ is concave and: (i) for every belief $p$ such that $w_0$ is strictly concave in a $\pi-$neighborhood $p$, it is uniquely optimal to not acquire information at $p$; (ii) for every belief $p$ such that $w_0$ is locally affine at $p$, it is optimal to immediately acquire information at $p$; (iii) for every belief $p$ such that neither previous condition hold, it is optimal to acquire information so as to \emph{confirm} $p$ until some exponentially distributed time, at which beliefs jump to some prescribed belief $q(p)$ in the direction of $\pi$.
        \end{appendixproposition}
        \begin{proof}
            If $w$ is strictly concave locally in the direction of $\pi$ at $p$, then it must be that the optimal process has $P_0 = p$ a.s. since otherwise we would have $\Esp[w(P_0)]<w(p)$, as any feasible belief process must put positive probability in the direction of $\pi$. 
            If this is true for any $\tilde{p}$ in some $\pi$-neighborhood of $p$, then the only possible belief process is one that follows the deterministic drift $dP_t = \lambda (\pi-P_t) dt$ in that $\pi$-neighborhood.
            If instead $w$ is locally affine in some neighborhood $[q_0,q_1]$ of $p$, then clearly:
            \begin{align*}
             \frac{p-q_0}{q_1-q_0} w(q_1) + \frac{q_1-p}{q_1-q_0} w(q_0) = w(p),
             \end{align*}
            so it is optimal to immediately acquire information so as to jump to $\{q_0,q_1\}$.
            Now consider any $p$ such that $w$ is not locally affine at $p$ but $w$ is also not strictly concave in any $\pi$-neighborhood of $p$. 
            Denote by $w'_\pi(p)$ the directional derivative of $w$ at $p$ in the direction of $\pi$ (which exists by Alexandrov's theorem).
            Concavity implies that for all $q$ in a $\pi$-neighborhood of $p$, $w(q) \leq w(p) + w'_\pi(p) (q - p)$ but since $w$ is strictly concave in no $\pi$-neighborhood of $p$ we must be able to find a $q$ such that this holds with equality. 
            Fix such a $q$ and now consider the belief process which stays at $p$ until a random time when it jumps to $q$, and that random time is given by:
            \begin{align*}
            T \sim \mathcal{E} \biggl(\lambda \frac{\pi-p}{q-p} \biggr)
            \end{align*}
            Denote $\rho:= \lambda \frac{\pi-p}{q-p}$. 
            It is direct to verify that this is a feasible belief process, assuming any arbitrary consistent distribution following the jump to $q$. 
            First decompose the expectation of $P_t$ conditionally on the jump time, replace with explicit expression depending on whether or not the jump has occurred at $t$, then rearrange and compute integrals to very that $\Esp[P_t] = p + \bigl( 1- e^{-\lambda t} \bigr)\bigl( \pi - p \bigr)$, i.e. $P$ is feasible.
            To show that $P$ is optimal, it suffices to establish that the expected payoff at $p$ attains $w(p)$.
            First write explicitly:
            \begin{align*}
                \Esp \Biggl[ \int_0^T e^{-r t} f(p) dt + e^{-rT} w(q) \Biggr] & = \Esp \biggl[ 1- e^{-rT} \biggr] \frac{f(p)}{r}+ \Esp \biggl[ e^{-rT} \biggr] w(q)
                \\
                & =\Biggl( 1 -\frac{\rho}{r+ \rho} \Biggr)  \frac{f(q)}{r} + \frac{\rho}{r+ \rho} w(q).
            \end{align*}
            Now observe that by optimality since waiting is always feasible at any point it must be that $r w(p) \geq f(p) + \lambda (\pi-p) w'_\pi(p)$. 
            Given the assumptions on $p$ and the previous arguments, it must be optimal to wait arbitrary close to $p$ in the opposite direction from $\pi$, hence this actually must hold with equality at $p$. 
            Furthermore recall that $q$ has been chosen so that $w(q) = w(p) + w'_\pi(p) (q - p)$.
            Replacing in the previous expression and rearranging gives that the payoff at $p$ is equal to:
            \begin{align*}
                \frac{r}{r+\rho} & \biggl( w(p)   - \frac{\lambda}{r} (\pi-p) w'_\pi(p) \biggr) + \frac{\rho}{r+\rho} \biggl(  w(p) + w'_\pi(p) (q - p) \biggr)
                \\
                & = w(p) + w_\pi'(p) \Biggl( \frac{\rho}{r+\rho} (q - p) - \frac{\lambda}{r} \frac{r}{r+\rho}  (\pi-p) \Biggr)
                \\
                & = w(p) + w_\pi'(p) \Biggl( \frac{\lambda \frac{\pi-p}{q-p}}{r+\rho} (q - p) - \frac{\lambda}{r} \frac{r}{r+\rho}  (\pi-p) \Biggr)
                \\
                & = w(p) + w_\pi'(p) \Biggl( \frac{\lambda (\pi-p)}{r+\rho} - \frac{\lambda (\pi-p) }{r+\rho}  \Biggr) = w(p)
            \end{align*}
            Which concludes the proof.   
        \end{proof}
        \begin{appendixproposition}
        \label{prop:convergence-to-jump-policy}
            Let $P^\kappa$ the optimal belief process for $\kappa>0$ and let $\inforegion_\kappa$ the corresponding information acquisition region. If $P^\kappa$ converges in distribution to $P$, then $P$ is a wait-or-confirm belief process with initial jump beliefs region $\liminf_{\kappa \downarrow 0} \inforegion_\kappa$
        \end{appendixproposition}
        \begin{proof}
            Consider the optimal belief process $P^\kappa$ for $\kappa>0$ and denote its information acquisition region $\inforegion_\kappa$. \
        Assume $P^\kappa$ converges to $P$ in distribution (i.e in $\beliefprocesses$ equipped with its weak-* topology induced by the Skorohod metric) as $\kappa$ goes to zero.
        First consider a belief $p \in \liminf_{\kappa \rightarrow 0} \inforegion_\kappa$, i.e $p$ is evenutally in all information acquisition regions for $\kappa$ small enough.
        Because convergence in distribution implies convergence in distribution at all continuity points and $P$ is càdlàg, hence in particular continuous at the initial time, this must imply that $P$ involves immediate information acquisition at $p$.
        (In other words, $P_0=p$ would involve a contradiction, so the initial distribution of $P_0$ must involve immediate information acquisition.) 

        Now consider instead a belief $p \in \Biggl( \liminf_{\kappa \rightarrow 0} \inforegion_\kappa \Biggr)^c \cup \outboundary \liminf_{\kappa \rightarrow 0} \inforegion_\kappa = \piint \liminf_{\kappa \rightarrow 0} \inforegion_\kappa^c$ i.e there exist a $\pi$-neighborhood $b_\pi(p,\varepsilon)$ of $p$ such that all points in $b_\pi(p,\varepsilon)$ are eventually not in all information acquisition regions for $\kappa$ small enough.
        Again using convergence in distribution at all continuity points, it must be that the limit process involves no information acquisition arbitrarily close to $p$ in the direction of $\pi$, which in light of the arguments in the proof of \hyref{Proposition}{prop:optimal-jump-policy-from-w0} establishes that waiting must be uniquely optimal in a $\pi$-neighborhood of $p$ -- hence by optimality of the limit $P$ must involve waiting at $p$.

        Lastly, consider $p$ is neither of the previous sets.
        Without loss assume $p<\pi$ (the other case is symmetrical).
        From the previous points, this must mean that $p$ is arbitrarily close in the direction of $\pi$ to a point where immediate information acquisition is optimal in the limit problem, and arbitrarily close in the opposition direction from a point where waiting is uniquely optimal in the limit problem.  
        By definition since $p \notin \liminf_{\kappa \rightarrow 0} \inforegion_\kappa$ this means we can find a sequence $\kappa_n$ converging to zero such that for all $n$, $p \in \inforegion^c_{\kappa_n}$ i.e no information acquisition occurs at $p$ under $\kappa_n$. 
        Let $z_n$ the closest point to $p$ in the direction of $\pi$ which is in $\inforegion_{\kappa_n}$. 
        Again by definition, it must be that $z_n$ gets arbitrarily close to $p$ as $n$ goes to infinity otherwise this would contradict $p \notin \piint \liminf_{\kappa \rightarrow 0} \inforegion_\kappa^c$, so $z_n \rightarrow p$.
        Denote $\{q^0_n,q^1_n\}$ the support of the optimal experiment at $z_n$ for any $n$.
        By construction $p \in [q^0_n,q^1_n]$ for all $n$. 
        Up to a subsequence, denote $q^0_\infty,q^1_\infty$ the limits of $q^0_n,q^1_n$ respectively; clearly $p \in [q^0_\infty,q^1_\infty]$.
        First consider the possibility that $q^0_\infty<q^1_\infty$ and $p \in (q^0_\infty,q^1_\infty)$.
        Then observe that for all $n$, for any $q \in [q^0_\infty,q^1_\infty]$:
        \begin{align*}
        \Cav[w_{\kappa_n}](q) = \frac{q-q^0_n}{q^1_n-q^0_n} w_{\kappa_n}(q^1_n) + \frac{q^1_n-q}{q^1_n-q^0_n} w_{\kappa_n}(q^0_n)
        \end{align*}
        Continuity of $w_\kappa(q)$ in $q$ and pointwise convergence in $\kappa$, along with the assumption that $q^0_\infty<q^1_\infty$ implies that $w_0$ is locally affine at $p$, which contradicts the fact that $p$ is arbitrarily close in the opposition direction from $\pi$ to a point where waiting is uniquely optimal in the limit problem.
        Now consider instead the possibilty that $p = q^1_\infty$: again this implies a contradiction because by assumption on $p$, $w_0$ must be strictly concave in some $\pi$-neighborhood of $p$.
        Hence the only remaining possibility is $p=q^0_\infty$ and $q^0_\infty < q^1_\infty$.

        Having established that both $q_0^n$ and $z_n$ converge to $p$, and that $p < q^1_\infty$, it remains to establish that the distribution of the belief process at $p$ converges to the "confirmation" process which stays at $p$ until an exponentially distributed jump time to $q^1_\infty$.
        To do so, it suffices to consider the distribution of the time $T_n$ that it takes for the process to reach $q^1_n$ from $p$.
        This can be expressed as:
        \begin{align*}
            T_n = \tau(p,z_n) + \tau(q^0_n,z_n) X_n \text{ where } X_n \sim \mathcal{G} \biggl( \frac{z_n-q^0_n}{q^1_n-q_0^n} \biggr)
        \end{align*}
        Where $\mathcal{G}$ denotes the geometric distribution as before. 
        From the previous arguments, the first term $\tau(p,z_n)$ goes to zero as $n$ converges to infinity, hence it is enough to prove that:
        \begin{align*}
            \tau(q^0_n,z_n) \times X_n \xrightarrow[n\rightarrow \infty]{d} \mathcal{E} \biggl( \lambda \frac{\pi - p}{q^1_\infty-p} \biggr).
        \end{align*}
        Which is established by a direct computation on the CDFs of $X_n$.   
    \end{proof}

        \paragraph{Proof of \hyref{Theorem}{thm:explicit-stationary-policy-no-fixed-cost}}
        The proof relies first on establishing the upper bound for $w_0$, then on exhibiting a policy which achieves it in some region around $\pi$.
        The upper bound is straightforwardly derived from applying Jensen's inequality pathwise and pointwise:
        \begin{align*}
            w_0(p) = \sup_{P \in \beliefprocesses(p)} \Esp \Biggl[ \int_0^\infty e^{-rt} f(P_t) dt \Biggr] 
            & \leq \sup_{P \in \beliefprocesses(p)} \Esp \Biggl[ \int_0^\infty e^{-rt} \Cav[f](P_t) dt \Biggr] 
            \\
            & \leq \sup_{P \in \beliefprocesses(p)} \int_0^\infty e^{-rt} \Cav[f] \bigl( \Esp[P_t] \bigr) dt
            = \int_0^\infty e^{-rt} \Cav[f] \bigl( p_t \bigr) dt
        \end{align*}
        Consider first the case where $\Cav[f] = f$ in a neighborhood of $\pi$. 
        In that case, not acquiring information at any $p$ in this neighborhood clearly achieves the upper bound, uniquely so if $f$ is locally strictly concave (by another application of Jensen's inequality).
        This must mean that no information is acquired in the long run under any optimal belief process in the $\kappa=0$ problem.
        In the case where $\Cav[f](\pi)=f(\pi)$ but not $\Cav[f] = f$ in a neighborhood around $\pi$, some straightforward but tedious casework shows that either (i) jumping to $\pi$ and then not acquiring information (when approaching from a side where $\Cav[f]$ is linear) or (ii) eventually stopping information acquisition must be optimal in a neighborhood of $p$ (when approaching from a side where $\Cav[f]$ is strictly concave). 
        In either case, uniqueness cannot be guaranteed because of knife-edge indifferences if $f$ is locally affine.

        The main part of the proof consists of establishing optimality of the "confirmatory" policy when $\Cav[f](\pi) > \pi$. 
        In that case, denote $(q_0,q_1)$ an interval such that $\pi \in (q_0,q_1)$, $\Cav[f]>f$ in $(q_0,q_1)$ and $\Cav[f]=f$ at $q_0$ and $q_1$.
        Fix any initial belief $p \in [q_0,q_1]$ and consider the belief process which immediately jumps to $\{q_0,q_1\}$ if $p \in (q_0,q_1)$; jumps from $q_0$ to $q_1$ at rate $\rho_0 := \lambda \frac{\pi-q_0}{q_1-q_0}$ and from $q_1$ jumps to $q_0$ at rate $\rho_1 := \lambda \frac{q_1-\pi}{q_1-q_0}$.
        Denote $Q_t$ the resulting Markov chain, $\Psi$ its rate matrix and $M(t)$ the matrix of conditional probabilities.
        $M(t)$ solves the Kolmogorov equation $M'(t) = M(t) \Psi$ i.e $M(t) = e^{t \Psi}$ and in this case simplifies to an explicit expression:
        \begin{align*}
        M(t) := \begin{pmatrix} \prob(Q_t = 0 | Q_0 = q_0) & \prob(Q_t = 1 | Q_0 = q_0) \\ \prob(Q_t = 0 | Q_0 = q_1) & \prob(Q_t = 1 | Q_0 = q_1) \end{pmatrix} = \begin{pmatrix} 
                \frac{q_1-\pi}{q_1-q_0} + \frac{\pi-q_0}{q_1-q_0} e^{-\lambda t} & \frac{\pi-q_0}{q_1-q_0} - \frac{\pi-q_0}{q_1-q_0} e^{-\lambda t}  
                \\ 
                \frac{q_1-\pi}{q_1-q_0} - \frac{q_1-\pi}{q_1-q_0} e^{-\lambda t}  & \frac{\pi-q_0}{q_1-q_0} + \frac{q_1-\pi}{q_1-q_0} e^{-\lambda t}  
            \end{pmatrix}.
        \end{align*}
        This, in particular, allows to verify that this belief process is feasible since $\Esp[Q_0]=p$ by construction and we can directly compute from the explicit expression of $M(t)$, skipping algebraic simplifications:
        \begin{align*}
            \Esp[Q_t|Q_0=q_0] = \pi - (\pi - q_0) e^{-\lambda t} \text{ and } \Esp[Q_t|Q_0=q_1] = \pi + (q_1-\pi) e^{-\lambda t}.
        \end{align*}
        To compute the induced expected value observe that for anyt $t$, since $Q_t \in \{q_0,q_1\}$ a.s. then $f$ and $\Cav[f]$ coincide over the support of $Q_t$, hence $\Esp[f(Q_t)] = \Esp \bigl[ \Cav[f](Q_t) \bigr]$.
        Furthermore since $\Cav[f]$ is affine over $[q_0,q_1]$ and given the compensated martingale constraint: $\Esp \bigl[ \Cav[f](Q_t) \bigr] = \Cav[f] \bigl( \Esp[Q_t] \bigr) = \Cav[f](p_t)$.
        This immediately means that:
        \begin{align*}
            \Esp \Biggl[ \int_0^\infty e^{-rt} f(Q_t) dt \Biggr] = \int_0^\infty e^{-rt} \Cav[f](p_t) dt,
        \end{align*}
        which proves the desired result.
    
\end{document}